\newcommand{\locallypostpones}{strongly postpones\xspace}
\newcommand{\localpostponement}{strong postponement\xspace}
\newcommand{\LocalPostponement}{Strong Postponement\xspace}
\newcommand{\Localpostponement}{Strong postponement\xspace}
\newcommand{\deff}{\,:=\,}
\newcommand{\xredx}[2] {\mathrel{{\uset{#1}{\red}}{}_{\mkern-3mu#2}}}
\newcommand{\xbackredx}[2] {\mathrel{{\uset{#1}{\leftarrow}}{}_{\mkern-3mu#2}}}
\newcommand{\revred}{\leftarrow}
\newcommand{\eq}{~=~}
\newcommand{\ww}{\textsf {W}}
\newcommand{\cc}{\textsf {C}}
\newcommand{\hh}{\textsf {H}}
\renewcommand{\ll}{\textsf {L}}
\newcommand{\cbv}{{\mathtt{cbv}}}
\newcommand{\cbn}{{\mathtt{cbn}}}
\newcommand{\betav}{{\beta_v}}
\newcommand{\shufsym}{\mathsf{sh}}
\newcommand{\ex}{\mathsf {e}}
\renewcommand{\int}{\mathsf{i}}
\newcommand{\head}{\mathsf{h}}
\newcommand{\weak}{\mathsf{w}}
\renewcommand{\left}{\mathsf{l}}
\newcommand{\fv}[1]{\mathsf{fv}(#1)}
\newcommand{\sub}[2]{\subs{#1}{#2}}
\newcommand{\var}{x}
\newcommand{\hole}[1]{\langle #1\rangle}
\newcommand{\val}{v}
\newcommand{\valtwo}{w}
\newcommand{\tmtwo}{u}
\newcommand{\tmthree}{s}
\newcommand{\tmfour}{r}
\newcommand{\tmfive}{q}
\newcommand{\tmsix}{p}
\newcommand{\tm}{t}
\newcommand{\tms}{s}
\newcommand{\tmu}{u}
\newcommand{\tmr}{r}
\newcommand{\tmp}{p}
\newcommand{\tmq}{q}
\renewcommand{\to}{\xrightarrow{}}
\newcommand{\tob}{\to_\beta}
\newcommand{\tosh}{\to_\shufsym}
\newcommand{\rootbv}{\mapsto_{\betav}}
\newcommand{\red}{\rightarrow}
\newcommand{\topred}[1]{\mapsto_{#1}}
\newcommand{\topreds}[1]{\topred{\sigma_{#1}}}
\newcommand{\ered}{\uset{\ex}{\red}}
\newcommand{\ired}{\uset{\int}{\red}}
\newcommand{\eredbv} {\mathrel{\eredx{\betav}}}
\newcommand{\nered}{\uset{\neg \ex~}{\red}}
\newcommand{\neredx}[1]{\mathrel{\nered{}_{\mkern-8mu#1}}}
\newcommand{\nereda}{\neredx{\alpha}}
\newcommand{\neredc}{\neredx{\gamma}}
\newcommand{\hred}{\uset{\head}{\red}}
\newcommand{\nhred}{\uset{\neg \head~}{\red}}
\newcommand{\hredx}[1]  {\mathrel{\hred{}_{\mkern-8mu#1}}}
\newcommand{\nhredx}[1]{\mathrel{\nhred{}_{\mkern-8mu#1}}}
\newcommand{\hredb}  {\mathrel{\hredx{\beta}}}
\newcommand{\nhredb}{\mathrel{\nhredx{\beta}}}
\newcommand{\hredo}  {\mathrel{\hredx{\oplus}}}
\newcommand{\nhredo}{\mathrel{\nhredx{\oplus}}}
\newcommand{\nhreda}{\mathrel{\nhredx{\alpha}}}
\newcommand{\hredc}  {\mathrel{\hredx{\gamma}}}
\newcommand{\nhredc}{\mathrel{\nhredx{\gamma}}}
\newcommand{\wred}{\uset{\weak}{\red}}
\newcommand{\nwred}{\uset{\neg \weak~}{\red}}
\newcommand{\wredx}[1]  {\mathrel{\wred{}_{\mkern-8mu#1}}}
\newcommand{\nwredx}[1]{\mathrel{\nwred{}_{\mkern-8mu#1}}}
\newcommand{\wredbv}{\wredx{\betav}}
\newcommand{\nwredbv}{\nwredx{\betav}}
\newcommand{\lred}{\uset{\left}{\red}}
\newcommand{\nlred}{\uset{\neg \left~}{\red}}
\newcommand{\lredx}[1]  {\mathrel{\lred{}_{\mkern-8mu#1}}}
\newcommand{\nlredx}[1]{\mathrel{\nlred{}_{\mkern-8mu#1}}}
\newcommand{\lredbv}  {\mathrel{\lredx{\betav}}}
\newcommand{\lredc}  {\mathrel{\lredx{\gamma}}}
\newcommand{\nlredc}{\mathrel{\nlredx{\gamma}}}
\newcommand{\nlreda}{\mathrel{\nlredx{\alpha}}}
\newcommand{\lreds}[1]  {\mathrel{\lredx{\sigma_{#1}}}}
\newcommand{\sred}{\uset{{\surf}}{\red}}
\newcommand{\nsred}{\uset{{\neg \surf~}}{\red}}
\newcommand{\sredx}[1]  {\mathrel{\sred{}_{\mkern-8mu#1}}}
\newcommand{\nsredx}[1]{\mathrel{\nsred{}_{\mkern-8mu#1}}}
\newcommand{\sredb}  {\mathrel{\sredx{\beta}}}
\newcommand{\sredbv}  {\mathrel{\sredx{\betav}}}
\newcommand{\redbv}{\red_{\beta_v}}
\newcommand{\redb}{\rightarrow_{\beta}}
\newcommand{\eredx}[1]  {\mathrel{\ered{}_{\mkern-8mu#1}}}
\newcommand{\iredx}[1]  {\mathrel{\ired{}_{\mkern-8mu#1}}}
\newcommand{\iredb}{\mathrel{\iredx{\beta}}}
\newcommand{\reda}  {\red_{\alpha}}
\newcommand{\ereda}  {\mathrel{\eredx{\alpha}}}
\newcommand{\ireda}  {\mathrel{\iredx{\alpha}}}
\newcommand{\redc}  {\red_{\gamma}}
\newcommand{\eredc} {\mathrel {\eredx{\gamma}}}
\newcommand{\iredc}  {\mathrel{\iredx{\gamma}}}
\newcommand{\reds}[1]  {\red_{\sigma_{#1}}}
\newcommand{\ereds}[1] {\mathrel {\eredx{\sigma_{#1}}}}
\newcommand{\nereds}[1]  {\mathrel{\neredx{\sigma_{#1}}}}
\newcommand{\eredsh} {\mathrel {\eredx{\shufsym}}}
\newcommand{\neredsh}  {\mathrel{\neredx{\shufsym}}}
\newcommand{\toh}{\uset{\head}{\red}}
\newcommand{\tonh}{\uset{\neg \head}{\red}}
\newcommand{\be}{\beta\eta}
\newcommand{\redeta}  {\red_{\eta}}
\newcommand{\redo}{\rightarrow_{\oplus}}
\newcommand{\parmark}{\circ\mkern -1mu}
\newcommand{\makepar}[1]{~\parmark \mkern-16mu #1}
\newcommand{\iparred}{{\makepar \ired}}
\newcommand{\nhparredb}{{\makepar \tonh}}
\newcommand{\PLambda}{\Lambda_\oplus}
\newcommand{\Val}{\mathcal V}
\renewcommand{\st}{\mid}
\newcommand{\two}{\frac{1}{2}}
\theoremstyle{plain}
\newtheorem{thm}[theorem]{Theorem}
\newtheorem{prop}[theorem]{Proposition}
\newtheorem{property}[theorem]{Property}
\newtheorem*{theorem*}{Theorem}
\newtheorem*{proposition*}{Prop}
\newtheorem*{lemma*}{Lemma}
\newtheorem*{ex*}{Example}
\newtheorem*{cor*}{Cor.}
\newtheorem*{prop*}{Prop}
\newtheorem*{Def*}{Def}
\newcommand{\A}{\mathcal{A}}
\newcommand{\lam}{\lambda}
\newcommand{\ie}{\emph{i.e.}\xspace}
\newcommand{\eg}{\emph{e.g.}\xspace}
\newcommand{\ih}{\emph{i.h.}\xspace}
\newcommand{\set}[1]{\{#1\}}
\newcommand{\iI}{i \in I}
\newcommand{\uset}[3][0ex]{%
	\mathrel{\mathop{#3}\limits_{
			\vbox to#1{\kern-6\ex@
				\hbox{$\scriptstyle#2$}\vss}}}}
\renewcommand{\l}{\lambda}
\newcommand{\la}[1]{\lambda #1.}
\newcommand{\refl}[1]{Lemma~\ref{l:#1}}
\newcommand{\reflem}[1]{Lemma~\ref{l:#1}}
\newcommand{\reflemma}[1]{Lemma~\ref{l:#1}}
\newcommand{\refthm}[1]{Theorem~\ref{thm:#1}}
\newcommand{\refprop}[1]{Proposition~\ref{prop:#1}}
\newcommand{\refsec}[1]{Sect.~\ref{sec:#1}}
\newcommand{\refapp}[1]{Appendix~\ref{app:#1}}
\newcommand{\reffig}[1]{Fig.~\ref{fig:#1}}
\newcommand{\refex}[1]{Ex.~\ref{ex:#1}}
\newcommand{\reffact}[1]{Property~\ref{fact:#1}}
\newcommand{\defeq}{~:=~}
\newcommand{\vartwo}{y}
\newcommand{\varthree}{z}
\newcommand{\mellies}{{Melli{\`e}s}\xspace}
\newcommand{\rredc}{\mapsto_{\gamma}}
\newcommand{\rreda}{\mapsto_{\alpha}}
\newcommand{\exred}{\uset{\ex}{\red}}
\newcommand{\nexred}{\uset{\neg \ex~}{\red}}
\newcommand{\exredx}[1]  {\mathrel{\exred{}_{\mkern-8mu#1}}}
\newcommand{\nexredx}[1]{\mathrel{\nexred{}_{\mkern-8mu#1}}}
\newcommand{\nexredbv}{\mathrel{\nexredx{\betav}}}
\newcommand{\exredc}{\mathrel{\exredx{\gamma}}}
\newcommand{\nexredc}{\mathrel{\nexredx{\gamma}}}
\newcommand{\subs}[2]{ \{#1{:=}#2\} }
\renewcommand{\AA}{A}
\newcommand{\LP}[2]{\mathtt{SP(#1,#2)}}
\newcommand{\F}[2]{\mathtt{Fact(#1,#2)}}
\newcommand{\PP}[2]{\mathtt{PP(#1,#2)}}
\newcommand{\LS}[2]{\mathtt{lSwap}(#1, #2)}
\renewcommand{\paragraph}{\vspace{-2pt}\subparagraph}
\newcommand{\Z}{Z}
\newcommand{\surf}{\weak}
\renewcommand{\ss}{\ww}
\newcommand{\m}{\mathtt m}
\newcommand{\n}{\mathtt n}
\newcommand{\Red}{\Rightarrow} 
\newcommand{\Redbv}{\Red_{\beta_v} }  
\newcommand{\Redo}{\Red_{\oplus}} 
\newcommand{\MDST}[1]{\mathcal{M}(#1)}
\newcommand{\mdist}[1]{\textbf{[} #1 \textbf{]}}   
\newcommand{\four}{\frac{1}{4}}
\newcommand{\MPLambda}{\MDST{\PLambda}}
\newcommand{\sRed}{\uset{\surf~}{\Red}}
\newcommand{\sRedx}[1]  {\mathrel{\sRed{}_{\mkern-6mu{#1}}}}
\newcommand{\sRedbv}{\sRedx{\betav}}
\newcommand{\iRed}{\uset{\neg\surf~}{\Red}}
\newcommand{\iRedx}[1]  {\mathrel{\iRed{}_{\mkern-6mu#1}}}
\newcommand{\iRedbv}{\iRedx{\betav}}
\title{Factorize Factorization}
\titlerunning{Factorize Factorization} 
\author{Beniamino Accattoli}{Inria \& LIX, \'Ecole Polytechnique, UMR 7161, Palaiseau, France}{}{}{}
\author{Claudia Faggian}{Universit\'e de Paris, IRIF, CNRS, F-75013 Paris, France}{}{}{}
\author{Giulio Guerrieri}{University of Bath, Department of Computer Science, Bath, UK}{}{}{}
\authorrunning{Accattoli, Faggian, Guerrieri} 
\keywords{
	Lambda Calculus, Rewriting,
	Reduction Strategies, Factorization
}
\begin{document}
	\maketitle

\begin{abstract}
We    present a new technique for proving   factorization theorems 
 for compound rewriting systems in a modular way, which is inspired by the Hindley-Rosen technique for confluence. 
Specifically, our  technique  is well adapted  to deal with extensions of the call-by-name and call-by-value 
$\lambda$-calculi.

The technique is first developed abstractly. We isolate a  sufficient  condition (called linear swap) for lifting 
factorization  from components to the compound system, and which is compatible with $\beta$-reduction. We then closely 
analyze some common factorization schemas for the $\lambda$-calculus.

Concretely, we apply our technique to diverse extensions of the $\lambda$-calculus, among which de' Liguoro and 
Piperno's non-deterministic $\lambda$-calculus and---for call-by-value---Carraro and Guerrieri's shuffling calculus.  
For both calculi the literature contains factorization theorems. 
 In both cases, we give a new proof which is neat, simpler than the original, and  strikingly shorter.  
\end{abstract}

	\section{Introduction}

The  $\lam$-calculus underlies functional programming languages and, more 
generally, the paradigm of higher-order computation. Through the  years, more  and more   advanced 
 features  have enriched this paradigm, including  control, non-determinism, states,  probabilistic 
 or  quantum features. The well established way to proceed is
   to extend the $\lam$-calculus with  new operators.  Every time, good operational  properties, 
   such as confluence, normalization, or termination, need to be proved.    
   It is evident that the more complex and 
advanced  is the calculus under study, the more the ability to  \emph{modularize the  analyses} of its 
properties is  crucial.

Techniques for modular proofs are  available  for  termination and confluence, 
with a rich literature which 
examines under which conditions these properties lift from modules to the compound system---some
representative papers are 
\cite{Toyama87,Toyama87t,ToyamaKB89,Rusinowitch87,Middeldorp89,Middeldorp89b,Middeldorp90,KuriharaK90,BachmairD86,
Geser90, DoornbosK98, Dershowitz09,
DBLP:conf/tlca/Akama93,DBLP:journals/jfp/Blanqui18},  see Gramlich 
\cite{DBLP:journals/tcs/Gramlich12} for a survey.
Termination and confluence concern the existence and the uniqueness of normal forms, which are the 
results of a computation. When the focus is on \emph{how to compute} the result, that is, on 
identifying reduction strategies with good properties, then only few abstract techniques  are 
currently available (we mention \cite{GonthierLM92,Mellie95,Mellies97}, \cite{Terese}(Ch.8), and 
\cite{Accattoli12})---this paper proposes a new one.

\paragraph{Factorization.}  The most basic property about how to compute is 
\emph{factorization}, whose paradigmatic example is the head 
factorization theorem of the $\lam$-calculus 
(theorem 11.4.6 in Barendregt's book \cite{Barendregt84}):
every $\beta$-reduction sequence $\tm \tob^* \tmtwo$ can be 
re-organized/factorized so as to first reducing head redexes and then everything else---in symbols 
$\tm\toh^*\tonh^*\tmtwo$.  

The study of factorization in 
	$\lam$-calculus goes back to Rosser \cite{Rosser1935}.  Here, we adopt \mellies terminology \cite{Mellies97}; please 
be aware   that  
factorization results are sometimes referred to as  \emph{semi-standardization} (Mitschke in \cite{Mitschke79}), or 
\emph{postponement} (\cite{Terese}), and often simply called 
\emph{standardization}---standardization is however a more sophisticated property (sketched 
below) of which factorization is  a basic instance.

{According to \mellies 
  \cite{Mellies97}, the meaning of factorization is that} the \emph{essential} part
 of a computation can always be separated from its junk. Let's abstract the role of head reduction, 
 by assuming that  computations consists of  steps $\ered$ which are in some sense {\emph{essential}},  and steps 
$\ired$ which are not.  Factorization says that every rewrite  
sequence $\tm \red^* \tms$ can be factorized as $ t \ered^* \tmu \ired^* s$, \ie, as a sequence of essential 
steps 
followed by 
inessential ones.

Well known examples of essential 
reductions are  head  and   leftmost-outermost reduction  for the $\lam$-calculus (see Barendregt \cite{Barendregt84}), or  left  and  weak 
reduction for the call-by-value $\lam$-calculus (see Plotkin \cite{PlotkinCbV} and Paolini and 
Ronchi Della Rocca \cite{parametricBook}). 

Very 
much as confluence, factorization  is a non-trivial property of $\lam$-calculi;  proofs require 
 techniques such as   finite developments \cite{CurryFeys58,Terese},  labeling \cite{Levy78,KlopThesis},  or parallel 
reduction \cite{Takahashi95}.

\paragraph{Uses of Factorization.} Factorization is commonly used as a  \emph{building block} in   
proving more sophisticated properties 
of the \emph{how-to-compute} kind. It is often the main 
ingredient in proofs of \emph{normalization} theorems
\cite{Barendregt84,Takahashi95,HirokawaMM15,AccattoliFG19}, stating that a reduction strategy reaches a normal 
form whenever one exists. Leftmost-outermost normalization is a well known example.

 Another property, \emph{standardization}, generalizes factorization: 
reduction 
sequences can be organized with respect to an order on redexes, not just with respect to the 
distinction essential/inessential. It is an early result that factorization can be used to 
prove standardization: iterated head factorizations provides what is probably the simplest way to 
prove Curry and Feys' left-to-right standardization theorem, via Mitschke's argument  
\cite{Mitschke79}. 

Additionally, the independence of 
some computational tasks, such as garbage collection, is often modeled as a factorization 
theorem.

\paragraph{Contributions of this  Paper.}
In this paper we propose   a technique   for proving  in a 
\emph{modular} way  factorization theorems for \emph{compound {higher-order} systems,}  such as those obtained by extending the 
$\lam$-calculus with advanced features.   The approach  can be seen as an  analogous for factorization of the classical 
technique for confluence based on Hindley-Rosen lemma, which we discuss in the next paragraphs. {Mimicking the use of Hindley-Rosen lemma is natural}, yet  to our knowledge such an approach has never been used before, at least not in the 
$\lam$-calculus literature. Perhaps this is because  a  direct transposition  of  Hindley-Rosen technique  does not work 
with $\beta$ reduction,  as 
we discuss below and in  \refsec{modular}. 

After developing a sharper technique,
   we  apply it  to   {various  known}  extensions of the $\lam$-calculus {which}  do not fit into 
easily 
manageable categories of rewriting systems.  In all our case studies, 
our  novel   proofs  are  neat, {concise}, and  simpler than the originals. 

\paragraph{Confluence via Hindley-Rosen.}\label{sec:HRuses} Let's consider confluence. 
The  simplest modular technique to establish it is based  on  Hindley-Rosen lemma, which 
states that  the \emph{union} of two confluent reductions  $\red_1$ and $\red_2$ is  confluent 
if $\red_1$ and  $\red_2$ satisfy a  commutation property.  This is the technique used in Barendregt's book for proving confluence of 
$\red_{\beta\eta}$ (Thm. 3.3.9 in \cite{Barendregt84}), where it is also stressed that the  proof is  simpler than Curry and Feys' original one.

While the result is basic, Hindley-Rosen technique provides a powerful tool to prove  confluence of  compound systems. 
In 
the literature of $\lam$-calculus, 
we mention for   instance its use in 
the linear-algebraic $\lam$-calculus  \cite{ArrighiD17},   the  probabilistic $\lam$-calculus 
\cite{FaggianRonchi}, the $\Lambda\mu$-calculus \cite{Saurin08}, the shuffling calculus \cite{CarraroG14},  the 
$\lam$-calculus extended 
with lists \cite{Revesz92} or  pattern-matching \cite{BucciarelliKR20}, or with a \texttt{let} construct 
\cite{AriolaFMOW95}.
It is worth to spell-out the  gain.
 Confluence is often a non-trivial property to establish---when higher-order is involved, the proof  of confluence 
requires   sophisticated techniques.  The difficulty \emph{compounds} when extending the $\lam$-calculus with new 
constructs. Still, the problem is often originated by $\beta$ reduction itself, which encapsulates the higher-order 
features of the computation. By using Hindley-Rosen lemma, confluence of $\beta$  is used as a \emph{black box}: one 
\emph{relies} on that---without having to prove it again---to show that the extended calculus is confluent.

\paragraph{Hindley-Rosen and Sufficient Conditions.} There is a subtle distinction between  Hindley-Rosen \emph{lemma}, 
and what we refer to as Hindley-Rosen \emph{technique}.  Hindley-Rosen lemma reduces confluence of a compound system 
to commutation of the components---the modules. To establish commutation, however, is a non-trivial task, because it is 
a {\emph{global} property, that is, it quantifies over \emph{all sequences} of steps}.
The success of the lemma  
in the $\l$-calculus literature stems from the existence of easy to check  
conditions which  suffice to prove  commutation. All the examples which  we have mentioned above indeed satisfy 
Hindley's 
\emph{strong commutation} 
property \cite{HindleyPhD} (Lemma 3.3.6 in \cite{Barendregt84}), where at most one reduction---but not both---may 
require {multiple}
steps to close a  diagram;
commutation  follows by a finitary tiling argument.
Strong commutation  turns Hindley-Rosen lemma into an effective, concrete  proof {technique}.

\paragraph{Modular Factorization, Abstractly.} Here, we present a modular approach to factorization inspired 
by the Hindley-Rosen \emph{technique}. A  formulation  of Hindley-Rosen lemma for factorization is immediate, and is 
indeed  
folklore. But exactly as for confluence, this  reduces factorization of a compound system to 
a property that is difficult to establish, without a real gain.  The crucial point  is finding  suitable conditions that 
can be used in practice.  The \emph{issue} here is that the natural adaptation of strong 
commutation to factorization is---in general---\emph{not} verified by  extensions of the $\l$-calculi, as it does not 
interact well with $\beta$ (see \refex{SPfailure} in \refsec{modular}).
We  identify an alternative   condition---called \emph{linear swap}---which is 
satisfied by  a large variety of interesting examples, turning the approach into an effective, concrete 
 \emph{technique}. Testing the linear swap condition is easy and {combinatorial} in nature, as it is 
a \emph{local} property,  in the sense that only single steps (rather than sequences of steps) need to be manipulated. 
This holds true even when the modules are not confluent, or non-terminating.
The other  key point in our approach  is that  we  \emph{assume} the modules to be factorizing, therefore we can  use 
their 
factorization---that may require non-trivial proof techniques such as parallel reductions or finite 
developments---as a 
black box.

\paragraph{Modular Factorization, Concretely.} We then focus on our target,   how to establish factorization results 
for  extensions of the 
$\l$-calculus.
 Concretely, we start from     $\beta$ reduction, or its call-by-value counterpart $\betav$, and allow  the calculus  
to 
be enriched with  extra rules. 
 Here we  discover a further striking gain: for common factorization schemas such as head 
or weak factorization, verifying the required  linear swap conditions reduces to  checking \emph{a single case}, 
together with the fact that the new rule {behaves  well with respect to substitution}. 
The test for modular factorization which we obtain  is a ready-to-use  and easy recipe that can be applied in a variety 
of cases.

We   illustrate our technique by  providing several examples, chosen  to stress the independence of 
the technique from other rewriting properties. 
In particular, 
we give a \emph{new} and {arguably} \emph{simpler}  proof of two  results from the literature. The first is  head 
factorization for  
the non-deterministic $\l$-calculus by de' Liguoro and Piperno \cite{deLiguoroP95}, that  extends 
the   $\lam$-calculus  with a choice operator  $\oplus$. It is a \emph{non confluent} calculus, and it is 
representative of the class of $\lam$-calculi extended  with a commutative effect, such as  
\emph{probabilistic} choice; indeed, most features and all issues  are already present there, see 
\cite{LagoZ12} for a thorough discussion. 

The second is a new, simplified proof of  factorization for the shuffling calculus---a refinement  of the 
call-by-value $\l$-calculus due to Carraro and Guerrieri \cite{CarraroG14}, whose left factorization is proved by 
Guerrieri, 
Paolini, and Ronchi della Rocca in \cite{GuerrieriPR17}. In this case the $\l$-calculus is extended with extra rules 
but  they are not associated  to a new operator. The resulting calculus is subtle, as it has  critical pairs.

In both cases, the  new  proof is  neat,  conceptually clear, and    strikingly short. The reason why our  proofs are 
only a few lines {long}, whereas the originals require several pages, is exactly that there is no need to "prove again" 
factorization of $\beta$ or $ \betav$.
We just show that $\beta$  (resp. $ \betav$) interacts well with the new rules.

\paragraph{Further Applications: Probabilistic $\lam$-calculi.} 
 The investigation in this paper was   triggered  by concrete needs, namely the study of
strategies for  probabilistic $\lambda$-calculi \cite{FaggianRonchi,Leventis19}. The probabilistic 
structure adds complexity, and  indeed makes the study of factorization painful---
exposing the  need for   tools to make such an   analysis more manageable.  
Our 
technique smoothly applies, providing new concise proofs that are significantly  simpler than the originals---indeed 
surprisingly simple.
These results are however only overviewed in this paper: we sketch the application to the  call-by-value probabilistic 
calculus {by} Faggian and Ronchi della Rocca \cite{FaggianRonchi}, leaving  the technical  details in 
Appendix~\ref{app:proba}. The reason is that, while the application  of our  technique  is simple, the \emph{syntax} of 
probabilistic $\l$-calculi is not---because   
reduction is defined on  (monadic) structures representing probability distributions over terms. 
Aiming at making the paper accessible within the space limits, we 
 prefer  to focus  on examples in a syntax  which is familiar to a wide audience. Indeed, once the technique 
is understood, its application to other settings is immediate, and  in large part automatic. 
  
\paragraph{A Final Remark.}  Like   Hindley-Rosen for confluence, our technique is sufficient but not necessary to 
factorization. 
Still,  its features and wide range of application make it a remarkable tool 
to tame the complexity which is often associated to the analysis of advanced compound  calculi. 
By emphasizing  the  benefits of a modular approach to factorization, we   hope to   prompt  the development of even 
more techniques.

\paragraph{Related work.}
To our knowledge, the only result in the literature about modular techniques for 
factorization is Accattoli's technique for calculi with explicit substitutions 
\cite{Accattoli12}, which relies on termination hypotheses. Our \emph{linear swap} condition 
(page \pageref{eq:LS})
is technically the same as his \emph{diagonal-swap} condition. One  of the insights at the inception of this work is 
exactly that termination in 
\cite{Accattoli12} is used only to establish factorization of each  
single module, but not when combining them. Here we \emph{assume} modules to be factorizing, 
therefore avoiding termination requirements, and obtaining a more widely applicable technique.

Van Oostrom's decreasing diagrams technique \cite{DD} is a powerful and \emph{inherently modular} tool to establish 
confluence and commutation.
Surprisingly, it has not yet been used for factorization, but   steps in this direction have  been presented recently \cite{Oostrom20}.
%

A divide-and-conquer approach is well-studied for termination. The key point is finding 
conditions which  guarantee that the union of terminating relations is terminating.  Several have 
been studied \cite{BachmairD86,Geser90}; the  weakest such condition, namely $\red_2\cdot 
\red_1\subseteq \red_1 \cup \red_2\cdot (\red_1\cup\red_2 )^*$, is introduced by Doornbos and 
von Karger \cite{DoornbosK98}, and then studied by Dershowitz \cite{Dershowitz09}, under the name 
of \emph{lazy commutation}, and by van Oostrom and Zantema \cite{OostromZ12}. Interestingly, lazy 
commutation {is similar to} the  linear swap condition. 

Finally, a somehow orthogonal approach to study extensions of a rewriting system, which 
is  isolating syntactical classes of term rewriting systems that always satisfy a property. While 
confluence is the most studied property (\eg, \cite{Terese}, Ch 10.4),  factorization and 
standardization are also investigated, in particular for left-to-right standardization  (\cite{Terese}, Ch. 8.5.7).

\section{Preliminaries}\label{sec:basics}
In this section we start by recalling some standard definitions and notations in rewriting theory 
(see \eg \cite{Terese} or \cite{Book_AndAllThat}); we provide an overview of  commutation, confluence, and 
factorization. 
Both \emph{confluence} and \emph{factorization} are forms of commutation.

\paragraph{Basics.} 
An \emph{abstract rewriting system (ARS)}  (see  \cite{Terese}, Ch.1)  is a pair $\A=(\AA, \to)$ 
consisting of a set $A$ and a binary relation $\red$ on $A$ whose pairs are written  $t \to s$ and 
called \emph{steps}. 
 We denote   $\red^*$ (resp. $ \red^= $) the  transitive-reflexive  (resp. reflexive) closure of 
$\red$; 
we denote $\leftarrow$ the reverse relation of $\red$, \ie  $\tmtwo \leftarrow \tm$ if  $\tm\red\tmtwo$.  
 If $\red_1,\red_2$ are binary relations on $\AA$ then 
 $\red_1\cdot\red_2$ denotes their composition,
\ie  $\tm \red_1\cdot\red_2 \tms$ iff there exists  $\tmu\in \AA$ such that $\tm \red_1  \tmu \red_2 \tms$.
We write $(A,\{\red_1,\red_2\})$ to denote the ARS $(A,\red)$ where $\red ~=~\red_1 \cup \red_2$.
We freely use  the fact that the transitive-reflexive closure of a relation is a closure operator, 
\ie\ satisfies
\begin{equation}\label{eq:closure}\tag{\textbf{Closure}}
\red \subseteq \red^*,\quad\quad
(\red^*)^* ~=~\red^*, \quad\quad  \red_1 ~\subseteq ~\red_2 \mbox{ implies } \red_1^* ~\subseteq 
~\red_2^*.
\end{equation}
The following property is  an  immediate consequence:
	\begin{equation}\label{eq:closure_U}\tag{\textbf{TR}}
(\red_1 \cup \red_2 )^*~= ~ (\red_1^* \cup \red_2^*)^*.
\end{equation}

\paragraph{Local vs Global Properties.} An important distinction in 
rewriting theory is  between local and global properties. 
A property  of term $\tm$  is \emph{local} if it is quantified over only \emph{one-step reductions} from 
$\tm$; it is \emph{global} if it is 
quantified over all \emph{rewrite sequences} from $\tm$.  Local properties are easier to test, because the analysis  
(usually) involves 
 a finite number of cases.

\paragraph{Commutation.} Two relations $\red_{1}$ and $\red_{2}$  on $\AA$
\emph{commute} if  
${\revred_1}^{*}\cdot {\red_2}^{*}
{~\subseteq~}{\red_2} ^{*}\cdot\, {\revred_1}^{*}$.

\paragraph{Confluence.} 
 A relation $\red$ on $\AA$ is confluent if it commutes with itself. 
 A classic tool to modularize the proof of confluence is Hindley-Rosen lemma.
Confluence of two relations $\red_1$ and $\red_2$ does not 
imply confluence of 
$\red_1\cup \red_2$, however it does if  {they} commute. 
\begin{lemma*}[Hindley-Rosen] Let  $\red_1$ and $\red_2$ be relations on the set $A$.  If $\red_1$ and 
$\red_2$ are confluent and commute with each other, then $\red_1\cup \red_2$ is confluent.
\end{lemma*}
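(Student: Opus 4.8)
The plan is to lift confluence from the components to $\red \deff \red_1 \cup \red_2$ by a tiling argument, exactly as one does for the \emph{diamond property}. Since confluence of $\red$ is by definition the inclusion ${\revred}^*\cdot\red^*\subseteq\red^*\cdot{\revred}^*$, and since \eqref{eq:closure_U} gives $\red^* = (\red_1\cup\red_2)^* = (\red_1^*\cup\red_2^*)^*$, it suffices to prove that the relation $R\deff \red_1^*\cup\red_2^*$ is confluent: indeed $R^* = \red^*$ and, taking reverses, ${R^{-1}}^* = {\revred}^*$, so confluence of $R$ literally \emph{is} confluence of $\red$. I would establish confluence of $R$ in two moves: first that $R$ enjoys the diamond property $R^{-1}\cdot R\subseteq R\cdot R^{-1}$ (writing $R^{-1}$ for the reverse of $R$), then that this local property propagates to the closure $R^*$.

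For the first move I unfold the hypotheses at the level of the closures. Confluence of $\red_i$ says precisely ${\revred_i}^*\cdot\red_i^*\subseteq\red_i^*\cdot{\revred_i}^*$, and commutation says ${\revred_1}^*\cdot\red_2^*\subseteq\red_2^*\cdot{\revred_1}^*$; reversing this last inclusion also yields ${\revred_2}^*\cdot\red_1^*\subseteq\red_1^*\cdot{\revred_2}^*$. Now take a peak $a \mathrel{R} b$ and $a \mathrel{R} c$. By definition of $R$ each of the two steps is either a $\red_1^*$-step or a $\red_2^*$-step, leaving four cases. If both are $\red_1^*$-steps, confluence of $\red_1$ closes the peak, and symmetrically if both are $\red_2^*$-steps. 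If $a \red_1^* b$ and $a \red_2^* c$, commutation provides $d$ with $b \red_2^* d$ and $c \red_1^* d$; the remaining mixed case is closed symmetrically by the reversed commutation. In every case the closing steps $b \mathrel{R} d$ and $c \mathrel{R} d$ are \emph{single} $R$-steps, because $\red_1^*\cup\red_2^* = R$. Hence $R$ has the diamond property.

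For the second move I would invoke the standard fact that the diamond property implies confluence: from $R^{-1}\cdot R\subseteq R\cdot R^{-1}$ one derives ${R^{-1}}^*\cdot R^*\subseteq R^*\cdot {R^{-1}}^*$ by a double induction on the lengths of the two sides of a peak $a \mathrel{R^*} b$, $a \mathrel{R^*} c$, filling a grid whose elementary cells are the diamonds just established. Since $R^* = \red^*$ and ${R^{-1}}^* = {\revred}^*$, this is exactly confluence of $\red$, concluding the argument; all the remaining manipulations are routine bookkeeping with \eqref{eq:closure} and \eqref{eq:closure_U}. I expect this tiling induction to be the only genuinely delicate step, although it is entirely finitary and combinatorial in nature — precisely the same style of argument by which strong commutation entails commutation, as recalled in the introduction.
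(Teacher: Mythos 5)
Your proof is correct, and there is nothing in the paper to compare it against: the paper states Hindley--Rosen as an unnumbered classical lemma with no proof, deferring implicitly to the literature (it is Proposition~3.3.5 in Barendregt's book \cite{Barendregt84}). Your argument --- using \eqref{eq:closure_U} to reduce confluence of $\red_1\cup\red_2$ to confluence of $R = \red_1^*\cup\red_2^*$, checking by a four-case analysis that confluence of each $\red_i$ plus commutation gives $R$ the diamond property $R^{-1}\cdot R\subseteq R\cdot R^{-1}$ (noting $(\red_i^*)^*=\red_i^*$ so the hypotheses apply directly to single $R$-steps), and then tiling to propagate the diamond to $R^*=\red^*$ --- is exactly the standard textbook proof, with all the bookkeeping (including the reversed commutation inclusion) handled correctly.
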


\paragraph{Easy-to-Check  Conditions for Hindley Rosen.} Commutation is a global condition, which is  difficult to 
test. 
What {turns Hindley-Rosen lemma into} an effective, usable \emph{technique}, is the availability of  {local, 
\emph{easy-to-check}} sufficient  conditions. 
One of the simplest but  most useful such conditions is {Hindley's} strong commutation \cite{HindleyPhD}:
\begin{equation}\label{eq:SC}\tag{\textbf{Strong Commutation}}
  \leftarrow_1 \cdot \red_2 ~\subseteq ~   {\red_2}^* \cdot {\leftarrow_1}^=  
\end{equation}

\begin{lemma}[Local test for commutation \cite{HindleyPhD}]\label{l:SC}
 Strong commutation 
  implies commutation.
\end{lemma}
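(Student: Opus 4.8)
The plan is to lift the local hypothesis \eqref{eq:SC} to the global commutation inclusion by a classical strip-and-tile argument, organized as two nested inductions. A single diamond completion does not suffice, because of the asymmetry of \eqref{eq:SC}: closing even a one-step/one-step divergence already costs an entire sequence $\red_2^*$ on one side, while only at most one step $\leftarrow_1^=$ is spent on the other. This mismatch between the locally produced closure $(\cdot)^=$ and the globally required closure $(\cdot)^*$ is precisely what the bookkeeping must track.

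\textbf{Step 1 (strip lemma).} First I would prove $\leftarrow_1 \cdot \red_2^* \subseteq \red_2^* \cdot \leftarrow_1^=$ by induction on the length of the $\red_2^*$ sequence. Given $t \leftarrow_1 u \red_2 s' \red_2^* s$, applying \eqref{eq:SC} to the leading step $u \red_2 s'$ yields $r'$ with $t \red_2^* r'$ and $s' \leftarrow_1^= r'$. The delicate point is the case split on $\leftarrow_1^=$. If it is a genuine step $s' \leftarrow_1 r'$, I apply the \ih to the remaining chain $s' \red_2^* s$ against that step, and then prepend $t \red_2^* r'$, composing the two $\red_2^*$ segments. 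If instead $s' = r'$, the $\red_1$-step has already been absorbed into equality, and the leftover chain $r' = s' \red_2^* s$ closes the diagram directly with witness $s$. Keeping the marker $(\cdot)^=$ accurate across this branch is the only real subtlety.

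\textbf{Step 2 (full commutation).} Then I would derive $\revred_1^* \cdot \red_2^* \subseteq \red_2^* \cdot \revred_1^*$ by induction on the length of the $\revred_1^*$ sequence, using Step 1 as the one-step building block. The base case (empty $\red_1$-side) holds by reflexivity. For the inductive step I peel off the first $\red_1$-step and close it against the whole $\red_2^*$-side using the strip lemma; this produces a fresh $\red_2^*$-segment to which the \ih applies against the shortened $\red_1$-side. Composing the two resulting $\red_1$-reductions—here $\leftarrow_1^= \subseteq \revred_1^*$ absorbs the strip lemma's single step into the star closure demanded by commutation—yields the required $\revred_1^*$ on one side, while the \ih directly supplies the $\red_2^*$ on the other.

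I expect the main obstacle to be not the induction scheme itself but this closure bookkeeping: ensuring that the reflexive-or-one-step output of \eqref{eq:SC} is threaded consistently through both inductions, and in particular that the Step 1 induction halts correctly when the $\red_1$-step vanishes before the $\red_2$-chain is exhausted. Once these closure conversions are handled carefully, both inductions go through by routine composition of reduction sequences.
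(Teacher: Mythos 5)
Your proof is correct and coincides with the classical argument behind this lemma: the paper gives no proof of its own, citing Hindley \cite{HindleyPhD}, and the standard proof there is exactly your two nested inductions --- a strip lemma $\leftarrow_1 \cdot \red_2^* \subseteq \red_2^* \cdot \leftarrow_1^=$ by induction on the $\red_2$-chain (with the case split on whether the residual $\leftarrow_1^=$ is an equality or a step), followed by induction on the $\red_1$-chain, with $\leftarrow_1^= \subseteq \leftarrow_1^*$ absorbing the residual into the star. One notational slip only: under the paper's convention ($b \leftarrow a$ iff $a \red b$), the residual produced in Step 1 should read $r' \leftarrow_1^= s'$ (\ie $s' \red_1^= r'$), not $s' \leftarrow_1^= r'$ --- your subsequent application of the \ih to the peak at $s'$ shows you intend the correct orientation.
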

All the extensions of $\lam$-calculus  we cited at page 
\pageref{sec:HRuses} (namely \cite{Barendregt84, ArrighiD17,FaggianRonchi,Saurin08,CarraroG14, Revesz92, 
BucciarelliKR20,AriolaFMOW95}) prove confluence by using Hindley-Rosen lemma via  strong commutation (possibly in its weaker diamond-like  form 
$\leftarrow_1 \cdot \red_2 ~\subseteq ~   {\red_2}^= \cdot {\leftarrow_1}^=  $).

\paragraph{Factorization.}
We now recall definitions and basic facts    on the rewriting property at the center of this paper, factorization.  
	Let $\A=(A,\{\ered,\ired\})$ be an ARS.
\begin{itemize}
	\item 	 The relation   $\red ~= ~\ered 
	\cup  \ired $  satisfies  \textbf{$\ex$-factorization}, written $\F{\ered}{\ired}$, if
\begin{equation}\tag{\textbf{Factorization}}
\F{\ered}{\ired}: \quad (\ered \cup  \ired)^*~ \subseteq ~\ered^* \cdot \ired^*  
\end{equation}

\item The relation $\ired$ \textbf{postpones} after $\ered$,  written $\PP{\ered}{\ired}$, if
\begin{equation}\tag{\textbf{Postponement}}
\PP{\ered}{\ired}: \quad {\ired}^*\cdot {\ered}^*  ~\subseteq~  {\ered}^* \cdot {\ired}^*.	
\end{equation}
\end{itemize}
Postponement  can  be 
formulated in terms of commutation, and viceversa, since clearly ($\ired$ postpones after $\ered$) if and only 
if  ($\xbackredx {\int}{}$ commutes with  $\ered$). Note that reversing $\ired$ introduce an asymmetry between the two 
relations.
It is  an easy result that
$\ex$-factorization is  equivalent to postponement, which is a more convenient way to express it. 
The  following equivalences---which we {shall} use freely---are all well known.
\begin{lemma}		\label{l:factorization_eq}
	For any two relations $\ered,\ired$ the following statements are equivalent:
	\begin{enumerate}
		\item \emph{Semi-local postponement}:   $\ired^*\cdot \ered$  $\subseteq$   $\ered^* \cdot 
\ired^*$ (and its dual $\ired\cdot \ered^*$  $\subseteq$   $\ered^* \cdot \ired^*$).
		\item \emph{Postponement}:  $\PP{\ered}{\ired}$.
		\item \emph{Factorization}: $ \F{\ered}{\ired} $.
	\end{enumerate}
	
\end{lemma}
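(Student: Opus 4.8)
The plan is to prove the three statements equivalent by establishing the cycle of implications $(1) \Rightarrow (2) \Rightarrow (3) \Rightarrow (1)$. Two of these are essentially immediate from the closure properties \eqref{eq:closure}, and the whole mathematical content sits in the single implication $(1) \Rightarrow (2)$, which upgrades a statement about one $\ered$-step into one about arbitrary $\ered$-sequences by iteration.

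First I would dispatch the two easy implications. For $(3) \Rightarrow (1)$: both $\ired^*$ and $\ered$ are contained in $(\ered \cup \ired)^*$, hence so is their composition $\ired^* \cdot \ered$, since the union's closure is transitive; applying $(3)$ gives $\ired^* \cdot \ered \subseteq \ered^* \cdot \ired^*$, and the dual inclusion $\ired \cdot \ered^* \subseteq \ered^* \cdot \ired^*$ follows the same way. For $(2) \Rightarrow (3)$: the key observation is that postponement makes $\ered^* \cdot \ired^*$ closed under composition, namely
\[
(\ered^* \cdot \ired^*) \cdot (\ered^* \cdot \ired^*) = \ered^* \cdot (\ired^* \cdot \ered^*) \cdot \ired^* \subseteq \ered^* \cdot \ered^* \cdot \ired^* = \ered^* \cdot \ired^*,
\]
where the inclusion uses $(2)$ and the outer equalities use \eqref{eq:closure}. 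Since $\ered^* \cdot \ired^*$ is moreover reflexive and contains $\ered \cup \ired$, and since $(\ered \cup \ired)^*$ is the least reflexive-transitive relation containing $\ered \cup \ired$, we conclude $(\ered \cup \ired)^* \subseteq \ered^* \cdot \ired^*$, which is $(3)$.

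The core step is $(1) \Rightarrow (2)$, for which I would argue by induction. From semi-local postponement $\ired^* \cdot \ered \subseteq \ered^* \cdot \ired^*$ I would first prove, by induction on $n$, that $\ired^* \cdot \ered^n \subseteq \ered^* \cdot \ired^*$ for every $n$: the base case $n = 0$ is reflexivity of $\ered^*$, and in the inductive step
\[
\ired^* \cdot \ered^{n+1} = (\ired^* \cdot \ered^n) \cdot \ered \subseteq \ered^* \cdot \ired^* \cdot \ered \subseteq \ered^* \cdot \ered^* \cdot \ired^* = \ered^* \cdot \ired^*,
\]
using the induction hypothesis and then semi-local postponement \emph{once}. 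Taking the union over $n$ and using $\ered^* = \bigcup_n \ered^n$ yields $\ired^* \cdot \ered^* \subseteq \ered^* \cdot \ired^*$, which is exactly $(2)$.

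The main obstacle is nothing more than getting this induction right: one must check that a single application of the local hypothesis suffices per inductive step, so that no avalanche of new interleavings is created, and that the reflexive-transitive closure lets one freely absorb $\ered^* \cdot \ered^*$ back into $\ered^*$. Everything else is bookkeeping with the closure identities \eqref{eq:closure} and \eqref{eq:closure_U}. Note that the dual formulation in $(1)$ could equally be used, inducting on the number of leading $\ired$-steps instead of trailing $\ered$-steps; I would pick whichever orientation keeps the composition manipulations shortest.
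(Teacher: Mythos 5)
Your proof is correct: the cycle $(1)\Rightarrow(2)\Rightarrow(3)\Rightarrow(1)$ — with the induction on the number of trailing $\ered$-steps for $(1)\Rightarrow(2)$, absorbing one application of semi-local postponement per step, and the observation for $(2)\Rightarrow(3)$ that $\ered^*\cdot\ired^*$ is reflexive, transitive, and contains $\ered\cup\ired$, hence contains the least such relation $(\ered\cup\ired)^*$ — is exactly the standard argument. The paper itself offers no proof, recording these equivalences as ``well known'', so your write-up simply supplies the routine details the paper omits.
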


Another  property {that} we {shall} use freely is the following, which is immediate by the definition of 
postponement and   property \refeq{eq:closure_U} (page \pageref{eq:closure_U}).

\begin{property}\label{l:PP_characterization1} Given a relation  $\iparred$ such that  $\iparred ^*=\ired^*$,   $  \PP 
\ered \ired  $  if and only if $\PP \ered  \iparred $.

\end{property}
A well-known use of the above is to instantiate $\iparred$  with a notion of parallel reduction (\cite{Takahashi95}).


\paragraph{Easy-to-Check Sufficient Condition for Postponement.} Hindley first noted  that a local 
property  implies postponement, hence 
factorization \cite{HindleyPhD}. 
It is immediate to recognize that the property below  is exactly the postponement analog of   strong commutation  in 
\reflemma{SC}; indeed it is the same expression, with 
$\ired \deff \leftarrow_1 $ and $\ered \deff \red_2 $.

We say that $\ired$ \textbf{\locallypostpones} after $\ered$,
 if
\begin{equation}\label{eq:LP}\tag{\textbf{\LocalPostponement}}
 \LP{\ered} {\ired}:\quad \ired \cdot \ered ~\subseteq~\ered^*\cdot  \ired^=
\end{equation}

\begin{lemma}[Local test  for postponement \cite{HindleyPhD}]\label{l:LP} 
	Strong postponement implies postponement: 
\begin{center}
			$\LP{\ered} {\ired}$ implies  $ \PP{\ered} {\ired} $, and so $ \F{\ered} {\ired} $.
\end{center}
\end{lemma}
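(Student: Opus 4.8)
The plan is to route everything through \reflemma{factorization_eq}. That lemma already tells us that postponement, factorization, and the (dual) semi-local form $\ired \cdot \ered^* \subseteq \ered^* \cdot \ired^*$ are mutually equivalent. So it suffices to establish this last inclusion; then $\PP{\ered}{\ired}$ and hence $\F{\ered}{\ired}$ come for free, and the iteration of $\ired$ into a full $\ired^*$ prefix is handled entirely by the equivalence, not by my argument. In other words, the whole content of the statement reduces to: a single inessential step followed by an arbitrary run of essential steps can be postponed into an essential run followed by a (possibly empty) inessential run.

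First I would prove $\ired \cdot \ered^* \subseteq \ered^* \cdot \ired^*$ by induction on the number $n$ of $\ered$-steps. The base case $n = 0$ is immediate, since $\ired \subseteq \ered^* \cdot \ired^*$ by choosing the empty essential prefix. For the inductive step I would split off the leading pair, writing $\ired \cdot \ered^{n+1} = \ired \cdot \ered \cdot \ered^{n}$, and apply the hypothesis $\LP{\ered}{\ired}$ to $\ired \cdot \ered$, which yields $\ered^* \cdot \ired^{=} \cdot \ered^{n}$.

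The single point that needs care is the reflexive closure $\ired^{=}$ produced by strong postponement, which forces a case split. If the inessential step is absorbed (the $\ired^{=}$ is empty), the residual is $\ered^* \cdot \ered^{n} \subseteq \ered^*$ by transitivity of $\ered^*$ (a consequence of \eqref{eq:closure}), and we are done. Otherwise one $\ired$ survives, leaving $\ered^* \cdot \ired \cdot \ered^{n}$; here the induction hypothesis applies to $\ired \cdot \ered^{n} \subseteq \ered^* \cdot \ired^*$, and composing with the leading $\ered^*$ and collapsing $\ered^* \cdot \ered^*$ to $\ered^*$ closes the diagram. This is the crux: the induction is well founded precisely because the surviving $\ired$ is now trailed by strictly fewer essential steps.

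Having proved the dual semi-local postponement, I would invoke \reflemma{factorization_eq} to conclude both $\PP{\ered}{\ired}$ and $\F{\ered}{\ired}$. I expect the only real obstacle to be the bookkeeping around the $\ired^{=}$ case split; the underlying diagram chase is a routine local-to-global tiling, structurally identical to the proof that strong commutation implies commutation in \reflemma{SC}, which is exactly the point the surrounding text stresses: strong postponement is the postponement analogue of strong commutation, obtained by the substitution $\ired \deff {\leftarrow_1}$ and $\ered \deff {\red_2}$.
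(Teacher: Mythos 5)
Your proof is correct and matches the intended argument: the paper states this lemma without proof, citing Hindley, and the classical proof is exactly your induction on the length of the essential run, where the $\ired^=$ (rather than $\ired^*$) produced by $\LP{\ered}{\ired}$ is what keeps the induction well-founded. Establishing the dual semi-local form $\ired \cdot \ered^* \subseteq \ered^* \cdot \ired^*$ and then invoking \reflemma{factorization_eq} to obtain $\PP{\ered}{\ired}$ and $\F{\ered}{\ired}$ is precisely how the paper's development is organized, so there is nothing to flag.
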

\Localpostponement is at the heart of several factorization proofs. However
(similarly to the diamond  property 
for confluence) it  can rarely be used \emph{directly}, because  most   interesting 
relations---\eg $\beta$ reduction in $\lam$ calculus---do not  satisfy it.  
Still, its range of application hugely widens by using \reflemma{PP_characterization1}.

 It is instructive  to examine \localpostponement with respect to $\beta$ reduction; this allows us also to  
recall why it   is difficult to establish head factorization for the  $\lam$-calculus.
\begin{example}[$\lam$-calculus  and strong postponement]\label{ex:LP} In view of head factorization, 
 the $\beta$ reduction is  decomposed in head reduction $\hredb$ and its dual $\nhredb$, that is $\redb ~=~ \hredb \cup 
\nhredb$. To prove head factorization is non trivial  precisely  because  
$\LP{\hredb}{\nhredb}$ \emph{does not} hold. 

Consider the following example: 
$ (\la\var \var\var\var) (I \varthree)\nhredb  (\la\var \var\var\var)  \varthree  \hredb  \varthree\varthree\varthree$.
 The   sequence $\tonh\toh$  can only postpone to a reduction sequence of shape $\toh\toh\tonh\tonh$
\begin{center}
$(\la\var   \var\var\var) (I \varthree)\hredb  (I \varthree)(I \varthree)(I \varthree) \hredb  
\varthree(I \varthree)(I \varthree)
\nhredb \varthree\varthree(I \varthree) \nhredb \varthree\varthree\varthree$ 
\end{center}
A solution  
 is to compress sequences of $\tonh$ by introducing an intermediate relation $\nhparredb$ (\emph{internal parallel 
reduction})
such that $\nhparredb^* = \nhredb^*$ and which does  verify \localpostponement.
This is indeed  the core of Takahashi's {technique} \cite{Takahashi95}.   All the work  in \cite{Takahashi95} goes into 
defining   parallel reductions, and proving   $\LP{\hredb}{\nhparredb}$. One indeed has $\nhparredb \,\cdot \hredb \ 
\subseteq \ \hredb\cdot \hredb^* \cdot \nhparredb$.
\end{example}

\section{Modularizing  Factorization}\label{sec:modular}
	All along this section, we assume to have two relations $\reda,\redc$ on the same set $A$, such that  	 
\begin{center}
			 $\reda ~= ~\ereda \cup \ireda$ and $\redc ~= ~\eredc \cup \iredc$.
\end{center}
We define		
	 $\ired \defeq (\ireda \cup\iredc)$ and $\ered \defeq (\ereda    \cup\eredc)$.
	Clearly $\reda \cup \redc \eq \ired \cup \ered$.
	Our goal is obtaining a technique in the style of Hindley-Rosen's for confluence, to establish that  	if   $\reda,\redc$ 
are $\ex$-factorizing then their union also is, that is, $\F{\ered}{\ired}$ holds.

\paragraph{Issues.}
In spite of  the large and fruitful use  in the  $\lam$-calculus literature  of Hindley-Rosen {technique}  to simplify 
the analysis of confluence,   we are not aware of any similar technique in the analysis of factorization. In this 
section we explain why a   transposition of the technique  is not immediate when  $\beta$ reduction is involved.

A direct equivalent of Hindley-Rosen lemma for commutation is folklore. An explicit  proof is   in \cite{DD}. 
Formulated in terms of postponement we {obtain the following statement.}
\begin{lemma}[Hindley-Rosen transposed to factorization] Assume   $\reda $ and $\redc$ are  $\ex$-factorizing relations.	
	Their union 
	$\reda\cup \redc$  satisfies 
	$\F{\ered}{\ired}$ if 
{\small 	\begin{equation}\label{eq:PP}\tag{\textbf\#}
 \PP \eredc \ireda :~~	\ireda^* \cdot \eredc^* \ \subseteq \ \eredc^* \cdot \ireda^*
	\quad	\text{ and } \quad 
 \PP \ereda \iredc :~~		\iredc^* \cdot \ereda^* \ \subseteq \ \ereda^* \cdot \iredc^*
	\end{equation}}
\end{lemma}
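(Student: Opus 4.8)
The plan is to reduce the statement to \emph{postponement} and to prove it there, using \reflemma{factorization_eq} at both ends. First I would observe that, since $\reda=\ereda\cup\ireda$ and $\redc=\eredc\cup\iredc$ are $\ex$-factorizing, \reflemma{factorization_eq} turns these two hypotheses into the postponements $\PP{\ereda}{\ireda}$ and $\PP{\eredc}{\iredc}$. Together with the two assumptions $\PP{\eredc}{\ireda}$ and $\PP{\ereda}{\iredc}$, I then have all four ``pairwise'' postponements relating an essential component ($\ereda$ or $\eredc$) to an inessential one ($\ireda$ or $\iredc$). The goal $\F{\ered}{\ired}$ is, again by \reflemma{factorization_eq}, equivalent to the single postponement $\PP{\ered}{\ired}$ with $\ered=\ereda\cup\eredc$ and $\ired=\ireda\cup\iredc$, so it suffices to assemble the four pairwise postponements into this one.

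The assembling rests on an auxiliary fact, which I expect to be the heart of the argument: postponement is stable under unions in each argument separately. Concretely, (A) if both $\ired_1$ and $\ired_2$ postpone after $\ered$, then $\ired_1\cup\ired_2$ postpones after $\ered$; and dually (B) if $\ired$ postpones after both $\ered_1$ and $\ered_2$, then $\ired$ postpones after $\ered_1\cup\ered_2$. I would prove (A) by induction on the block decomposition of a sequence in $(\ired_1\cup\ired_2)^*\cdot\ered^*$ (writing $(\ired_1\cup\ired_2)^*=(\ired_1^*\cup\ired_2^*)^*$): peeling off the rightmost $\ired_i$-block and using $\ired_i^*\cdot\ered^*\subseteq\ered^*\cdot\ired_i^*$ pushes the whole $\ered^*$ suffix leftwards one block at a time, the induction hypothesis handling the remaining prefix; statement (B) is symmetric, pushing $\ired^*$ rightwards through an $(\ered_1\cup\ered_2)^*$ sequence.

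The subtle point --- and the reason a Hindley--Rosen-style argument transposes to factorization at all --- is that (A)/(B) hold \emph{without any interaction hypothesis between the two combined relations}, in sharp contrast with commutation, for which the analogous ``commutation with a union'' statement is false. The explanation is that swapping $\ired_i^*\cdot\ered^*$ into $\ered^*\cdot\ired_i^*$ only ever reorders $\ered$-steps past $\ired_i$-steps and never produces a configuration in which $\ired_1$ and $\ired_2$ steps must be permuted against each other; thus no diagram between $\ired_1$ and $\ired_2$ ever has to be closed, which is exactly where the commutation analogue breaks down. With (A)/(B) in hand the conclusion is immediate: applying (A) twice yields $\PP{\ereda}{\ired}$ (from $\PP{\ereda}{\ireda}$ and $\PP{\ereda}{\iredc}$) and $\PP{\eredc}{\ired}$ (from $\PP{\eredc}{\ireda}$ and $\PP{\eredc}{\iredc}$); then one application of (B) combines these into $\PP{\ered}{\ired}$, which by \reflemma{factorization_eq} is precisely $\F{\ered}{\ired}$.
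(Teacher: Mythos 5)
Your proof is sound: both auxiliary claims are true and proved by exactly the block induction you describe (peeling off the rightmost block of $(\ireda\cup\iredc)^* = (\ireda^*\cup\iredc^*)^*$, which is property \eqref{eq:closure_U}), and the assembly of the four pairwise postponements---$\PP{\ereda}{\ireda}$ and $\PP{\eredc}{\iredc}$ obtained from \reflemma{factorization_eq} applied to the two factorization hypotheses, plus $\PP{\eredc}{\ireda}$ and $\PP{\ereda}{\iredc}$ from \eqref{eq:PP}---into $\PP{\ered}{\ired}$ is correct. The paper does not spell this proof out: it observes that postponement is commutation of a reversed relation and defers to the folklore commutation version of Hindley--Rosen, explicitly proved in \cite{DD}; your claims (A) and (B) are precisely the content of that folklore lemma, so you have in effect supplied the argument the paper leaves implicit. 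Where the paper does prove a compound factorization result in detail (\refthm{HR_modularity}), it routes differently: it establishes strong postponement $\LP{\ered}{\ireda^*\cup\iredc^*}$ and concludes via \reflem{LP} and \reflemma{PP_characterization1}. Your route is self-contained but re-proves by hand the induction that \reflem{LP} packages; the paper's route reuses its already-stated lemmas and yields a template that extends uniformly to the linear-swap conditions of \refthm{HR_modularity}.

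One side remark of yours is wrong, though harmlessly so: the analogous ``commutation with a union'' statement is \emph{not} false---it is true, by the very block argument you give, and it is exactly the folklore result the paper invokes. Since $\ired$ postpones after $\ered$ if and only if the reverse of $\ired$ commutes with $\ered$ (as noted in \refsec{basics}), union-stability of postponement in each argument separately \emph{is} union-stability of commutation in each argument separately; no diagram between the two combined relations needs closing in either setting. What genuinely requires an interaction hypothesis is confluence of a union (self-commutation of the union), and its analogue here is factorization of $\reda\cup\redc$ itself---which is precisely why the cross conditions \eqref{eq:PP} are assumed rather than derived. The transposition to factorization is thus not smoother than the commutation case; it is the same case, and your proof works for the same reason Hindley--Rosen does.
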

Exactly as Hindley-Rosen lemma, the modularization  lemma above  is of no practical use {by itself}, as the pair of 
conditions \eqref{eq:PP}  one has to test are  as global as the original problem.
What we need is to have local  conditions (akin to strong commutation) to turn the lemma into a usable technique.
 One obvious choice  is   \emph{\localpostponement}: 
 {\small\begin{equation}\label{eq:PP2}\tag{\textbf{\#\#}}
 \LP \eredc \ireda:~~ \ireda \cdot \eredc \ \subseteq \ \eredc^* \cdot \ireda^= 
 \quad  \text{ and } \quad 
\LP \ereda \iredc:~~ \iredc \cdot \ereda \ \subseteq \ \ereda^* \cdot \iredc^=.
 \end{equation}}
Clearly,  \refeq{eq:PP2} implies  \refeq{eq:PP} (\reflem{LP}).
 We may hope  to have all the elements for a  postponement analog of Hindley-Rosen  technique, but it is not the case. 
 Unfortunately, conditions \refeq{eq:PP2} usually \emph{do not hold} in extensions of the $\lambda$-calculus.  Let us 
illustrate the issue with an  example, the non-deterministic $\lam$-calculus, that we shall develop formally in Sect. 
\ref{sec:ND}.

 \begin{example}[Issues]\label{ex:SPfailure} Consider the extension of the language of $\lam$-terms with a construct 
$\oplus$ which models non-deterministic choice. The term $\oplus p q$ non-deterministically reduces to either $p$ or 
$q$, \ie   $\oplus p q \redo p$ or  $\oplus p q \redo q$.  The  calculus $(\Lambda, \{\redb,\redo\})$  has two 
reduction 
rules, $\redb$ and $\redo$.  For both, we define head and non-head steps as usual.
 	
 	Consider the following sequence:  $ (\lam x.xxx) (\oplus p q )\nhredo  (\lam x.xxx) p  \hredb ppp.$
  This   sequence $\nhredo \cdot \hredb$  can only postpone to a reduction sequence of shape $\hredb \cdot \hredo \cdot 
\nhredo \cdot \nhredo$:
	\[ (\lam x.xxx) (\oplus p q) \hredb (\oplus p q)(\oplus p q)(\oplus p q) \hredo p (\oplus p q)(\oplus p q) \nhredo p 
p(\oplus p q) \nhredo ppp.\]
Since the $\beta$-step duplicates the redex $\oplus pq$, 
the    condition $\LP \hredb \nhredo: \nhredo \hredb \ \subseteq \ \hredb^* \, \nhredo^=$  \emph{does not hold}. 
The phenomenon is similar to \refex{LP}, but now  moving to parallel reduction is not a solution: the 
problem here is not just compressing steps,  but the fact that by swapping $\nhredo$ and $\hredb$, a \emph{third} relation
 $\hredo$ appears. 
 \end{example}
Note that the problem  above is specific to factorization, and does not appear with confluence.

\paragraph{A  Robust  Condition for Modular Factorization.} Inspired by Accattoli's study of factorization for 
$\l$-calculi with explicit substitutions \cite{Accattoli12}, we consider an alternative sufficient condition for 
modular factorization, which holds in  many examples, as the next sections shall show.
 
  We say that $\ireda$ 
  \textbf{linearly swaps} with $\eredc$ if
  \begin{equation}\label{eq:LS}\tag{\textbf{Linear Swap}}
  \LS\ireda\eredc:~~\ireda \cdot   \eredc  \ \subseteq\  \eredc \cdot \reda^*
  \end{equation}
     Note that, on the right-hand side,  the relation is $\reda^*$, not $\ireda$. 
  This small change will  make a big  difference, and overcome the issue we have seen in \refex{SPfailure} (note that 
there $\iredb  \hredo \ \subseteq \ \hredb \, \redo^*$ holds).
  Perhaps surprisingly, this easy-to-check condition, which is \emph{local} and 
  \emph{linear} in $\eredc$, suffices, and  holds  in a large variety of cases. {Moreover, it holds \emph{directly} 
(even with $ \beta $) that is, without the mediating role of parallel reductions (as it is the instead the case of 
Takahashi's technique, see \refex{LP}).}

  We finally obtain a  modular factorization technique, via the following easy property.
\begin{lemma}\label{l:basic_com}	
		$\red_a\cdot \red_b ~\subseteq~ \red_b \cdot \red_c^*$ implies   $\red_a^*\cdot \red_b   ~\subseteq ~ \red_b 
\cdot\red_c^*$.
\end{lemma}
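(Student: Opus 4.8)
The plan is to prove the conclusion by a straightforward induction on the number $n$ of $\red_a$-steps, using the single-step hypothesis to peel off one $\red_a$ at a time. Since $\red_a^* = \bigcup_{n\geq 0}\red_a^n$, it suffices to establish that $\red_a^n\cdot \red_b \subseteq \red_b\cdot\red_c^*$ holds for every $n\in\Nat$, and then collect these inclusions. I would take the induction hypothesis to be exactly this inclusion for a fixed $n$.

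For the base case $n=0$ I would observe that $\red_a^0\cdot\red_b = \red_b$, and since $\red_c^*$ is reflexive we get $\red_b\subseteq\red_b\cdot\red_c^*$ at once. For the inductive step I would write $\red_a^{n+1}\cdot\red_b = \red_a\cdot(\red_a^n\cdot\red_b)$ and apply the induction hypothesis to the parenthesized factor, obtaining $\red_a^{n+1}\cdot\red_b\subseteq \red_a\cdot\red_b\cdot\red_c^*$. Then the hypothesis $\red_a\cdot\red_b\subseteq\red_b\cdot\red_c^*$ rewrites the head, giving $\red_a^{n+1}\cdot\red_b\subseteq\red_b\cdot\red_c^*\cdot\red_c^*$.

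The only place where the abstract closure properties are needed is the final absorption of the tail: $\red_c^*\cdot\red_c^* = \red_c^*$, which follows from the idempotence of the transitive-reflexive closure under composition (a consequence of $(\red_c^*)^*=\red_c^*$ in \eqref{eq:closure}). This collapses the doubled tail and closes the induction. I do not expect any genuine obstacle here, since the statement is purely diagrammatic; the one conceptual point worth flagging is that the whole argument goes through precisely because the right-hand side carries $\red_c^*$ rather than $\red_a^=$ or $\red_b^=$. This is exactly what accommodates the emergence of a third relation that made the naive strong-postponement approach fail in \refex{SPfailure}: swapping a single $\red_a$ past $\red_b$ may create $\red_c$-steps, and allowing an arbitrary $\red_c^*$-suffix lets these be absorbed cleanly across the induction.
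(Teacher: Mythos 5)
Your proof is correct: the paper states Lemma~\ref{l:basic_com} without proof, and your induction on the number of $\red_a$-steps---peeling off the head step, applying the hypothesis, and absorbing the doubled tail via $\red_c^*\cdot\red_c^* = \red_c^*$---is exactly the evident argument the paper leaves implicit. Nothing to add.
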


  \begin{thm}[Modular factorization]\label{thm:HR_modularity}
	Let  $\reda ~= ~(\ereda \cup \ireda)$ and $\redc ~= ~(\eredc \cup \iredc)$ be  $\ex$-factorizing relations. 
	 The   union
	$\reda\cup \redc$  satisfies $\ex$-factorization 
	$\F{\ered}{\ired}$, for   
	$\ered \ := \ \ereda \cup\eredc$, and $\ired \ := \ \ireda \cup\iredc$, if the following linear swaps hold:
		\begin{equation*}
		\LS\ireda\eredc:~~\ireda \cdot   \eredc  \ \subseteq\  \eredc \cdot \reda^* \quad  \text{ and } \quad  \LS\iredc\ereda:~~
		~~\iredc \cdot   \ereda  \ \subseteq\  \ereda \cdot \redc^*		
	\end{equation*}
\end{thm}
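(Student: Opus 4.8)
The plan is to reduce $\F{\ered}{\ired}$ to a postponement property and then prove it by a staged induction that invokes the factorizations of $\reda$ and $\redc$ only as black boxes. By \reflemma{factorization_eq}, establishing $\F{\ered}{\ired}$ amounts to establishing the postponement $\PP{\ered}{\ired}$, that is $\ired^*\cdot\ered^*\subseteq\ered^*\cdot\ired^*$. As a preliminary step I would strengthen the two hypotheses to their iterated versions via \reflemma{basic_com}: instantiating it on $\LS{\ireda}{\eredc}$ (with $\red_a\deff\ireda$, $\red_b\deff\eredc$, $\red_c\deff\reda$) gives $\ireda^*\cdot\eredc\subseteq\eredc\cdot\reda^*$, and symmetrically $\iredc^*\cdot\ereda\subseteq\ereda\cdot\redc^*$.

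The core is a pushing lemma stating that a block of inessential steps of a \emph{single} module can be commuted past a block of essential steps, producing only essential steps followed by inessential ones. Concretely I would prove $\ireda^*\cdot\ered^*\subseteq\ered^*\cdot\ired^*$ (and, symmetrically, the analogous statement with $\iredc^*$) by induction on the length $m$ of the essential block $\ered^*$. Peeling the first essential step $\epsilon$ of the block, two cases arise. If $\epsilon=\ereda$, the factorization of $\reda$ yields $\ireda^*\cdot\ereda\subseteq\ereda^*\cdot\ireda^*$; if $\epsilon=\eredc$, the iterated linear swap followed by the factorization of $\reda$ yields $\ireda^*\cdot\eredc\subseteq\eredc\cdot\reda^*\subseteq\eredc\cdot\ereda^*\cdot\ireda^*$. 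In either case the freshly exposed essential steps are retired to the front (they belong to the final $\ered^*$), while what remains is again a block $\ireda^*$ of inessential $\reda$-steps followed by an essential block of length $m-1$, to which the induction hypothesis applies. Full postponement then follows by an outer induction on the length of the inessential prefix: peel its first step $\iota\in\ired$, bring the remainder to the form $\ered^*\cdot\ired^*$ by the outer hypothesis, and push $\iota$ across $\ered^*$ with the pushing lemma.

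The delicate point---and the step I expect to be the real obstacle---is the invariant that all the debris generated while pushing $\ireda^*$ remains inside the module $\reda$. This is exactly what the shape of \eqref{eq:LS} secures: its right-hand side $\eredc\cdot\reda^*$ keeps the created material in $\reda^*=(\ereda\cup\ireda)^*$, and the factorization of $\reda$ rewrites $\reda$-steps into $\reda$-steps only. Hence pushing inessential $\reda$-steps never manufactures inessential $\redc$-steps, the essential block strictly shrinks at each stage, and the potential ping-pong between the two modules---the very phenomenon exhibited in \refex{SPfailure}, where swapping conjures a third relation---never occurs. Once this confinement is in place the remaining bookkeeping is routine, and assembling the two symmetric pushing lemmas into the outer induction gives $\PP{\ered}{\ired}$, hence $\F{\ered}{\ired}$ by \reflemma{factorization_eq}.
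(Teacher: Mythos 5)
Your proposal is correct and takes essentially the same route as the paper: the two case computations at its core---$\ireda^*\cdot\ereda \subseteq \ereda^*\cdot\ireda^*$ by black-box factorization of $\reda$, and $\ireda^*\cdot\eredc \subseteq \eredc\cdot\reda^* \subseteq \eredc\cdot\ereda^*\cdot\ireda^*$ by \reflemma{basic_com} plus factorization of $\reda$---are exactly the two cases ($j=k$ and $j\neq k$) that the paper verifies. The only difference is packaging: your hand-rolled double induction (inner on the length of the essential block, outer on the inessential prefix) is precisely the tiling argument that the paper delegates to \reflemma{LP} applied to the compressed relation $\ireda^*\cup\iredc^*$, transferring back to $\ired$ via \reflemma{PP_characterization1}.
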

\proof
We prove that 
	the assumptions imply 
		$\LP{ \ered}{\ireda^* \cup \iredc^*}$, hence  $\PP{\ered}{\ireda^* \cup \iredc^*}$ (by  \reflem{LP}). 		
	Therefore  
	$\PP \ered \ired$ by  \reflemma{PP_characterization1} 
{(because  $ (\ireda \cup \iredc )^* ~=~ (\ireda^* \cup \iredc^*)^*$ by  property 
	\refeq{eq:closure_U})}, and so 
	  $\F{\ered}{\ired}$ holds.
	
	 To verify     $\LP{ \ereda \cup \eredc}{\ireda^* \cup \iredc^*}$, we   observe that the following holds:
\begin{center}$\iredx{k }^* \cdot \eredx{j} ~ \subseteq ~( \eredx{j} \cup \eredx{k})^*\cdot \iredx{k}^* \mbox{ for all 
} k,j  
	\in \set{\alpha,\gamma}$\end{center}
	\begin{itemize}
		\item \textbf{Case $j=k$.} This is immediate  by $\ex$-factorization of $\reda$ and $\redc$, and  
by \reflem{factorization_eq}.1.
		
		\item  \textbf{Case $j\not= k$.} $\LS\ireda\eredc$ implies 
		$(\ireda)^* \cdot \eredc$ $\subseteq$  $\eredc \cdot\reda^*$, by   \reflemma{basic_com}. 
		Since $\reda$ $\ex$-factorizes, we obtain $(\ireda)^* \cdot \eredc$ $\subseteq$  $\eredc \cdot \ereda^* 
\cdot\ireda^*$.
		Similarly for $\LS\iredc\ereda$.\qedhere
	\end{itemize}

Note that 
in the  proof of \refthm{HR_modularity},  the assumption that $\reda$ and $\redc$ factorize is crucial. Using that, 
together  with   \reflemma{basic_com}, 
we obtain  $\LP{\ered}{\ireda^*}$, that is, $\ireda^*$ postpones after both $\ex$-steps, (and similarly for $\iredc^*$).
  Note also that   $\LS{\ired}{ \ered}$---taken alone---does not imply $\PP{\ered}{ \ired}$.  
For instance, let's consider again \refex{LP}. It is clear that  $ \LS \nhredb \hredb$ holds and yet it does not imply $\F \hredb\nhredb$. Stronger tools, such as parallel reduction or finite development are needed here---there is no magic.

The next sections apply the modularization result to various  $\lam$-calculus extensions.
\paragraph{Linear Postponement.}
We collect here two easy  properties which shall simplify the proof of factorization in several  of the  case studies (use \reflem{basic_com}).
\begin{lemma}[Linear postponement]\label{l:easy_lp} 
	\hfill
	\begin{enumerate}
		
		\item	($\ired \cdot \ered ~\subseteq~ \ered \cdot \ired^*)   ~\Rightarrow~ \LP{\ered}{\ired^*}   ~\Rightarrow~ 
\F{\ered}{\ired}$.
		
		\item 	($\ired  \cdot  \ered ~ \subseteq ~\ered  \cdot \red^= )  ~\Rightarrow~ \LP{\ered}{\ired} ~\Rightarrow~ 
\F{\ered}{\ired}$.
	\end{enumerate}
\end{lemma}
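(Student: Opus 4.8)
The plan is to massage each given swap condition into Hindley's \emph{strong postponement} shape $\ired' \cdot \ered \subseteq \ered^* \cdot (\ired')^=$ --- the template on which \reflem{LP} operates --- and then ride the already-established chain: \reflem{LP} (strong postponement $\Rightarrow$ postponement), \reflemma{PP_characterization1} (replacing the postponed relation by one with the same reflexive-transitive closure), and \reflem{factorization_eq} (postponement $\Leftrightarrow$ factorization). Both items are thus pure relation algebra, and the hint to use \reflemma{basic_com} points to the one non-cosmetic manipulation, which occurs in item~1.

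For item~1 I would start from $\ired \cdot \ered \subseteq \ered \cdot \ired^*$, which is already in the shape required by \reflemma{basic_com} (instantiating $\red_a := \ired$, $\red_b := \ered$, $\red_c := \ired$); applying that lemma promotes the left factor to its closure, giving $\ired^* \cdot \ered \subseteq \ered \cdot \ired^*$. Since $\ered \subseteq \ered^*$ and $\ired^*$ is reflexive (so that $(\ired^*)^= = \ired^*$), the right-hand side sits inside $\ered^* \cdot (\ired^*)^=$, which is exactly $\LP{\ered}{\ired^*}$. Then \reflem{LP} yields $\PP{\ered}{\ired^*}$; because $(\ired^*)^* = \ired^*$, \reflemma{PP_characterization1} converts this into $\PP{\ered}{\ired}$, and \reflem{factorization_eq} delivers $\F{\ered}{\ired}$.

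For item~2 the hypothesis $\ired \cdot \ered \subseteq \ered \cdot \red^=$ (with $\red = \ered \cup \ired$) already lands in the strong-postponement template without invoking \reflemma{basic_com}: expanding $\ered \cdot \red^= = \ered \cup \ered\cdot\ered \cup \ered\cdot\ired$, each summand is contained in $\ered^* \cdot \ired^=$, so $\ired \cdot \ered \subseteq \ered^* \cdot \ired^=$, i.e. $\LP{\ered}{\ired}$ holds. Applying \reflem{LP} and then \reflem{factorization_eq} gives $\F{\ered}{\ired}$ directly; here no passage through $\ired^*$ is needed, since the postponed relation is already $\ired$ itself.

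I do not expect a genuine obstacle: the content is entirely in recognizing that the two hypotheses are one \reflemma{basic_com} step (respectively, zero steps) away from fitting Hindley's local test. The only subtlety worth flagging is that in item~1 one cannot conclude $\F{\ered}{\ired}$ directly from $\LP{\ered}{\ired^*}$ --- the detour through \reflemma{PP_characterization1} to trade $\ired^*$ for $\ired$ is precisely what makes the reflexive-transitive closures match.
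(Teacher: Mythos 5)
Your proof is correct and matches the paper's intended argument exactly: the paper leaves this lemma as an easy consequence with the explicit hint ``use \reflem{basic_com}'', and your route --- promoting $\ired$ to $\ired^*$ via that lemma, recognizing strong postponement, then chaining \reflem{LP}, \reflemma{PP_characterization1} (where $(\ired^*)^* = \ired^*$ makes the hypothesis hold), and \reflem{factorization_eq} --- is precisely the path traced by the intermediate assertions $\LP{\ered}{\ired^*}$ and $\LP{\ered}{\ired}$ displayed in the statement itself. The expansion of $\ered\cdot\red^=$ in item~2 and the flagged subtlety about trading $\ired^*$ for $\ired$ in item~1 are both accurate.
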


\paragraph{Factorization vs. Confluence.}
Factorization and confluence are  \emph{independent} properties. In 
\refsec{ND} we  apply our modular  factorization  technique to a non-confluent 
calculus. Conversely, $\beta\eta$, which is confluent, does not verify head nor  leftmost factorization, even though both  $\beta$ and $\eta$---separately---do.


\section{Extensions of the Call-by-Name $\lam$-Calculus:    Head Factorization}\label{sec:studies}\label{sec:lambda}
We shall study  factorization theorems for extensions of both of the call-by-name (shortened to CbN) and of the call-by-value (CbV) 
$\lam$-calculus.
The CbN  $\lam$-calculus---also simply known as  $\lam$-calculus---is the set of $\lam$-terms $\Lambda$, equipped with 
the $\beta$-reduction, while the CbV $\lam$-calculus is the set of $\lam$-terms $\Lambda$, equipped with the 
$\betav$-reduction.

In this section, we first revise the language of the $\lam$-calculus; we  then   consider the case  when the  
calculus is \emph{enriched   with new operators}, such as a non-deterministic choice or a fix-point operator---so, 
together with  $\beta$, we have other reduction rules. 
  We  study  in this setting  \emph{head factorization}, which is 
by far the most important and  common  factorization scheme in $\lam$-calculus. 
We show that  here \refthm{HR_modularity}  further simplifies,  providing  an easy,   \emph{ready-to-use} test for head 
factorization
 of compound systems (\refprop{test_h}).
Indeed,   verifying 
 the two  linear swap conditions of \refthm{HR_modularity} now  reduces to a single,  simple test. Such a 
 simplification only relies on $\beta$ and on the properties  of contextual closure; it  holds  independently from the 
specific form of the extra rule.

\subsection{The (Applied) $\lam$-Calculus} 
Since in the next sections we shall extend the $\l$-calculus with new operators, such as  
a non-deterministic choice $ \oplus $ or a  fix-point $Y$, we allow  in the syntax a 
  set of constants,  meant to represent such  operators. So, for instance, in \refsec{ND} we shall see $\oplus$ as a 
constant.  This 
way   factorization results with respect to $\beta$-reduction  can be seen as holding also in the $\l$-calculus with 
extended syntax---this is  absolutely 
harmless.

Note that despite the fact that the classic Barendregt's book \cite{Barendregt84} defines the $\l$-calculus without 
constants (the calculus is pure), other classic references such as Hindley and Seldin's book \cite{HindleyBook}  or 
Plotkin \cite{PlotkinCbV} do include  constants in the language of terms---thus there is nothing exotic in our approach. Following 
{Hindley and Seldin}, when the set of constants is empty, the 
calculus is called \emph{pure}, otherwise \emph{applied}. 

\paragraph{The Language.}
The following grammars generate $\lam$-terms and contexts.
\begin{center}
$				\begin{array}{rl@{\hspace{1.3cm}} rl}
				\tm, \tmp, \tmq, \tmr  ::=  x  \mid c \mid  \lambda x.\tm \mid \tm\tm   & ( \textbf{terms 
$ \Lambda $}) &
				\cc  ::=  \hole{~}   \mid \tm\cc\mid \cc \tm \mid \lambda x.\cc 
				& (\textbf{contexts})\\		
				\end{array}$
\end{center}
where $x$ ranges over a countable set of \emph{variables},  $c$ over a disjoint  (finite, infinite or empty) set of 
constants. Variables and constants 
are \emph{atoms}, terms of shape 
$\tmp\tmq$ are \emph{applications}, and $\lam x. \tmp$ abstractions.  If the constants are $c_1, ..., c_n$, the set 
of terms is sometimes  noted as $\Lambda_{c_1...c_n}$. 

The plugging $\cc\hole\tm$ of a 
term $\tm$ into a context is the operation replacing the only occurrence of a hole $\hole{~}$ in $\cc$ with 
$\tm$, potentially capturing free variables of $\cc$.

A  reduction step  $\redc$ is defined as the contextual closure of  a \emph{root} relation 
$\rredc$ on $\Lambda$, which is  called a \emph{rule}. Explicitly, if  $ \tmr\rredc \tmr'$, then 
$\tm \redc \tms$ holds if $\tm = \cc\hole\tmr$ and 
$\tms = \cc\hole{\tmr'}$,  for some context $\cc$.
The term $\tmr$ is called a $\gamma$-redex. Given two rules $\mapsto_{\alpha},\rredc$  on 
$\Lambda$, the relation  $\red_{\alpha 
	\gamma}$ is $\reda \cup \redc$, which can equivalently be defined as the contextual closure of 
$\mapsto_\alpha \cup \mapsto_\gamma$.

The  \textbf{(CbN) $\bm{\lam}$-calculus is $(\Lambda, \redb)$}, the set of terms together with  
\textbf{$\bm{\beta}$-reduction}  $\redb$, 
defined as the contextual closure of the  $\beta$-rule:
		$(\lam x. \tmp) \tmq \mapsto_{\beta} \tmp\subs x \tmq$
where $\tmp\subs x \tmq$ denotes  capture-avoiding substitution. We silently work modulo 
$\alpha$-equivalence.

\paragraph{Properties of the Contextual Closure.} 
Here we recall  basic properties about contextual 
closures and substitution, preparing the ground for the simplifications  studied  next.

A relation $\looparrowright$ on terms is \emph{substitutive} if 
\begin{equation}\tag{\textbf{substitutive}}
\tmr \looparrowright \tmr' 
\text{ implies } \tmr \subs x \tmq \looparrowright \tmr'\subs x \tmq.
\end{equation}
An obvious induction  on the shape of terms shows the   following ({see Barendregt \cite{Barendregt84}, p. 54}).
\begin{property}[Substitutive]\label{fact:subs} Let $\redc$ be  the contextual closure of $\rredc$.
\begin{enumerate}
	\item\label{fact:subs-function} If $\rredc $ is substitutive then $\redc$ is 
substitutive: $\tmp\redc \tmp'$ implies $\tmp \subs{x}{\tmq} \redc \tmp' \subs{x}{\tmq}$.
	\item\label{fact:subs-argument} If $\tmq\redc \tmq'$ then $\tm\subs{x}{\tmq} \redc^* 
\tm\subs{x}{\tmq'} $.	
\end{enumerate}
\end{property}
We  recall a basic but key  property of contextual closures. If a step $\redc$ is obtained by closure under 
\emph{non-empty context} of a rule $\rredc$, then it preserves the shape of the term:
\begin{property}[Shape preservation]\label{fact:shape}
	Assume $\tm=\cc\hole{\tmr}\red \cc\hole {\tmr'}=\tm'$ and that  context   $\cc$ is \emph{non-empty}. The term  $\tm'$ 
is an 
application (resp. an abstraction) if and only if $\tm$ is. 
\end{property}
Notice that since the closure under \emph{empty} context of $\rredc$ is always  an \emph{essential} step (whatever head, 
left, or weak), \reffact{shape} implies  that non-essential steps always preserve the shape of terms---we spell this out 
in \reffact{istep} and \ref{fact:istep_CbV} in the Appendix.
Please notice that we shall often write $\rredc$ to indicate the step $\redc$ which is obtained by \emph{empty contextual 
closure}.

\paragraph{Head Reduction.} Head contexts are defined as follows:
$$\hh  ::= \lam x_1 \dots \lam x_k.\hole{~} \tm_1\dots \tm_n \quad \quad (\textbf{head contexts})$$
where  $k\geq0$ and $n\geq 0$. A \emph{non-head context} is a context which is not head.
A \emph{head} step $\hredc$ (resp. \emph{non-head} step $\nhredc$)	is defined as the closure under  
head contexts (resp. non-head contexts) of rule $\mapsto_\gamma$.
Obviously, $\redc \eq \hredc  \cup  \nhredc$. 

Note that   the \emph{empty context} $\hole{~}$ is a \emph{head context}. Therefore 
$\rredc~\subseteq~ \hredc$ holds (a fact which we shall freely use) and  \reffact{shape} always applies to  non-head 
steps.

\subsection{Call-by-Name: Head Factorization, Modularly.}\label{sec:modular_head}
Head factorization is of great importance for the {theory of the} CbN $\lam$-calculus, which is why head factorization for $\redb$ is well studied. 
If we consider a calculus $(\Lambda, \redb\cup \redc)$, where $\redc$ is a new reduction added to $\beta$, 
{our modular technique} (\refthm{HR_modularity}) {states} that the  compound system $\redb\cup \redc$ satisfies  head factorization 
 if $\redc$ does, 
 %
%
%
  and  both 
 $\LS \nhredb \hredc$  and  $\LS \nhredc \hredb$   hold. We show that in the head case our technique simplifies even more, reducing 
to the test in 
 \refprop{test_h}.
 
First, we observe that in this case, \emph{any}  linear swap condition can be tested by considering for the head step  only the root 
relation $\mapsto$, that is, only the closure of $\mapsto$ under \emph{empty} context, which  is a head step by definition.
This is expressed in the following lemma, where we  include also a variant,  that shall be useful 
later on. 
\begin{lemma}[{Lifting} root linear swaps]\label{l:head_swaps}Let $\rreda,\rredc$ be  root relations  on 
$\Lambda$. 
\begin{enumerate}
	\item $\nhreda \cdot \rredc \subseteq  {\hredc} \cdot \reda^* $ implies $\LS \nhreda \hredc$.
\item Similarly, $\nhreda \cdot \rredc \subseteq  {\hredc} \cdot \reda^= $ implies  $\nhreda \cdot \redc \subseteq  {\hredc} \cdot \reda^= $.
\end{enumerate}
\end{lemma}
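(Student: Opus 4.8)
The plan is to prove both parts by induction on the head context $\hh$ witnessing the head step, i.e. by writing the $\hredc$-step as $u = \hh\hole{r} \hredc \hh\hole{r'} = w$ with $r \rredc r'$ and $\hh$ a head context, and analysing how the preceding non-head step $t \nhreda u$ sits relative to $\hh$. I will carry out part 1 in full and then observe that part 2 follows by the very same induction, only tracking the tighter bound $\reda^=$ in place of $\reda^*$.

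For the base case $\hh = \hole{~}$ the head step is a root step, $u \rredc w$, so the composite $t \nhreda u \hredc w$ is literally an instance of $\nhreda \cdot \rredc$, and the hypothesis $\nhreda \cdot \rredc \subseteq \hredc \cdot \reda^*$ delivers $t \hredc \cdot \reda^* w$ at once. This is the only place where genuinely substitutive content enters: contracting a root $\gamma$-redex may duplicate the residual $\alpha$-redex, which is precisely why the right-hand side must be $\reda^*$ (resp.\ $\reda^=$ in part 2) rather than a single step. Since the root case is exactly the assumed hypothesis, I never have to reprove it.

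For the inductive step I decompose a non-empty head context as either $\hh = \lambda x.\hh_1$ or $\hh = \hh_1\,s$, with $\hh_1$ again a head context (in the application case $\hh_1$ has no leading abstraction). In both subcases I invoke shape preservation (\reffact{shape}): because $t \nhreda u$ is the closure of $\nhreda$ under a non-empty context, $t$ has the same outermost shape as $u$. So if $\hh = \lambda x.\hh_1$ then $t = \lambda x.t_1$, the redex descends to $t_1 \nhreda \hh_1\hole{r}$ (a context under $\lambda x$ is non-head exactly when its body is), I apply the induction hypothesis to $t_1 \nhreda \hh_1\hole{r} \hredc \hh_1\hole{r'}$ and re-close under the abstraction, which preserves being a head step and being an $\reda$-sequence. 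If $\hh = \hh_1\,s$ then $t = t_1 t_2$ and I split on where the single non-head $\alpha$-redex lies: if it is in the argument $t_2$, then $t_1 = \hh_1\hole{r}$ still carries $r$ at the spine head, so I contract $r$ first, $t \hredc (\hh_1\hole{r'})\,t_2$, and then reduce $t_2 \reda s$ in argument position to reach $w$; if it is in $t_1$, I apply the induction hypothesis to $t_1 \nhreda \hh_1\hole{r}\hredc \hh_1\hole{r'}$ and re-close the resulting reduction under application with $s$ on the right.

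The crux will be guaranteeing that the head $\gamma$-step I synthesise for $t$ is genuinely a \emph{head} step, i.e.\ that the earlier non-head $\alpha$-step did not push the $\gamma$-redex off the spine. This is exactly what shape preservation buys: a non-head step cannot alter the applicative spine nor turn the spine head into an abstraction, so the position contracted in $u$ remains a head position in $t$; concretely, in the application case the head reduction lifted from $t_1$ stays a head reduction of $t_1 t_2$ because $t_1$ inherits the applicative shape of $\hh_1\hole{r}$ (and when $\hh_1 = \hole{~}$ this subcase collapses back to the base case applied at $t_1$). For part 2 the only extra bookkeeping is that the bound $\reda^=$ is not inflated: in the inductive cases the residual $\alpha$-redex is never duplicated, since the head $\gamma$-redex and the argument carrying the $\alpha$-redex are disjoint and the argument survives as a single residual, while the base case and the induction hypothesis already yield $\reda^=$; hence at most one $\alpha$-step is produced in total.
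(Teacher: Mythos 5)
Your proof is correct and follows essentially the same route as the paper's: an induction on the head context of the $\gamma$-step, with the base case discharged by the root hypothesis, the abstraction case by shape preservation plus the induction hypothesis, and the application case split according to whether the $\alpha$-redex lies in the left subterm (induction hypothesis) or in the argument (contract the head $\gamma$-redex first, then perform the single residual $\alpha$-step). Your treatment of part 2 --- rerunning the induction while observing that the argument-side case produces exactly one $\alpha$-step, so the bound $\reda^=$ is preserved --- likewise matches the paper, which notes that only case (ii) of the application case differs.
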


Second, since we are   studying $ \redb\cup \redc$, one of the linear swaps is $\LS \nhredc \hredb$. We show that, 
whatever is    $\redc$, it  linearly swaps with  $\hredb$ as soon as $\rredc$  is \emph{substitutive}.
\begin{lemma}[Swap with $\hredb$] \label{l:swap_after_b} If  $\rredc$ is   substitutive  then  $\LS \nhredc \hredb$ holds. 
\end{lemma}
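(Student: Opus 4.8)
The plan is to reduce the general linear swap $\LS \nhredc \hredb$ to its \emph{root} instance by exploiting \reflemma{head_swaps}. Concretely, applying \reflemma{head_swaps}(1) with $\beta$ in the role of the essential (head) rule and $\gamma$ in the role of the internal one, it suffices to prove the root-level swap
\[ \nhredc \cdot \rootb \ \subseteq\ \hredb \cdot \redc^* , \]
that is, to treat only the case where the head $\beta$-step contracts a \emph{root} redex. This is the heart of the simplification: once the essential step is fixed at the root, the shape of the term is completely determined, and the whole argument collapses to a single, finite case analysis.

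So I would start from a peak $t \nhredc t' \rootb s$, where $t' = (\lambda x.p)q$ and $s = p\subs{x}{q}$. Since $t'$ is an application, \reffact{shape} forces $t$ to be an application too, and the contracted $\gamma$-redex of $t \nhredc t'$ must sit in a \emph{non-head} position of $t'$. Inspecting the possible positions of the hole of the non-head context, the root position and the position $\hole{~}\,q$ are both \emph{head} contexts (directly from the grammar of head contexts), hence are excluded; the only remaining possibilities are that the redex lies strictly inside the abstraction body $p$, or inside the argument $q$. Thus $t$ has one of the two forms $(\lambda x.p_0)\,q$ with $p_0 \redc p$, or $(\lambda x.p)\,q_0$ with $q_0 \redc q$.

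In the first case I would fire the root $\beta$-redex of $t$, obtaining the head step $t \rootb p_0\subs{x}{q}$ (recall $\rootb \subseteq \hredb$); since $\rredc$ is substitutive, \reffact{subs}(\ref{fact:subs-function}) makes $\redc$ substitutive, so $p_0\subs{x}{q} \redc p\subs{x}{q} = s$, and the peak closes as $t \hredb p_0\subs{x}{q} \redc s$. In the second case, $t \rootb p\subs{x}{q_0}$ is again a head step, and \reffact{subs}(\ref{fact:subs-argument}) yields $p\subs{x}{q_0} \redc^* p\subs{x}{q} = s$, closing the peak as $t \hredb p\subs{x}{q_0} \redc^* s$. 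Both closures have exactly the required shape $\hredb \cdot \redc^*$.

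The one delicate point — and the only place where the hypothesis that $\rredc$ is substitutive is genuinely used — is the body case: a $\gamma$-redex lying under the binder $\lambda x$ is substituted for $x$ and thereby duplicated (or erased) by the $\beta$-step, so closing the diagram requires that the $\gamma$-rule commute with the substitution of $q$ for $x$. This is precisely \reffact{subs}(\ref{fact:subs-function}), and it is the reason substitutivity cannot be dropped. The remaining obligations — checking that the two excluded hole-positions are indeed head contexts, and that the argument case needs no substitutivity (only \reffact{subs}(\ref{fact:subs-argument}), which holds for any contextual closure) — are routine and immediate from the definitions.
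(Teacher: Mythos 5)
Your proposal is correct and takes essentially the same route as the paper's own proof: both reduce $\LS \nhredc \hredb$ to the root $\beta$-case via \reflemma{head_swaps}, use shape preservation (\reffact{istep}) to split into the body and argument cases, and close them with \reffact{subs} (point 1, where substitutivity of $\rredc$ is genuinely needed) and \reffact{subs} (point 2), respectively. Your additional remarks on why the two excluded hole positions are head contexts and why the argument case needs no substitutivity are accurate and consistent with the paper.
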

The proofs of these two  lemmas are  in  \refapp{head}.

Summing up, since head factorization for $\beta$ is known, we obtain the following test to verify  that {the  compound system $\redb\cup \redc$ satisfies head factorization $\F{\hredb\cup\hredc}{\nhredb\cup \nhredc}$}.
 	\begin{prop}[A test for {modular} head factorization]\label{prop:test_h}
Let 
	$\redb$  be $\beta$-reduction and $\redc$ be the contextual closure of a rule $\rredc $.  	Their union 	$\redb \cup \redc$  
satisfies head factorization if:
	\begin{enumerate}
		\item \emph{Head factorization of $\redc$}: $\F{\hredc}{\nhredc}$.
		\item \emph{Root linear swap}: $\nhredb \cdot \rredc  \ \subseteq \ \hredc \cdot\redb^* $.
		\item \emph{Substitutivity}: $\rredc$ is  substitutive.
	\end{enumerate}

	 	\end{prop}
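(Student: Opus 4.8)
The plan is to derive the statement as a direct instantiation of the general modular factorization theorem \refthm{HR_modularity}. I set $\reda \defeq \redb$ and take its head/non-head decomposition $\ereda \defeq \hredb$, $\ireda \defeq \nhredb$; for the added rule I use $\eredc \defeq \hredc$ and $\iredc \defeq \nhredc$. With this choice the essential relation is $\ered = \hredb \cup \hredc$ and the inessential one is $\ired = \nhredb \cup \nhredc$, so the conclusion $\F{\ered}{\ired}$ supplied by \refthm{HR_modularity} is exactly head factorization $\F{\hredb\cup\hredc}{\nhredb\cup\nhredc}$ of $\redb \cup \redc$. It therefore suffices to discharge the four premises of \refthm{HR_modularity}: that $\redb$ and $\redc$ are each head-factorizing, and that the two linear swaps $\LS\nhredb\hredc$ and $\LS\nhredc\hredb$ hold.

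Each premise matches exactly one of the three hypotheses (plus the classical result on $\beta$). Head factorization of $\redc$ is precisely hypothesis~1, $\F{\hredc}{\nhredc}$, and head factorization of $\redb$ is the classical theorem for $\beta$ (see \cite{Barendregt84,Takahashi95}), which I invoke as a black box rather than re-prove: this is the whole point of the modular approach. For the two swaps I reuse the two auxiliary lemmas. For $\LS\nhredb\hredc$, \reflem{head_swaps}(1) instantiated with $\rreda \defeq \rredb$ reduces the swap to the root-level inclusion $\nhredb \cdot \rredc \subseteq \hredc \cdot \redb^*$, which is hypothesis~2. For $\LS\nhredc\hredb$, \reflem{swap_after_b} shows the swap holds as soon as $\rredc$ is substitutive, which is hypothesis~3. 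Assembling these four facts and applying \refthm{HR_modularity} yields the desired head factorization.

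There is essentially no obstacle left inside this proof: all the combinatorial content is packaged in \reflem{head_swaps} and \reflem{swap_after_b} (proved in \refapp{head}), and the genuinely global factorization of $\beta$ is assumed rather than re-established. The only point worth stressing is why the three listed conditions really capture both linear swaps. The swap $\LS\nhredc\hredb$ is the one exhibited by \refex{SPfailure}: performing the head $\beta$-step first may duplicate (or erase) the argument carrying the non-head $\gamma$-redex, so a single $\gamma$-step becomes a sequence $\redc^*$ — and it is precisely the presence of $\redc^*$, rather than a single step, on the right-hand side of the linear swap that accommodates this. That this always works, whatever the shape of $\rredc$, is the content of \reflem{swap_after_b}, which needs only substitutivity (hypothesis~3, cf.\ \reffact{subs}). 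The swap $\LS\nhredb\hredc$ is the genuinely rule-dependent one; \reflem{head_swaps}(1) reduces it to inspecting a head $\gamma$-redex against a single non-head $\beta$-step, i.e.\ hypothesis~2, so that the delicate interaction with $\beta$ is confined to one finite, purely combinatorial check.
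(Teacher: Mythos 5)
Your proof is correct and takes exactly the paper's route: the paper also obtains \refprop{test_h} by instantiating \refthm{HR_modularity} with the head/non-head decompositions of $\redb$ and $\redc$, invoking the classical head factorization of $\beta$ as a black box, and discharging the swap $\LS\nhredb\hredc$ via \reflem{head_swaps} from hypothesis~2 and the swap $\LS\nhredc\hredb$ via \reflem{swap_after_b} from hypothesis~3. Your closing discussion of where duplication forces $\redc^*$ on the right-hand side is also consistent with the paper's treatment, so there is nothing to correct.
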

Note that none of the properties concerns $\redb$ alone, as we already know  that  head factorization of $\redb$  holds.
In \refsec{ND} we shall use our test (\refprop{test_h}) to prove head factorization for the non-deterministic 
$\l$-calculus. The full proof is only a few lines long.

We conclude by observing  that \reflem{head_swaps} 
gives  either a proof that the swap conditions hold, or a counter-example. Let us 
give an example of this {latter} use.

\begin{example}[Finding  counter-examples]\label{ex:be}
The test of \refprop{test_h} can also be used to provide a counter-example to head factorization when 
it fails. Let's instantiate $\redc$ with $\redeta$, that is, the contextual closure of rule $\lam x. 
\tm x\mapsto_{\eta} \tm$ if $\var\notin\fv\tm$.
Now, consider the root linear swap:
	$ \tm := \lam x. (II)(Ix) \nhredb \lam x. (II)x \mapsto_{\eta}II  =: \tms $, 
where $I:= \lam z.z$.
Note that $\tm$ has no $\hredx{\eta}$ step, and so the two steps cannot be swapped. The 
reduction sequence above is a \emph{counter-example to  both head   and leftmost 
factorization} for $\be$. Start with the head (and leftmost) 
redex $II$:	$\lam x. (II)(Ix)\hredx{\be}\lam x. I(Ix)$. From $\lam x. I(Ix)$, 
 there is no way to reach 
$\tms$.
 We recall that $\be$ still satisfies leftmost normalization---the proof is non-trivial  
\cite{KlopThesis,Takahashi95,OostromT16,Ishii18}.
\end{example}

	\section{The Non-Deterministic $\lam$-Calculus $\PLambda$}		\label{sec:ND}
	De' Liguoro and Piperno's non-deterministic $\l$-calculus $\PLambda$ is defined in \cite{deLiguoroP95} by extending  
the $\lam$-calculus with a new operator $\oplus$ whose rule  models  \emph{non-deterministic choice}. Intuitively, 
		$\tm\oplus \tmp$ non-deterministically rewrites to either  $\tm$ or  $\tmp$.		
	 Notably, $\PLambda$ is \emph{not confluent}, hence  it is  a good example of the fact that confluence and 
factorization are independent properties.

We briefly recall $\PLambda$ and its features,   then use our technique to give 
a \emph{novel and neat}  proof  of de' Liguoro and Piperno's \emph{head factorization} result 
(Cor. 2.10 in \cite{deLiguoroP95}).
 
\paragraph{Syntax.} 
We  slightly depart from the presentation in \cite{deLiguoroP95}, as we consider $\oplus$ as a 
constant,  and write $\oplus \tm \tmp$ rather than $\tm \oplus \tmp$, 
working as usual for the $\lam$-calculus with constants (see \eg, 
 \cite{HindleyBook}, or  \cite{Barendregt84}, Sec. 15.3)\footnote{Note that there is no loss with respect to the syntax 
in \cite{deLiguoroP95}, where $\oplus$ always comes with two arguments, because such a constraint defines a  sub-system 
which is closed under reduction.}.
Terms and contexts  are  generated  by:
\begin{align*}
		{ {
				\begin{array}{llr c llr}
				\tm,\tmp,\tmq,\tmr  ::=  x \mid \oplus \mid   \lambda x.\tm \mid \tm\tmp   & ( \textbf{terms}
				~\PLambda)&		\quad\quad&
				\cc  ::=  \hole{~}   \mid \tm\cc\mid \cc \tm \mid \lambda x.\cc 
				& (\textbf{contexts})\\		
				\end{array}
				}
}
\end{align*}
As before,  $\redb$ denotes $\beta$-reduction, while the rewrite step  $\redo$ is the contextual  
closure of  the following non-deterministic rule:
$ 	\oplus \tm \tmp\mapsto_{\oplus} \tm$ and $\oplus \tm \tmp\mapsto_{\oplus} \tmp. $
%
\paragraph{Subtleties.}
The  calculus $(\PLambda,\redb\cup\redo)$ is {non trivial}. Of course, $\red$ is not confluent. Moreover, the  
following  examples  from \cite{deLiguoroP95} show that permuting $\beta$ and $\oplus$ steps is delicate.
\begin{itemize}
	
	\item  \emph{$\redo$   creates  $\beta$-redexes.}  For instance, $((\lam x.x)\oplus y) z \redo (\lam x.x) z \redb z$, 
hence the  $\redo$-step  cannot be postponed after $\redb$.
	
	\item\emph{Choice duplication.} Postponing $\redb$ after $\redo$ is also problematic, because $\beta$-steps may 
multiply choices, introducing new results: flipping a coin and duplicating the result is not equivalent to duplicating 
the coin and then flipping twice. For instance,
	let $\tm=(\lam x.xx) (\oplus \tmp\tmq)$. Duplicating first one may have $\tm\redb (\oplus \tmp\tmq)(\oplus \tmp\tmq) \redo \tmq(\oplus 
\tmp\tmq) \redo \tmq\tmp$ {while flipping first one has} $\tm\redo (\lam x.xx)\tmp\redb ~ pp $ {or $\tm\redo (\lam x.xx)\tmq\redb ~ qq $ but in both cases $\tmq\tmp$ cannot be reached}.  
\end{itemize}
 These examples are   significant  as the same issues impact  any calculus with choice effects.

\paragraph{Head Factorization.}
 The head (resp. non-head)\footnote{Non-head steps are called \emph{internal} ($\ired$) in \cite{deLiguoroP95}.}
  rewrite steps $\hredb$ and $\hredo$ (resp. $\nhredb$ and $\nhredo$)
  are defined as the closure by head (resp. non-head) contexts of rules $\mapsto_\beta$ and $\mapsto_\oplus$, 
respectively.
We also set $\hred ~\defeq~ \hredb \cup \hredo$ and  $\nhred  ~\defeq~ \nhredb \cup \nhredo$.
		
De' Liguoro and Piperno prove that despite the failure of confluence, $\PLambda$ satisfies head factorization.
They prove this result via 	
standardization, following  Klop's technique \cite{KlopThesis}.

\begin{thm}[Head factorization, Cor. 2.10 in \cite{deLiguoroP95}]\label{thm:ND_fact}
$\F \hred \nhred$ holds in the non-deterministic $\l$-calculus $\PLambda$.
\end{thm}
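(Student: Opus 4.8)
The plan is to apply the ready-made test for modular head factorization, \refprop{test_h}, instantiating the extra rule $\redc$ with the non-deterministic rule $\redo$ (so $\rredc$ is $\oplus \tm \tmp \mapsto_\oplus \tm$ and $\oplus\tm\tmp\mapsto_\oplus\tmp$). Since head factorization of $\beta$ is the classical result of Barendregt, \refprop{test_h} reduces the whole theorem to three purely local checks on $\oplus$, none of which re-proves anything about $\beta$: (1)~head factorization of $\redo$ alone, $\F{\hredo}{\nhredo}$; (2)~the root linear swap $\nhredb \cdot \mapsto_\oplus \subseteq \hredo \cdot \redb^*$; and (3)~substitutivity of $\mapsto_\oplus$. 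Because $\hred = \hredb \cup \hredo$ and $\nhred = \nhredb \cup \nhredo$, the conclusion $\F{\hred}{\nhred}$ is exactly the head factorization of the union delivered by the test (equivalently, the instance of \refthm{HR_modularity} with $\reda = \redb$ and $\redc = \redo$).

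Conditions (2) and (3) are immediate. For (3), substituting in $\oplus \tm\tmp \mapsto_\oplus \tm$ yields $\oplus (\tm\sub{x}{\tmq})(\tmp\sub{x}{\tmq})\mapsto_\oplus \tm\sub{x}{\tmq}$, so $\mapsto_\oplus$ is substitutive. For (2), consider $\tm \nhredb \tmu \mapsto_\oplus \tms$ where $\tmu$ is a root $\oplus$-redex $\oplus M N$ and, say, $\tms = M$. Since $\oplus M N$ is an application with the constant $\oplus$ at its head, by \reffact{shape} the non-head $\beta$-redex of $\tm$ sits inside an argument, so $\tm$ is either $\oplus M' N$ with $M' \redb M$, or $\oplus M N'$ with $N' \redb N$. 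In the first case $\tm = \oplus M' N \hredo M' \redb M = \tms$; in the second $\tm = \oplus M N' \hredo M = \tms$. Either way $\tm \hredo \cdot \redb^* \tms$, and the symmetric choice $\mapsto_\oplus \tmp$ is handled identically. This is the root form that lifts to $\LS{\nhredb}{\hredo}$ inside the test.

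The only step with genuine (though still short) content is condition~(1), which I would settle via \reflemma{easy_lp} by checking the single-step diagram $\nhredo \cdot \hredo \subseteq \hredo \cdot \redo^=$. Given $\tm \nhredo \tmu \hredo \tms$, write the head redex of $\tmu$ as $\oplus M N$ plugged in a head context, with $\tms$ obtained by keeping $M$. The non-head redex of $\tm$ cannot strictly contain the outermost head redex, so it is either disjoint from it or nested inside one of the arguments $M,N$. If it lies in the discarded argument $N$, firing the head step on $\tm$ erases it and $\tm \hredo \tms$ directly; if it lies in the kept argument $M$, or is disjoint, firing the head step first leaves exactly one residual $\oplus$-step reaching $\tms$, giving $\tm \hredo \cdot \redo^= \tms$. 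The crucial point—and the reason the diagram closes \emph{linearly}, precisely where it fails for $\beta$ in \refex{SPfailure}—is that $\oplus$ neither duplicates its arguments nor creates redexes of a third kind, so no residual multiplication occurs. This finite case analysis is the mild obstacle of the whole proof; everything concerning $\beta$ is reused as a black box, which is why the argument stays a few lines long.
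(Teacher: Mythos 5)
Your proposal is correct and follows essentially the same route as the paper: instantiate \refprop{test_h} with $\redc := \redo$, dispatch substitutivity and the root swap $\nhredb \cdot \mapsto_\oplus \subseteq \hredo \cdot \redb^=$ by shape preservation (the paper's \reflem{oplus_basic}), and obtain $\F{\hredo}{\nhredo}$ from the linear diagram $\nhredo \cdot \hredo \subseteq \hredo \cdot \redo^=$ via \reflem{easy_lp}. The only cosmetic difference is that you argue that diagram directly by a positional (disjoint-or-nested) analysis of the two redexes, where the paper proves the root-level swap $\nhredo \cdot \mapsto_\oplus \subseteq \mapsto_\oplus \cdot \redo^=$ and lifts it by induction on head contexts (\reflem{head_swaps}.2)---your case analysis is exactly what that lifting lemma formalizes.
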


\paragraph{A New Proof, Modularly.}		We give a novel, strikingly  simple proof of   $\F \hred \nhred$, simply  by 
proving  that $\redb$ and $\redo$  satisfy the hypotheses of the test for {modular} head factorization (\refprop{test_h}). 
All the ingredients we need  are given by  the following easy lemma. 
\begin{lemma}[{Root linear swaps}]\label{l:oplus_basic}
	\begin{enumerate}
		\item\label{l:boplus_basic-swap}   
		{		$\tm\nhredb \tmp \mapsto_{\oplus} \tmq$ implies  $\tm \mapsto_{\oplus} \cdot  \red_{\beta} ^= \tmq$.}
		
		\item\label{l:ooplus_basic-swap}  
		{ $\tm\nhredo \tmp \mapsto_{\oplus} \tmq$ implies  $\tm \mapsto_{\oplus} \cdot  \red_{\oplus}^= \tmq$.}
		
	\end{enumerate}
\end{lemma}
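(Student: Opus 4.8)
The plan is to treat both items uniformly: in each case the hypothesis is a non-head step $\tm \to \tmp$ followed by a \emph{root} $\oplus$-step $\tmp \mapsto_{\oplus} \tmq$, so necessarily $\tmp = \oplus\,\tmr\,\tms$ and $\tmq \in \{\tmr,\tms\}$. Item~1 has $\tm \nhredb \tmp$ (so the desired trailing step is $\redb^=$), while item~2 has $\tm \nhredo \tmp$ (trailing step $\redo^=$); apart from this replacement the two arguments are literally identical, because every tool I use (shape preservation, plus the head/non-head analysis of contexts) is insensitive to which rule the non-head step contracts. I will therefore carry out item~1 and then observe that item~2 follows by replacing $\beta$ with $\oplus$ throughout.

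The core is a structural claim: a non-head step $\tm \to \oplus\,\tmr\,\tms$ forces $\tm$ to be a root $\oplus$-redex $\tm = \oplus\,\tmr'\,\tms'$ in which the contracted redex sits inside one of the two arguments, i.e. either $\tmr' \to \tmr$ with $\tms' = \tms$, or $\tmr' = \tmr$ with $\tms' \to \tms$. To prove it, recall that a non-head step is the closure of the rule under a \emph{non-empty} context, so \reffact{shape} applies. Since $\oplus\,\tmr\,\tms = (\oplus\,\tmr)\,\tms$ is an application, \reffact{shape} gives $\tm = \tm_1\tm_2$, with the redex lying in $\tm_1$ or in $\tm_2$. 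If it is in $\tm_2$, then $\tm_1 = \oplus\,\tmr$ is untouched and $\tm_2 \to \tms$, which is the claim. If it is in $\tm_1$, then $\tm_2 = \tms$ and $\tm_1 \to \oplus\,\tmr$; here the redex cannot be at the root of $\tm_1$, for otherwise the surrounding context $\hole{~}\,\tms$ would be a head context, contradicting non-headness of the original step. Hence $\tm_1 \to \oplus\,\tmr$ is again a non-empty-context step towards an application, and iterating the same dichotomy peels arguments off the left spine; the recursion must terminate on an argument of $\oplus$, because reaching the head constant $\oplus$ itself is impossible: by \reffact{shape} a non-empty-context step cannot turn an application or abstraction into the atom $\oplus$, and atoms do not reduce.

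With the claim in hand the conclusion is immediate. As $\tm = \oplus\,\tmr'\,\tms'$ is a root $\oplus$-redex, both $\tm \mapsto_{\oplus} \tmr'$ and $\tm \mapsto_{\oplus} \tms'$ are available, and I fire the branch matching $\tmq$. If $\tmq$ is the \emph{unmodified} argument, the chosen branch already equals $\tmq$ and no trailing step is needed; if $\tmq$ is the argument that was reduced, the chosen branch is its one-step reduct, so a single trailing $\redb$ (item~1) or $\redo$ (item~2) step reaches $\tmq$. In every case this yields $\tm \mapsto_{\oplus} \cdot \redb^= \tmq$ (resp. $\tm \mapsto_{\oplus} \cdot \redo^= \tmq$), as required.

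The only delicate point — and the place where a naive attempt would break — is the ``redex in $\tm_1$'' branch of the structural claim, where one must guarantee that the non-head step does not act on the head $\oplus$. This is precisely the interplay flagged by \reffact{shape}: ruling out the root-of-$\tm_1$ subcase uses that $\hole{~}\,\tms$ is a head context, while ruling out the manufacture of the bare constant $\oplus$ uses shape preservation together with the normality of atoms. Everything else is routine bookkeeping on the position of the contracted redex.
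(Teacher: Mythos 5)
Your proof is correct and takes essentially the same route as the paper: both arguments reduce the lemma to the fact that a non-head step into $\oplus p_1 p_2$ preserves the $\oplus$-redex shape, so that $\tm = \oplus \tm_1\tm_2$ with the contracted redex inside one argument, after which one fires the matching branch of the root $\oplus$-step and closes with at most one $\redb$ (resp.\ $\redo$) step, treating item~2 by replacing $\beta$ with $\oplus$. The only difference is packaging: the paper invokes the spelled-out shape-preservation property (\reffact{istep}, point~4) as a black box, whereas you re-derive it inline by peeling the left spine---which is sound, since your iterated dichotomy carries the root-exclusion along at each level (the context $\hole{~}p_1p_2$ is again head), though your closing attribution of that exclusion to ``normality of atoms'' is a slight misdirection, as a root redex such as $(\lam x.x)\oplus$ \emph{can} contract to the bare constant and is ruled out only by the head-context argument.
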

\proof
	\begin{enumerate}
	\item 	Let $\tmp=\oplus \tmp_1\tmp_2$ and assume $\oplus\tmp_1\tmp_2 \mapsto_{\oplus} \tmp_i=\tmq$, with  $i\in\set{1,2}$.
		Since  $\tm\nhredb \tmp$, by  
		\reffact{shape} (as spelled-out in  \reffact{istep}), $\tm$ has 
		shape $\oplus \tm_1\tm_2$, with $\oplus \tm_1\tm_2 \nhredb \oplus \tmp_1\tmp_2$. Therefore, 
		either    
		$\tm_1\redb \tmp_1$ or  $\tm_2\redb \tmp_2$,  from which   $\tm=\oplus \tm_1\tm_2  \mapsto_{\oplus}   \tm_i 
		\redb^=  \tmp_i  =\tmq$.
		
	\item 	 The proof is the same as above, just {replace $\beta$ with $\oplus$}.\qedhere	\end{enumerate}

\begin{thm}[Testing  head factorization] We have $\F \hred \nhred$ because we have:
\begin{enumerate}
	\item \emph{Head factorization of $\redo$}: $\F{\hredo}{\nhredo}$.

	\item \emph{Root linear swap}: 
		{$\nhredb \cdot \mapsto_{\oplus}\ \subseteq\ \hredo \cdot  \red_{\beta} ^=$.}		\item \emph{Substitutivity}: {$\mapsto_{\oplus}$ is substitutive.}
\end{enumerate}
\end{thm}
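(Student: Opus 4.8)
The plan is to instantiate the test for modular head factorization (\refprop{test_h}) with $\redc := \redo$, thereby delegating all the genuinely hard work concerning $\beta$ to the already-known head factorization of $\beta$ that is built into that proposition. This reduces the goal $\F \hred \nhred$ to discharging the three hypotheses of \refprop{test_h}, none of which mentions $\beta$ in isolation.

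First I would dispatch the two cheap conditions. The root linear swap $\nhredb \cdot \mapsto_\oplus \subseteq \hredo \cdot \redb^=$ is nothing but part 1 of \reflem{oplus_basic}, reading the root $\oplus$-step as a head step via $\mapsto_\oplus \subseteq \hredo$ and noting $\redb^= \subseteq \redb^*$; so it requires no extra argument. Substitutivity of $\mapsto_\oplus$ is a one-line check: since $(\oplus \tm \tmp)\subs{x}{\tmq} = \oplus\, \tm\subs{x}{\tmq}\, \tmp\subs{x}{\tmq}$, the redex $\oplus \tm \tmp \mapsto_\oplus \tm$ (and symmetrically $\mapsto_\oplus \tmp$) is preserved by substitution, giving $(\oplus \tm \tmp)\subs{x}{\tmq} \mapsto_\oplus \tm\subs{x}{\tmq}$.

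The only remaining hypothesis is head factorization of the single rule, $\F{\hredo}{\nhredo}$, and even this I would obtain from the root-level swap of part 2 of \reflem{oplus_basic} rather than by a direct induction on reduction sequences. Concretely, part 2 gives $\nhredo \cdot \mapsto_\oplus \subseteq \hredo \cdot \redo^=$; lifting the root step to a full contextual step by \reflem{head_swaps}(2) yields $\nhredo \cdot \redo \subseteq \hredo \cdot \redo^=$, whence in particular $\nhredo \cdot \hredo \subseteq \hredo \cdot \redo^=$; the linear postponement criterion \reflem{easy_lp}(2) then delivers $\LP{\hredo}{\nhredo}$ and hence $\F{\hredo}{\nhredo}$.

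I do not expect a real obstacle: the whole point of the modular technique is that the combinatorially delicate interaction---$\beta$ duplicating an $\oplus$-redex, as in \refex{SPfailure}---never has to be confronted directly, because the linear swap tolerates the appearance of extra $\beta$-steps on the right-hand side ($\redb^*$ rather than $\nhredb$). If anything is subtle, it is pure bookkeeping: checking that each root step in \reflem{oplus_basic} is correctly read as a head step (the empty context is a head context) and that shape preservation of non-head steps (\reffact{shape}) is exactly what licenses the case analysis underlying \reflem{oplus_basic}. With those conventions fixed, the three checks assemble into a proof of $\F \hred \nhred$ that is only a few lines long.
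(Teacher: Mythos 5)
Your proposal is correct and follows the paper's proof essentially step for step: the same three checks via \refprop{test_h}, with the root linear swap given by \reflem{oplus_basic}.1, substitutivity by the same one-line substitution computation, and head factorization of $\redo$ obtained exactly as in the paper by lifting \reflem{oplus_basic}.2 through \reflem{head_swaps}.2 and concluding with \reflem{easy_lp}.2. The only cosmetic difference is that you explicitly note $\redb^= \subseteq \redb^*$ when matching the stated swap to the hypothesis of \refprop{test_h}, which the paper leaves implicit.
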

		
\proof We  prove the hypotheses of \refprop{test_h}.
		\begin{enumerate}
				
		\item $\nhredo$ linearly postpones after $\hredo$ because lifting  the swap in \reflem{oplus_basic}.\ref{l:ooplus_basic-swap} via \reflem{head_swaps}.2 (with 
$\alpha=\gamma:=\oplus$) gives  $\tm\nhredo \tmp \hredo \tmq  ~\subseteq ~\tm \hredo \cdot  \redo^= \tmq$. \reflem{easy_lp}.2 gives Factorization. 
			\item This is exactly \reflem{oplus_basic}.\ref{l:boplus_basic-swap}.			\item By definition of substitution $(\oplus \tmp_1\tmp_2)\subs x \tmq  = \oplus (\tmp_1 \subs x 
\tmq)(\tmp_2\subs x \tmq)
			\mapsto_{\oplus} \tmp_i\subs x \tmq$.\qedhere
		\end{enumerate}


	
\section{ Extensions of the CbV $\lam$-Calculus: Left and Weak Factorization}
\label{sec:left}\label{sec:CbV}
Plotkin's call-by-value (CbV) $\l$-calculus \cite{PlotkinCbV} is the restriction of the  $\l$-calculus where 
$\beta$-redexes can be fired only when the argument is a \emph{value}, where values are defined by:
\begin{equation*}
	  		\val ~~ ::=  ~~ x   \mid a \mid \lam x. \tm   \quad \mbox{(\textbf{values $ \Val $}) }
\end{equation*}
The CbV $\l$-calculus   is {given by the pair} $(\Lambda, \redbv)$, where  the  reduction step  $\redbv$ is the 
contextual closure  of the following rule  $\mapsto_{\betav}$: $(\la{x}{\tm})\val \rootbv \tm\subs {x}{\val}$ where $\val$ is a value.

\paragraph{Left and Weak Reduction.} In the literature on the CbV $\lam$-calculus, factorization is considered with 
respect to various essential reductions. Usually, the essential reduction is weak, \ie  it does not act under 
abstractions. There are  three main weak schemes: reducing from left to right, as originally done by Plotkin 
\cite{PlotkinCbV}, from right to left, as done for instance by Leroy's ZINC abstract machine \cite{Leroy-ZINC}, or in an 
unspecified non-deterministic order, used for example in \cite{LagoM08}.

Here we focus on the left(-to-right) and  the (unspecified) weak schemes. Left contexts  $\ll$ and weak contexts $\ww$ 
are respectively  defined by
\begin{equation*}
	\ll ::= \hole{~}  \mid  \ll \tm \mid  \val \ll   \quad\quad\quad  \ww ::= \hole{~}  \mid  \ww \tm \mid   \tm \ww
\end{equation*}
Given a rule $\mapsto_\gamma$, a \emph{left} step $\lredc$ (resp., a \emph{weak} step $\wredx{\gamma}$ ) is   its 
closure by left (resp. weak) context. 
A \emph{non-left} step $\nlredc$ (resp. \emph{non-weak} step $\nwredx{\gamma}$) is a step  obtained as the  closure by a 
context which is not left (resp. not weak).

\paragraph{Left/Weak Factorization, Modularly.}\label{sec:modular_left} For both left and weak reductions, we derive a 
test for modular factorization along the same lines as the test for head factorization (\refprop{test_h}).  
Note that we already know that $(\Lambda, \redbv)$ satisfies   left and weak factorization; the former was proved by 
Plotkin \cite{PlotkinCbV}, the latter is folklore---a proof can be found  in {our previous work} \cite{AccattoliFG19}.
 	\begin{prop}[A  test for modular left/weak factorization]\label{prop:test_lw}
Let 
	$\redbv$  be $\betav$-reduction, $\redc$  the contextual closure of a rule $\rredc $, and $\ex\in \{\left,\weak\}$. 
		Their union 	$\redbv \cup \redc$  satisfies $\ex$-factorization if:
	\begin{enumerate}
		\item \emph{$\ex$-factorization of $\redc$}:~ $\F{\exredc}{\nexredc}$.
		\item \emph{Root linear swap}: 
$ \nexredbv \cdot   \rredc \ \subseteq\  \exredc \cdot\redbv^* $.
		\item \emph{Substitutivity}: $\rredc$ is  substitutive.
	\end{enumerate}
	 	\end{prop}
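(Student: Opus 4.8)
The plan is to follow, \emph{mutatis mutandis}, the proof of the head-factorization test \refprop{test_h}: instantiate the modular factorization theorem \refthm{HR_modularity} with $\reda := \redbv$ and the given $\redc$, splitting each relation along the $\ex/\lnot\ex$ distinction for the fixed $\ex \in \{\left,\weak\}$. This reduces the goal $\F{\exredbv\cup\exredc}{\nexredbv\cup\nexredc}$ to the four hypotheses of \refthm{HR_modularity}. Two of them are the $\ex$-factorizations of the components: $\redbv$ is $\ex$-factorizing by the cited results (Plotkin \cite{PlotkinCbV} for $\ex=\left$, and \cite{AccattoliFG19} for $\ex=\weak$), which we invoke as a black box, while $\redc$ is $\ex$-factorizing by hypothesis~1. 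It then remains to discharge the two linear swaps $\LS{\nexredbv}{\exredc}$ and $\LS{\nexredc}{\exredbv}$, one coming from each component.

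For $\LS{\nexredbv}{\exredc}$ I would prove a left/weak counterpart of the lifting lemma \reflem{head_swaps}.1, namely that the root swap of hypothesis~2, $\nexredbv\cdot\rredc \subseteq \exredc\cdot\redbv^*$, lifts to the full step swap $\nexredbv\cdot\exredc \subseteq \exredc\cdot\redbv^*$. Given $\tm \nexredbv \cdot \exredc \tmu$, I case on the $\ex$-context $E$ of the $\gamma$-step. If $E$ is empty, the $\gamma$-step is a root step and hypothesis~2 applies verbatim. If $E$ is non-empty, the key observation is that $\ll$ and $\ww$ never cross an abstraction, so a non-$\ex$ $\betav$-step fires strictly under some $\lam$, whereas the $\ex$ $\gamma$-redex is reached without crossing any $\lam$; by shape preservation of non-$\ex$ steps (\reffact{shape}) the two redexes sit at independent positions, hence they commute and $\nexredbv\subseteq\redbv^*$ closes the diagram.

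For $\LS{\nexredc}{\exredbv}$ I would prove the left/weak counterpart of \reflem{swap_after_b}: substitutivity of $\rredc$ (hypothesis~3) yields $\nexredc\cdot\exredbv \subseteq \exredbv\cdot\redc^*$. Writing the $\ex$-step as the contraction of $(\la x \tmp)\val$ inside an $\ex$-context, the preceding non-$\ex$ $\gamma$-step must land either disjointly, or inside the body $\tmp$, or inside the value argument $\val$; in each case I fire $\betav$ first and then replay the $\gamma$-step on the substituted term, using \reffact{subs}.\ref{fact:subs-function} in the body case and \reffact{subs}.\ref{fact:subs-argument} in the argument case. With both linear swaps established, \refthm{HR_modularity} delivers $\ex$-factorization of $\redbv\cup\redc$, and, as in \refprop{test_h}, no property of $\redbv$ alone needs to be re-proved.

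The main obstacle, and the only genuinely new point with respect to the CbN test \refprop{test_h}, is the value restriction of CbV: whether a position is left or weak depends on the value-hood of neighbouring subterms, so I must rule out that the non-$\ex$ $\gamma$-step \emph{creates}, or \emph{exposes} to an $\ex$-position, the $\betav$-redex used by the subsequent $\ex$-step. The crucial remark is that turning a function into an abstraction value, or an argument into a value, at the positions relevant to a left/weak redex is \emph{itself} a left/weak step, by the very definitions of $\ll$ and $\ww$; hence such a creation can never be performed by a non-$\ex$ step. This is precisely what the CbV refinement of \reffact{shape} for non-$\ex$ steps records (spelled out in the appendix), and it is exactly what keeps the right-hand sides of both swaps \emph{linear} in the $\ex$-step, as required by \refthm{HR_modularity}.
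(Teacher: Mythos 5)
Your route is the paper's route: instantiate \refthm{HR_modularity} with $\reda := \redbv$, take $\ex$-factorization of $\redbv$ (Plotkin \cite{PlotkinCbV} for left, \cite{AccattoliFG19} for weak) and of $\redc$ (hypothesis~1) as black boxes, and discharge the two linear swaps via (i) a lifting lemma for root swaps proved by cases on the $\ex$-context---this is exactly the paper's \refl{left_swaps}, proved by induction on $\ll$ (resp.\ $\ww$)---and (ii) a substitutivity-based swap with $\exredbv$ (the paper's Lemma~\ref{swap_after_bv}); your third paragraph reproduces that lemma's proof case for case, including the uses of \reffact{subs}.\ref{fact:subs-function} and \reffact{subs}.\ref{fact:subs-argument}, and your closing remark about value-hood is precisely the content of \reffact{istep_CbV}.

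There is, however, one incorrect step in your justification of the lifting for $\LS{\nexredbv}{\exredc}$: for $\ex=\left$ it is \emph{false} that a non-left $\betav$-step fires strictly under some $\lam$. Left contexts also exclude the argument of an application whose head is not a value: in $(xx)\,((\lam y.y)z)$ the step contracting $(\lam y.y)z$ is non-left yet under no abstraction. So ``independence because the non-$\ex$ step is under a $\lam$'' is not a valid argument for the left scheme. The independence you need is true for a different reason: any prefix of a left (resp.\ weak) context is again a left (resp.\ weak) context, so a non-$\ex$ step can never sit at a position \emph{containing} the hole of the $\ex$-context; hence, peeling the context by induction (as the paper does), the prior step is either inside the root $\gamma$-redex---where hypothesis~2 applies, the non-$\ex$-ness of the restricted step being preserved by the refinements in \reffact{istep_CbV}.\ref{l:istep_CbV-app}---or in a sibling subtree, where it commutes trivially, the only delicate case being $\ll = \val\,\ll'$ with the prior step producing $\val$, which is closed exactly by the value-hood preservation you state in your last paragraph. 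With that sentence replaced by the context-induction argument, your proof is correct and coincides with the paper's; as written, the left instance of your lifting step rests on a false premise.
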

The  easy proof is in Appendix \ref{app:CbV}.

	\section{The Shuffling Calculus}\label{sec:shuffling}
Plotkin's CbV $\lambda$-calculus is usually considered on closed terms. When dealing with open terms, it is well 
known that a mismatch between the operational and the denotational semantics arises, as first pointed out by Paolini 
and 
Ronchi della Rocca \cite{PaoliniR99,DBLP:conf/ictcs/Paolini01,  parametricBook}.
The literature contains several proposals of extensions of $\betav$ reduction to overcome this issue, see Accattoli and 
Guerrieri for discussions \cite{DBLP:conf/aplas/AccattoliG16}.   One such refinement is Carraro and 
Guerrieri's \emph{shuffling calculus} \cite{CarraroG14}, which extends Plotkin's  $\lambda$-calculus with extra rules  
(without adding new operators). These rules are inspired by linear logic proof nets, and are the CbV analogous of 
Regnier's $\sigma$ rules \cite{regnier94}. 
Left factorization  for the shuffling calculus is  studied by Guerrieri, Paolini, and Ronchi della Rocca in 
\cite{GuerrieriPR17}, by adapting Takahashi's technique \cite{Takahashi95}.

We recall the calculus,  then use our technique to give a new 
 proof of  factorization, both left (as in \cite{GuerrieriPR17}) and weak (new). Remarkably, our proofs are \emph{very 
short}, whereas  the original requires several pages (to  define parallel reductions and prove their properties).

\paragraph{The Syntax.}The \emph{shuffling calculus} is simply Plotkin's calculus extended with 
\emph{$\sigma$-reduction} $\reds{}$, that is, the contextual closure of the root relation $\topreds{} ~= ~\topreds{1} 
\cup \topreds{3}$,
where
\begin{align*}
	(\la{x}{\tm})\tmtwo\tmthree &\topreds{1} (\la{x}\tm\tmthree)\tmtwo \ \text{ if } x \notin \fv{\tmthree} &
	\val((\la{x}{\tm})\tmtwo) &\topreds{3} (\la{x}\val\tm)\tmtwo \ \text{ if } x \notin \fv{\val}
\end{align*}
{We write $\reds{i}\!$ for the contextual closure of $\topreds{i}$ (so $\reds{} \,= \ \reds{1} \!\cup \reds{3}$), and 
$\tosh = \redbv \!\cup \reds{}$.

\paragraph{Subtleties.} From a rewriting perspective, the shuffling calculus is an interesting extension of the 
$\l$-calculus because  its intricate rules  do not fit into easy to manage classes of rewriting systems. 
Orthogonal systems have only simple forms of overlaps of redexes. While the  $\l$-calculus is an orthogonal system,  the 
$\sigma$-rules introduce non-trivial overlaps such as the following ones.   Setting  $I \defeq \la{x}x$ and $\delta 
\defeq \la{x} xx$, the term $\delta I\delta$ is a $\sigma_1$-redex and contains the $\betav$-redex $\delta I$; the term 
$\delta(I\delta)(xI)$ is a $\sigma_1$-redex and contains the $\sigma_3$-redex $\delta(I\delta)$, which contains in turn 
the $\betav$-redex $I\delta$.

\paragraph{Left and Weak  Factorization.} Despite all these  traits, the shuffling calculus has  good properties, such 
as confluence \cite{CarraroG14}, and  left  factorization \cite{GuerrieriPR17}. Moreover, $\reds{}$ is terminating 
\cite{CarraroG14}. The tests developed in the previous section allow us to easily prove both left and 
weak factorization. We check  the hypotheses of  \refprop{test_lw};
 all the ingredients  we need are in \refl{rootsCbV} (the easy details  are in  Appendix \ref{app:shuffling}). Note that 
the \emph{empty context is both a left and a weak} context, hence  $\tm \topreds{i} \tmu$ implies  both $\tm \lreds{i} 
\tmu$ and $\tm \wredx{\sigma_i} \tmu$.
\begin{lemma}[{Root linear swaps}] \label{l:rootsCbV}
	\label{l:linear-swap-sigma}	
	Let $\ex\in \{\left,\weak\}$ and $i \in \{1,3\}$. Then:
		\begin{enumerate}
			\item $\neredx{\betav}\cdot \topreds{i} \ \subseteq\ \topreds{i}  \cdot\red_{\betav}$.
			\item $\neredx{\sigma} \cdot \topreds{i} \ \subseteq\ \topreds{i} \cdot\red_{\sigma}$.
		\end{enumerate}	 
\end{lemma}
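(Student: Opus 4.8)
The plan is to establish all four inclusions at once, fixing $\ex\in\{\left,\weak\}$ and $i\in\{1,3\}$ and letting $\rho$ be the rule of the non-essential step on the left (so $\rho=\betav$ for part~1 and $\rho=\sigma$ for part~2). Assume $\tm\neredx{\rho}\tmp$ and $\tmp\topreds{i}\tmq$. The target $\tmp$ has a rigid top-level shape determined by $i$: for $i=1$ it is $((\la{x}{s})u)r$ with $x\notin\fv{r}$ and $\tmq=(\la{x}{sr})u$; for $i=3$ it is $\val((\la{x}{s})u)$ with $x\notin\fv{\val}$ and $\tmq=(\la{x}{\val s})u$. The guiding idea is that a non-essential step can never touch the \emph{skeleton} of this pattern --- the two nested applications and the leading abstraction $\la{x}{}$ --- because every position of that skeleton is simultaneously a left and a weak context, hence essential; so the whole $\sigma_i$-pattern is already present in $\tm$, the contracted redex sits inside one of the components, and the two steps swap linearly.

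Concretely, write the non-essential step as $\tm=\ctxp{R}\neredx{\rho}\ctxp{R'}=\tmp$, where $R\topred{\rho}R'$ is the fired root redex and $\ctx$ is a non-essential (non-left, resp.\ non-weak) context. By shape preservation (\reffact{shape}, spelled out for the CbV essential reductions in \reffact{istep_CbV}) the step preserves the applicative/abstraction structure, so I may analyse where the hole of $\ctx$ lies in the tree of $\tmp$. For $i=1$ the skeleton positions are the whole term and the subcontexts producing $(\la{x}{s})u$, $\la{x}{s}$, or $u$; for $i=3$ they are the whole term and the subcontexts producing $\val$, $(\la{x}{s})u$, $\la{x}{s}$, or $u$. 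A direct check against the grammars of $\ll$ and $\ww$ shows each such position is both left and weak (using that $\la{x}{s}$ and $\val$ are values), hence essential and excluded for a non-essential $\ctx$. The only surviving possibility is that $R$ lies inside one of the components --- $s$, $u$, or $r$ for $i=1$; $\val$, $s$, or $u$ for $i=3$. The sole boundary subtlety is that for $\ex=\left$ the component $r$ of $\sigma_1$ may also be produced wholesale, since $((\la{x}{s})u)\hole{~}$ is \emph{non}-left ($(\la{x}{s})u$ is not a value); this is just the degenerate instance of ``$R$ inside $r$''.

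In every surviving case exactly one component is modified and the others are untouched, so $\tm$ matches the $\sigma_i$-pattern: the shape is in place, the function component is still a value in the $\sigma_3$-cases (destroying that value would require the excluded wholesale position), and the free-variable side condition ($x\notin\fv{r}$, resp.\ $x\notin\fv{\val}$) is recovered in $\tm$ by choosing the bound variable $x$ fresh, working modulo $\alpha$. I can therefore fire the root step $\tm\topreds{i}\tm'$; since $\sigma_1$ and $\sigma_3$ are linear in their components, $R$ is relocated (possibly under the binder $\la{x}{}$) without being duplicated or erased. Replaying the single step $R\topred{\rho}R'$ in $\tm'$ reaches $\tmq$, yielding $\tm\topreds{i}\cdot\red_{\betav}\tmq$ for part~1 and $\tm\topreds{i}\cdot\red_{\sigma}\tmq$ for part~2.

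The main obstacle is the position case analysis: certifying that \emph{every} skeleton position is at once a left and a weak context (so that it is excluded uniformly for both choices of $\ex$), while the component-internal positions survive, and keeping the $\alpha$-renaming bookkeeping straight so that the side conditions of $\sigma_1$/$\sigma_3$ hold in $\tm$. Once these points are settled the swap is immediate, and the linearity of the $\sigma$-rules is exactly what makes a \emph{single} right-hand step suffice --- matching the statement.
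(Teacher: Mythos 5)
Your proposal is correct and takes essentially the same route as the paper's own proof: shape preservation for non-essential steps (\reffact{istep_CbV}) transfers the $\sigma_i$-redex pattern from target to source, a case analysis locates the contracted redex inside one of the components (your ``skeleton is essential'' check matching the paper's grammar-driven case split, including the non-left wholesale position for $r$ in $\sigma_1$), and the root step $\topreds{i}$ is fired first with the displaced step replayed exactly once, the linearity of $\sigma_1,\sigma_3$ justifying the single residual $\red_{\betav}$ (resp.\ $\red_\sigma$) step. Your explicit $\alpha$-renaming to recover the side conditions $x\notin\fv{r}$ and $x\notin\fv{\val}$ in the source term is in fact slightly more careful than the paper's terse free-variable remark, which relies on silently working modulo $\alpha$-equivalence.
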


\begin{thm}[Testing  left (weak)  factorization]\label{thm:left-test-shuffling}	Let $\ex\in \{\left,\weak\}$.
	$\F \eredsh \neredsh$ holds,  as:
\begin{enumerate}
	\item \emph{Left (resp. weak) factorization of $\reds{}$}:~ $\F{\eredx{\sigma}}{\neredx{\sigma}}$.
	\item \emph{Root linear swap}: 
			$\neredx{\betav}  \cdot\topreds{} \ \subseteq\  \eredx{\sigma} \cdot  
\red_{\betav}$.		
	\item
	 {\emph{Substitutivity}: $\mapsto_{\sigma_i}$ is substitutive, for  $i \in \{1,3\}$.}
\end{enumerate}
\end{thm}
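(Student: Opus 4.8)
The plan is to obtain $\F{\eredsh}{\neredsh}$ as a direct instance of the modular test \refprop{test_lw} applied to $\redbv$ and $\reds{}$, checking its three hypotheses one by one; the argument runs exactly parallel to the non-deterministic case and rests entirely on \reflem{rootsCbV}, so it should be only a handful of lines. The one structural remark used throughout is that the empty context is both a left and a weak context, so $\topreds{i} \subseteq \eredx{\sigma}$ for $i \in \{1,3\}$ and every statement about the root relation $\topreds{}$ feeds straight into the essential relation $\eredx{\sigma}$, uniformly for $\ex \in \{\left,\weak\}$.

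Two hypotheses are immediate. The root linear swap (hypothesis~2), $\neredx{\betav} \cdot \topreds{} \subseteq \eredx{\sigma} \cdot \red_{\betav}$, is precisely \reflem{rootsCbV}.1 after collecting the two cases $i \in \{1,3\}$ and using $\topreds{i} \subseteq \eredx{\sigma}$. Substitutivity (hypothesis~3) is a one-line check on each rule: for $\sigma_1$, assuming by $\alpha$-conversion that $x \neq y$ and $x \notin \fv{\tmq}$, we compute $((\la{x}{\tm})\tmtwo\tmthree)\subs{y}{\tmq} = (\la{x}{\tm\subs{y}{\tmq}})(\tmtwo\subs{y}{\tmq})(\tmthree\subs{y}{\tmq})$, and since $x \notin \fv{\tmthree}$ and $x \notin \fv{\tmq}$ give $x \notin \fv{\tmthree\subs{y}{\tmq}}$, the result is again a $\sigma_1$-redex reducing to $((\la{x}{\tm\tmthree})\tmtwo)\subs{y}{\tmq}$; the case of $\sigma_3$ is symmetric.

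The only hypothesis deserving an argument is left (resp. weak) factorization of $\reds{}$ in isolation (hypothesis~1). Here I would feed the linear-swap machinery to itself: \reflem{rootsCbV}.2 gives the root-level swap $\neredx{\sigma} \cdot \topreds{i} \subseteq \topreds{i} \cdot \red_{\sigma}$ for $i \in \{1,3\}$, hence $\neredx{\sigma} \cdot \topreds{} \subseteq \eredx{\sigma} \cdot \red_{\sigma}^=$; lifting the essential step to arbitrary left/weak contexts---via the left/weak analogue of \reflem{head_swaps}.2 (taken with $\alpha = \gamma := \sigma$), which is justified by \reffact{shape} and the fact that the empty context is essential---yields $\neredx{\sigma} \cdot \eredx{\sigma} \subseteq \eredx{\sigma} \cdot \red_{\sigma}^=$. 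Then \reflem{easy_lp}.2 delivers $\F{\eredx{\sigma}}{\neredx{\sigma}}$, and \refprop{test_lw} concludes $\F{\eredsh}{\neredsh}$.

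I expect the difficulty to be conceptual rather than computational, and to lie outside this proof proper: the $\sigma$-rules genuinely overlap (witness the critical pairs $\delta I\delta$ and $\delta(I\delta)(xI)$ noted above), so the real work sits in \reflem{rootsCbV}, where those overlaps must be resolved by case analysis. Granting that lemma, the only points to watch here are bookkeeping: that the lifting in hypothesis~1 produces a single trailing $\red_{\sigma}$ step, so that \reflem{easy_lp}.2 applies with $\red_{\sigma}^=$ rather than $\red_{\sigma}^*$, and that the free-variable side conditions of $\sigma_1,\sigma_3$ survive substitution, which is exactly what makes hypothesis~3 go through.
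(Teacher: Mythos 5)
Your proof is correct and takes essentially the same route as the paper's: hypotheses 2 and 3 of \refprop{test_lw} are discharged by \reflem{rootsCbV}.1 and a direct substitution computation, and hypothesis 1 by lifting the root swap of \reflem{rootsCbV}.2 through the left/weak analogue of \reflem{head_swaps} (which the paper states explicitly as \reflem{left_swaps}, with $\alpha=\gamma=\sigma$) and concluding via \reflem{easy_lp}.2. The only micro-detail you gloss is that the $\sigma_3$ substitutivity case is not literally symmetric to $\sigma_1$---it additionally uses that values are closed under substitution (\reffact{value-sub})---which is exactly how the paper's appendix proof proceeds.
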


\proof We  prove the hypotheses of \refprop{test_lw}:
	\begin{enumerate}
				\item Left (resp. weak) factorization of $\reds{}$ holds because $\nereds{}$ linearly postpones after 
$\ereds{}$: indeed, by \refl{linear-swap-sigma}{.2} 
				$\neredx{\sigma} \cdot \topreds{} \ \subseteq\ \ereds{} \cdot\red_{\sigma}$ 
				and by contextual closure (\refl{left_swaps} with {$\alpha = \gamma = \sigma$}) we have that $ \nereds{} \cdot 
\ereds{} \subseteq  \ereds{} \cdot  \reds{}^= $. We conclude by 
		 \refl{easy_lp}{.2}.
		\item This is    \refl{linear-swap-sigma}{.1}. 
		\item By definition of substitution. The immediate proof is in Appendix \ref{app:shuffling}.		\qedhere
	\end{enumerate}

\section{Non-Terminating Relations}\label{sec:nonterminating}
In this section we provide  examples of the fact that our technique  does not rest on 
 termination hypotheses. We consider fixpoint operators in both CbN and CbV, which have non-terminating 
reductions. Obviously, when  terms are not restricted by types, the operator  is definable, so the example is  slight 
artificial, but we hope clarifying. 

 There  are also cases where the modules are terminating but the compound system is not;  the technique, surprisingly, 
still works. Accattoli gives various examples based on $\l$-calculi with \textbf{explicit substitutions} in 
\cite{Accattoli12}. 
An insight of this paper---not evident in \cite{Accattoli12}---is that \emph{termination is not needed to lift 
factorization} from the modules to the compound system.

\paragraph{CbN Fixpoint, Head Factorization.}
We first consider  the calculus $\beta Y:=(\Lambda, \redb\cup \red_Y)$ which is defined by Hindley 
\cite{Hindley78}\footnote{Head factorization of $\beta Y$ is easily obtained by a high-level argument, as consequence of 
left-normality, see Terese \cite{Terese}, Ch. 8.5. The point that we want to stress here  is that the validity  of 
\emph{linear swaps} is not limited to \emph{terminating} reduction, and $\beta Y$ provides  a simple, familiar 
example.}.
The reduction $\red_Y$ is  the  contextual closure of the root relation 
$Y\tmp\mapsto_{Y}\tmp (Y\tmp)$. Points 1-3 below are all easily established---details in  \refapp{parY}.

\begin{prop}[Testing head factorization  for $\beta Y$]\label{prop:bY}  $ \redb\cup \red_{Y}$ satisfies head 
factorization:
	\begin{enumerate}
		\item \emph{Head factorization of $\red_Y$}: $\F{\hredx{Y}}{\nhredx{Y}}$.
		
		\item \emph{Root linear swap}: 
		$ \nhredb \cdot \mapsto_{Y} \ \subseteq \  \hredx{Y}\cdot \redb^* $.

		\item \emph{Substitutivity}: $\mapsto_{Y}$ is substitutive.
	\end{enumerate}	
\end{prop}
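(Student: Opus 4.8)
The plan is to read the three displayed points as the three hypotheses of the modular head-factorization test \refprop{test_h}, instantiated with $\redc := \red_Y$. Once they are all verified, \refprop{test_h} immediately yields head factorization of $\redb\cup\red_Y$, i.e. $\F{\hredb\cup\hredx{Y}}{\nhredb\cup\nhredx{Y}}$, with no need to re-prove head factorization of $\redb$. I expect points~2 and~3 to be one-line diagram chases, and point~1 to be the only place where a genuinely non-local technique is required.

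For \textbf{substitutivity} (point~3) I would argue directly from the definition of substitution: since $Y$ is a constant, $(Y\tmp)\subs{x}{\tmq} = Y(\tmp\subs{x}{\tmq})$ and $(\tmp(Y\tmp))\subs{x}{\tmq} = (\tmp\subs{x}{\tmq})(Y(\tmp\subs{x}{\tmq}))$, whence $(Y\tmp)\subs{x}{\tmq}\mapsto_{Y}(\tmp(Y\tmp))\subs{x}{\tmq}$ — the same computation as for $\mapsto_{\oplus}$ in \refsec{ND}. For the \textbf{root linear swap} (point~2) I would take $\tm\nhredb\tmp\mapsto_{Y}\tmq$, so $\tmp = Y\tms$ and $\tmq = \tms(Y\tms)$. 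As the step $\tm\nhredb Y\tms$ is non-head, shape preservation (\reffact{shape}, as spelled out in \reffact{istep}) keeps the head constant $Y$ untouched, forcing $\tm = Y\tms'$ with the $\beta$-step confined to the argument, i.e. $\tms'\redb\tms$. Firing the (head) $Y$-redex gives $\tm = Y\tms'\mapsto_{Y}\tms'(Y\tms')$, and then propagating $\tms'\redb\tms$ into \emph{both} copies yields $\tms'(Y\tms')\redb\tms(Y\tms')\redb\tms(Y\tms)=\tmq$. Hence $\tm\;\hredx{Y}\cdot\redb^{*}\;\tmq$; the duplication of the $\beta$-redex by $\mapsto_{Y}$ is exactly why the right-hand side must be $\redb^{*}$ rather than $\redb^{=}$.

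The crux is point~1, \textbf{head factorization of $\red_Y$ in isolation}, $\F{\hredx{Y}}{\nhredx{Y}}$. Here strong/linear postponement cannot hold directly: just as for $\beta$ in \refex{LP}, a non-head $Y$-step inside the argument of a head $Y$-redex is duplicated when the head redex fires, and one copy can surface as a brand-new \emph{head} redex, so $\LP{\hredx{Y}}{\nhredx{Y}}$ fails. I see two routes. The slick one is to note that $\red_Y$ is orthogonal (the pattern $Y\tmp$ only overlaps with redexes strictly inside $\tmp$) and left-normal, so head factorization is an instance of the standard result for such systems (cf. the footnote and Terese, Ch.~8.5). The self-contained one, in the spirit of \refex{LP}, is to introduce a parallel non-head reduction $\nhparred$ with $\nhparred^{*}=\nhredx{Y}^{*}$, prove the Takahashi-style swap $\nhparred\cdot\hredx{Y}\subseteq\hredx{Y}\cdot\hredx{Y}^{*}\cdot\nhparred$ by induction on the parallel step, and then conclude $\PP{\hredx{Y}}{\nhredx{Y}}$ through \reflemma{PP_characterization1} and \reflem{easy_lp}; this is short precisely because the $Y$-rule is so simple. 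Either way, with point~1 secured the three hypotheses of \refprop{test_h} are met and the proposition follows. I expect this parallel-reduction (or orthogonality) step to be the main obstacle, being the one property of $\red_Y$ that resists a pure one-step diagram chase — consistent with the section's message that the difficulty lies in each module's own factorization, never in combining them.
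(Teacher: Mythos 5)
Your proof takes essentially the same route as the paper's: all three points are fed into \refprop{test_h}, with the root linear swap argued exactly as in the paper (shape preservation forces $\tm = Y\tmq$ with $\tmq \redb \tmp$, then $Y\tmq \hredx{Y} \tmq(Y\tmq) \redb^{*} \tmp(Y\tmp)$, the duplication explaining $\redb^{*}$) and substitutivity by the same one-line computation; for point~1 the paper merely declares the verification ``routine,'' and your two proposed instantiations---left-normality/orthogonality as in Terese Ch.~8.5, which the paper's own footnote cites, or a small Takahashi-style parallel reduction---are both legitimate ways to discharge it. One minor diagnostic slip that does not affect correctness: the residual of a duplicated non-head $Y$-step that resurfaces in \emph{head} position is not what breaks $\LP{\hredx{Y}}{\nhredx{Y}}$, since the resulting sequence $\hredx{Y}\cdot\hredx{Y}\cdot\nhredx{Y}$ still fits $\hredx{Y}^{*}\cdot\nhredx{Y}^{=}$; the genuine obstruction is the case where \emph{both} residuals are non-head (e.g.\ $Y(\lambda y.Yq) \nhredx{Y} Y(\lambda y.q(Yq)) \hredx{Y} (\lambda y.q(Yq))\,(Y(\lambda y.q(Yq)))$, which postpones only to $\hredx{Y}\cdot\nhredx{Y}\cdot\nhredx{Y}$), violating the bound $\nhredx{Y}^{=}$.
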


\paragraph{CbV Fixpoint, Weak Factorization.}
We now consider weak factorization and   a CbV  counterpart of the previous example. 
We follow  Abramsky and McCusker \cite{AbramskyM97}, who study a call-by-value PCF with a fixpoint operator $Z$.
Similarly, we 
extend the CbV $\lam$-calculus with  
their reduction $\red_{\Z}$, which  is the  contextual closure of rule
$ \Z\val\mapsto_{\Z}\lam x. \val (\Z\val)x$ where $\val$ is a value. The    calculus 
$\betav \Z$ is therefore $(\Lambda, \redbv\cup \red_{\Z})$.  Points 1-3 below are all immediate{---details in  \refapp{parY}}.
 
 \begin{prop}[Testing  weak factorization for $\betav {\Z}$]\label{prop:bZ} $ \redb\cup \red_{{\Z}}$ satisfies weak 
factorization:
 	\begin{enumerate}
 		\item \emph{Weak factorization of $\red_{\Z}$}:  {$\F{\wredx{Z}}{\nwredx{Z}}$}.
 				
 		\item \emph{Root linear swap}: 
 	 $ \nwredx{\betav} \cdot  \mapsto_{\Z}   \ \subseteq\   \wredx{\Z}\cdot \redbv^*  $.
	 
	  		\item \emph{Substitutivity}: $\mapsto_{{\Z}}$ is substitutive.
 	\end{enumerate}	
 \end{prop}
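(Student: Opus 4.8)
The plan is to instantiate the test for modular weak factorization (\refprop{test_lw}) with $\ex \defeq \weak$ and $\redc \defeq \red_{\Z}$, the contextual closure of the root rule $\Z\val \mapsto_{\Z} \lam x. \val(\Z\val)x$ (for $\val$ a value). Since $\betav$ is already known to be weakly factorizing, it then remains only to discharge the three numbered hypotheses, none of which speaks about $\betav$ alone. The recurring tool will be shape preservation for internal (non-weak) steps (\reffact{istep_CbV}): a $\nwredx{}$-step is the closure of a rule under a \emph{non-empty} context, hence it preserves the application/abstraction shape of a term; in particular, if such a step reduces a term to a value, its source must already be a value.

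For Point 3 I would check substitutivity of $\mapsto_{\Z}$ directly from the definition of substitution, noting that in CbV the substituted term is always the value fired by a $\betav$-redex, so only substitution of a value $\val'$ is relevant. Then $(\Z\val)\subs{y}{\val'} = \Z(\val\subs{y}{\val'})$, and $\val\subs{y}{\val'}$ is again a value (substituting a value for a variable sends variables, constants and abstractions to values), whence $\Z(\val\subs{y}{\val'}) \mapsto_{\Z} \lam x.(\val\subs{y}{\val'})(\Z(\val\subs{y}{\val'}))x = (\lam x.\val(\Z\val)x)\subs{y}{\val'}$ up to $\alpha$. For Point 2, the root linear swap $\nwredx{\betav}\cdot\mapsto_{\Z} \subseteq \wredx{\Z}\cdot\redbv^*$, I would argue structurally: if $\tm \nwredx{\betav} \tmtwo \mapsto_{\Z} \tmq$ with $\tmtwo = \Z\val$ and $\tmq = \lam x.\val(\Z\val)x$, then since $\Z\val$ is an application shape preservation gives $\tm = \Z b$; as the head constant $\Z$ is atomic and the step is internal, the $\betav$-step lies strictly under an abstraction inside $b$, so $b \nwredx{\betav} \val$, and since $\val$ is a value so is $b$. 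Hence $\tm = \Z b \wredx{\Z} \lam x.b(\Z b)x$ (the empty context is weak), and replaying $b \nwredx{\betav} \val$ in the two copies of $b$ gives $\lam x.b(\Z b)x \redbv^* \lam x.\val(\Z\val)x = \tmq$, closing the swap.

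For Point 1, the weak factorization of $\red_{\Z}$ in isolation, I would appeal to \reflem{easy_lp}.1, for which it suffices to establish $\nwredx{\Z}\cdot\wredx{\Z} \subseteq \wredx{\Z}\cdot\nwredx{\Z}^*$. The governing principle is that an internal $\Z$-step is confined under an abstraction and so can neither create nor be erased by the redex of a subsequent weak step; the weak step can therefore be anticipated. Concretely, the root version $\nwredx{\Z}\cdot\mapsto_{\Z} \subseteq \wredx{\Z}\cdot\nwredx{\Z}^*$ is obtained exactly as in Point 2 (the anticipated step now being itself a $\Z$-step), the residual steps remaining internal because the copies produced by $\mapsto_{\Z}$ land under the freshly introduced $\lambda$; lifting through arbitrary weak contexts via \reflem{left_swaps} preserves this confinement, and \reflem{easy_lp}.1 yields $\F{\wredx{\Z}}{\nwredx{\Z}}$.

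The main obstacle is the \emph{duplicating} nature of the $\Z$-rule: the argument value is copied by each $\Z$-step, so a single internal redex before a swap may become several internal redexes after it. This is exactly why the $\red^=$ forms that sufficed for the non-duplicating rules $\oplus$ and $\sigma$ must here be replaced by the $\redbv^*$ (resp. $\nwredx{\Z}^*$) forms, and it is precisely the reason the linear-swap formulation of \refthm{HR_modularity}—whose right-hand side is $\reda^*$ rather than a single step—goes through smoothly where a naive strong-postponement condition would break. The only other delicate point, the value restriction carried by the $\Z$-redex, is dispatched uniformly by shape preservation (\reffact{istep_CbV}), which guarantees that the internal predecessor of a value is itself a value.
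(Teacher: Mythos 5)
Your proposal is correct and follows essentially the same route as the paper's proof: instantiate the CbV test (\refprop{test_lw}) with $\ex=\weak$, verify point 1 via the linear postponement $\nwredx{\Z}\cdot\wredx{\Z}\subseteq\wredx{\Z}\cdot\nwredx{\Z}^*$ and \reflem{easy_lp}.1, point 2 by shape preservation (\reffact{istep_CbV}) giving $\tm=\Z\valtwo$ with $\valtwo\redbv\val$ and hence $\tm=\Z\valtwo\wredx{\Z}\lam x.\valtwo(\Z\valtwo)x\redbv^*\lam x.\val(\Z\val)x$, and point 3 by direct computation on the definition of substitution. Your added observation that substitutivity need only be checked for value substitutions---so that $\val\subs{y}{\val'}$ is again a value and the $\Z$-redex survives the substitution---is in fact slightly more careful than the paper's one-line computation, which silently assumes this.
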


\section{Further  Applications: Probabilistic Calculi}\label{sec:proba}
In this paper, we  {present} our technique using   examples which are within  the familiar language  of  
$\lam$-calculus.  However 
the core of the technique---\refthm{HR_modularity}---is independent from a specific syntax. It can be used in calculi 
whose objects are richer than $\lam$-terms.  {The probabilistic} $\lam$-calculus is a prime example. 

A  recent line of research \cite{FaggianRonchi,Leventis19} is developing probabilistic calculi  where evaluation is not limited to a deterministic 
strategy.
Faggian and Ronchi della Rocca \cite{FaggianRonchi} 
define two calculi---$\PLambda^\cbv$ and $\PLambda^\cbn$---which model respectively CbV and CbN probabilistic 
higher-order computation,
while being  \emph{conservative extensions} of the  CbV and CbN $\lam$-calculus.  
For both calculi confluence and  factorization (called \emph{standardization} in \cite{FaggianRonchi}) hold. There is however a deep 
\emph{asymmetry} between the two results.
\emph{Confluence} is neatly proved via Hindley-Rosen technique, by relying on the fact  the $\beta$  (resp. $\betav$) 
reduction is confluent.
The proof of \emph{factorization} is instead    laborious: the authors define  a notion of parallel reduction  for the new calculus, and then adapt Takahashi's technique \cite{Takahashi95}. 
 Leventis work \cite{Leventis19} on  call-by-name probabilistic  $ \lam $-calculus suffers a
 similar problem; he proves factorization 
by relying  on the  finite developments method, but the proof is equally laborious. 

Our technique allows for \emph{a neat, concise  proof of factorization}, which reduces  to only  testing a single  
linear swap, {with no need of parallel reductions or finite developments}. To prove factorization turns out to be in 
fact \emph{easier} than proving confluence. 
The technical details---that is, the definition of the calculus and the proof---are in   Appendix \ref{app:proba}.

	\section{Conclusions and Discussions}\label{sec:conclusion}\label{sec:further}
\paragraph{Summary.}
A well-established  approach to model higher-order computation with advanced  features,  is  starting from 
the call-by-name or call-by-value $\lam$-calculus, and enrich it with  new  constructs. 
We   propose  a sharp technique to establish  factorization of  a compound system  
from factorization of its components. As we point out, the natural transposition of Hindley-Rosen technique for 
confluence does not work here, because the obtained  conditions are---in general---not validated by extensions of the 
$\l$-calculus. The turning point is the identification of an alternative sufficient condition, called linear swap. 
Moreover, on common factorization schemes such as head or weak factorization, our technique reduces to a straightforward 
test. Concretely, we apply  our technique to various examples, 
stressing its independence from common simplifying hypotheses such as confluence, orthogonality, and termination.

\paragraph{Black Box and Elementary Commutations.} A key feature of our 
 technique is to  take factorization of the core relations---the modules---as \emph{black boxes}. The 
focus is then on the analysis of the \emph{interaction} between the modules. 
The benefit  is both practical and conceptual:  
  we disentangle the components---and the issues---under study.
 This is especially appealing when dealing with  extensions of the $\l$-calculus, built on top of $\beta$ or $\betav$ 
reduction, because often most of the difficulties come from the higher-order component, that is, $\beta$ or $\betav$ 
itself---whose factorization is {non-trivial to prove but known to hold}---rather than from the added 
features.

Good illustrations of these points are  our proofs of factorization. We stress that:
\begin{itemize}
	\item the proof of factorization
	of the compound system is independent from the specific technique (finite developments, parallel reduction, etc.) used 
to  prove
	 factorization  of the
	modules. 

\item  to verify good interaction  between the modules, it {often} suffices 
to check  \emph{elementary, local commutations}---the linear swaps. 
\end{itemize}
These features  provide a   neat proof-technique  \emph{supporting the development and 
 the analysis} of complex compound systems. 

\paragraph{Conclusions.}
When one wants to model  new computational features,
the calculus is often not given, 
but it has to be designed, in such a way that it satisfies confluence and factorization.  The process of 
\emph{developing} the calculus and the process of \emph{proving} its  good properties,  go hand in hand. If the latter 
is difficult and prone to errors, the former also is.
The black-box approach makes our   technique efficient and accessible also to working scientists who are not specialists 
in rewriting. {And even}  for the  $\lambda$-calculus expert  who  masters    tools such as  finite developments, labeling or 
parallel reduction, it  still  appears desirable to limit  the amount of difficulties. The more advanced and complex are 
the computational systems we study, the more crucial it is to have  reasoning tools as simple to use as possible.



\bibliography{biblio_rewrite}
	
\newpage
\appendix
\section*{APPENDIX}

\refapp{omitted-proofs} collects omitted  details of proofs. In \refapp{proba} we illustrate  a more 
\emph{advanced example} of application of our technique, namely to the \emph{probabilistic $\lam$-calculus}.

\section{Appendix:  Omitted Proofs}
\label{app:omitted-proofs}

\subsection{Head Factorization (\refsec{lambda})}
\label{app:head}
\paragraph{Consequences of \reffact{shape}.}
Note that the empty context $\hole{}$ is a \emph{head context}. Hence for non-head steps if $\cc\hole{\tmr}\nhred 
\cc\hole{\tmr'}$ necessarily  $\cc\not=\hole{}$, and  \reffact{shape} always applies. The following  \emph{key 
property} 
holds. 
\begin{property}[Shape Preservation]\label{fact:istep}  By  \reffact{shape}, $\nhredc$ preserves the shapes of terms:
	\begin{enumerate}
		\item \label{fact:istep-var} \emph{Atoms}: there is no $\tm$ such that $\tm \nhredc a$, for any variable or constant 
 $a$.
		\item\label{fact:istep-abs} $\tm\nhredc \lam x. \tmu_1$ implies $\tm = \lam x. \tm_1$ and $\tm_1\nhredc \tmu_1$.
		\item\label{fact:istep-app} $\tm\nhredc \tmu_1\tmu_2$ implies  $\tm = \tm_1\tm_2$, with 
		$\tm_1\nhredc 
\tmu_1$ (and $\tm_2=\tmu_2$), or 
$\tm_2\redc \tmu_2$ (and $\tm_1=\tmu_1$).
\item {\emph{Redex}: if $\tm \nhredc \tmu$, and  $\tmu$ is a $\beta$-redex, then $\tm$ is a $\beta$-redex.\\
Similarly, if  $\tm \nhredc \tmu$ and $\tmu$ has shape $K \tm_1...\tm_k$ ($K$ a constant),  $\tm$ has the same shape.}
	\end{enumerate}
\end{property}
{Point 4. follows from points 1. to 3. Note that, in particular,  if $\tmu$ is a $\oplus$-redex,  so is  $\tm$.}

\paragraph{Head Factorization, Modularly.}

\begin{lemma*}[\ref{l:head_swaps}, {Lifting} root linear swaps]
Let $\rreda,\rredc$ be  root relations  on $\Lambda$. 
	\begin{enumerate}
		\item $\nhreda \cdot \rredc \subseteq  {\hredc} \cdot \reda^* $ implies $\LS \nhreda \hredc$.
		\item Similarly, $\nhreda \cdot \rredc \subseteq  {\hredc} \cdot \reda^= $ implies  $\nhreda \cdot \redc \subseteq  {\hredc} \cdot \reda^= $.
	\end{enumerate}
\end{lemma*}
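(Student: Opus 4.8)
The plan is to prove both statements by a single structural induction on the head context that witnesses the $\gamma$-step, thereby reducing an arbitrary head swap to the root swap that is assumed. Recall that $\tms\hredc\tmu$ means $\tms=\hh\hole{\tmr}$ and $\tmu=\hh\hole{\tmr'}$ with $\tmr\rredc\tmr'$ and $\hh$ a head context, and that a head context is obtained by peeling one outer constructor at a time: $\hh$ is $\hole{~}$, or $\lam x.\hh'$, or $\hh'\,\tmm$ (this last form only once all leading abstractions have been removed), and in each case the residual $\hh'$ is again a head context of strictly smaller size. I would therefore induct on $\hh$, the hypothesis being used only in the base case.

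In the base case $\hh=\hole{~}$ the head step is exactly a root step, $\tms=\tmr\rredc\tmr'=\tmu$, so a factor $\tm\nhreda\tms\hredc\tmu$ is literally $\tm\,(\nhreda\cdot\rredc)\,\tmu$; the assumed root swap $\nhreda\cdot\rredc\subseteq\hredc\cdot\reda^*$ for part~1, respectively $\nhreda\cdot\rredc\subseteq\hredc\cdot\reda^=$ for part~2, closes this case at once. All remaining work is bookkeeping of contextual closure.

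For the inductive cases I would push the preceding $\nhreda$-step through shape preservation (\reffact{istep}). If $\hh=\lam x.\hh'$ then $\tms$ is an abstraction, so the abstraction clause of \reffact{istep} forces $\tm=\lam x.\tm_1$ with $\tm_1\nhreda\hh'\hole{\tmr}$; applying the induction hypothesis to $\tm_1\nhreda\hh'\hole{\tmr}\hredc\hh'\hole{\tmr'}$ and closing the result under $\lam x.$ — which sends head steps to head steps and is a contextual closure for $\reda^*$ and $\reda^=$ — gives the claim. If $\hh=\hh'\,\tmm$ then $\tms$ is an application and the application clause of \reffact{istep} gives two cases: either the $\nhreda$-redex lies in the function, where I apply the induction hypothesis to the function and re-append $\tmm$ on the right (again a head context on the left, a contextual closure on the right); or it lies in the argument, so $\tm=(\hh'\hole{\tmr})\,\tmm'$ with $\tmm'\reda\tmm$, and then I first fire the head $\gamma$-redex, $\tm\hredc(\hh'\hole{\tmr'})\,\tmm'$, and perform the single residual step $\tmm'\reda\tmm$ in the argument afterwards. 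Part~1 stays within $\reda^*$ and part~2 within $\reda^=$; the two proofs are identical up to replacing $\reda^*$ by $\reda^=$ everywhere.

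The one delicate point is this last sub-case: I must check that firing the head redex before the argument step keeps both the head redex and the head context intact — which holds because $\rredc$ acts on the head subterm alone and does not inspect the arguments to which it is applied — and, for part~2, that exactly one trailing $\alpha$-step is produced, i.e.\ the $\reda^=$ bound is respected because the residual step happens entirely inside the single, otherwise untouched argument $\tmm$, with no duplication. This yields the lifted head swaps $\nhreda\cdot\hredc\subseteq\hredc\cdot\reda^*$, that is $\LS\nhreda\hredc$ (part~1), and $\nhreda\cdot\hredc\subseteq\hredc\cdot\reda^=$ (part~2), which is precisely the form in which the lemma is later invoked.
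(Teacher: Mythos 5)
Your proof is correct and matches the paper's own argument essentially step for step: the same induction on the head context of the $\gamma$-step, with the assumed root swap closing the base case $\hh=\hole{~}$, shape preservation (\reffact{istep}) driving the abstraction and application cases, and the argument subcase resolved by firing the head $\gamma$-redex first and performing the single residual $\alpha$-step afterwards---which is also precisely where the paper locates the only difference between parts 1 and 2. Note that your reading of part 2's conclusion as $\nhreda\cdot\hredc\ \subseteq\ \hredc\cdot\reda^=$ is the intended one (it is the form the paper's induction actually establishes and the form invoked in the non-deterministic calculus), since with an arbitrary middle step $\redc$ the literal statement would be too strong: two parallel non-head redexes, as in $x(\oplus pq)(\oplus ab)$ with $\alpha=\gamma=\oplus$, admit no head $\gamma$-step at all.
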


\begin{proof}\textbf{(1).} We prove that $\nhreda \cdot \rredc \subseteq  {\hredc} \cdot \reda^* $ implies $\tm \nhreda \tmu \hredc \tms \subseteq \tm \hredc \cdot \reda^* 
\tms$ ,
	 by induction on  the  head context  $\hh= \lam x_1 \dots \lam x_k.\hole{~} \tm_1\dots \tm_n$ of  the reduction 
$\tmu\hredc \tms$.
	\begin{enumerate}
		\item[A.] $\tmu$ is the $\gamma$-redex (\ie $ \hh=\hole{} $, $k=0$, $n=0$).  The claim holds by assumption.
		
		\item[B.] 	 $\tmu = \lam x.\tmu_1$ (\ie, $ k>0$). Immediate by shape preservation 
(\reffact{istep}.\ref{fact:istep-abs}) and the \ih.
		
		\item[C.] $\tmu=\tmu_1\tmu_2$ (\ie, $k=0$, $n>0$). Then 
		$\tmu_1\hredc \tmu_1' $ and $\tmu_1\tmu_2\hredc  \tmu_1'\tmu_2=\tms$.  
		By shape preservation (\reffact{istep}.\ref{fact:istep-app}), there are two  cases. \\
		Case (i): $\tm:= \tm_1\tmu_2$.	By \ih, 
		$\tm_1\hredc \tm_1' $ and $\tm_1'  \reda^* \tmu_1'$. 
		Hence, $\tm_1\tmu_2\hredc  \tm_1'\tmu_2 \reda^* \tmu_1'\tmu_2 =\tms$. \\
		Case (ii): $\tm:= \tmu_1\tm_2$. Immediate, because  $\tm=\tmu_1\tm_2\hredc \tmu_1'\tm_2\reda \tmu_1'\tmu_2 $.
	\end{enumerate}
\textbf{(2).} The proof is similar. The only  minimal difference  is  case (ii) in point (C).
\end{proof}

\begin{lemma*}[\ref{l:swap_after_b}, Swap with $\hredb$ ] If  $\rredc$ is   substitutive then 
	$\LS \nhredc \hredb$ holds. 
\end{lemma*}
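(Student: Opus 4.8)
The plan is to reduce the claim to a single root-level swap and then discharge that swap with the substitution lemmas. First I would apply the lifting lemma \reflem{head_swaps}.1, but with its two rules instantiated in the order opposite to the one suggested by the names: I take the role of ``$\alpha$'' to be the new rule $\gamma$ and the role of the contracted head rule to be $\beta$. In this reading the lemma states that $\nhredc \cdot \rredb \subseteq \hredb \cdot \redc^*$ implies $\LS \nhredc \hredb$. Hence it suffices to prove the root swap $\nhredc \cdot \rredb \subseteq \hredb \cdot \redc^*$: whenever $\tm \nhredc \tmu$ and $\tmu$ is a $\beta$-redex fired at the root, the two steps can be reorganized into one head $\beta$-step followed by a (possibly empty) sequence of $\redc$-steps. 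The leading abstractions and the trailing arguments of a general head context are then absorbed for free by the induction already carried out inside \reflem{head_swaps}, so no further context analysis is needed.

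For the root swap I would start from $\tm \nhredc \tmu$ with $\tmu = (\lam x.\tmp)\tmq \rredb \tmp\subs{x}{\tmq} =: \tms$. Since $\tmu$ is a $\beta$-redex and $\nhredc$ preserves the shape of terms (\reffact{istep}), the term $\tm$ is also a $\beta$-redex, say $\tm = (\lam x.\tmp')\tmq'$; the same shape analysis splits the situation into the two cases (i) $\tmp' \nhredc \tmp$ with $\tmq' = \tmq$, and (ii) $\tmp' = \tmp$ with $\tmq' \redc \tmq$, according to whether the contracted $\gamma$-redex lies in the body or in the argument. In both cases I first fire the head $\beta$-redex of $\tm$, getting $\tm \hredb \tmp'\subs{x}{\tmq'}$, and then reconnect to $\tms = \tmp\subs{x}{\tmq}$ using $\redc$-steps on the substituted term.

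The only real content is the substitution bookkeeping, and it is exactly here that the reflexive--transitive closure $\redc^*$ on the right becomes indispensable. In case (i) the $\gamma$-redex sits in the function body, so substitutivity of $\redc$ --- which holds because $\rredc$ is substitutive (\reffact{subs}) --- yields a single step $\tmp'\subs{x}{\tmq} \redc \tmp\subs{x}{\tmq}$; this is the unique place where the hypothesis of the lemma is used. In case (ii) the $\gamma$-redex sits in the argument, which the $\beta$-step may duplicate or erase depending on the number of occurrences of $x$ in $\tmp$, so its residuals form a reduction $\tmp\subs{x}{\tmq'} \redc^* \tmp\subs{x}{\tmq}$ of arbitrary finite length (the argument part of \reffact{subs}, needing no substitutivity). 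This duplication is the conceptual crux and the reason the swap cannot be tightened to $\redc^=$ on the right --- it is the very phenomenon exhibited in \refex{SPfailure} --- yet it is harmless here, since the linear swap asks only for $\redc^*$. Notably no step invokes any property of $\beta$ itself, in line with the remark that the modular head-factorization test imposes no condition on $\redb$ alone.
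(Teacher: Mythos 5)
Your proof is correct and follows essentially the same route as the paper's: both reduce the claim to the root swap $\nhredc \cdot \rredb \ \subseteq\ \hredb \cdot \redc^*$ via \reflem{head_swaps}, split by shape preservation (\reffact{istep}) into the body case and the argument case, and discharge them with the two points of \reffact{subs}---one $\redc$-step by substitutivity when the $\gamma$-redex is in the body, and $\redc^*$ when it is in the (possibly duplicated or erased) argument. Your added remarks on why $\redc^*$ cannot be sharpened to $\redc^=$ and on the absence of any hypothesis on $\beta$ are accurate but not part of the paper's proof.
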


\begin{proof}First, note that by \reffact{subs}, we have that $\redc$ is substitutive. 
	We prove  ($\tm\nhredc \tmu\hredb  \tms$ implies $\tm\hredb  \cdot \redc^* \tms$) by using \reflem{head_swaps}.
	
	Let  $\tmu=(\lam x. \tmu_1)\tmu_2\mapsto_{\beta} \tmu_1\subs{x}{\tmu_2} $. By   \reffact{istep}, either (i)
	$\tm=(\lam x. \tmp )\tmu_2$ and $\tmp\nhredc \tmu_1 $ or (ii)
	$\tm=(\lam x. \tmu_1)\tmq$ and $\tmq\redc \tmu_2 $. 
	Case (i): $(\lam x. \tmp)\tmu_2 \hredb  \tmp\subs{x}{\tmu_2}$. By   \reffact{subs} (point 1)  $\tmp\subs{x}{\tmu_2} \redc 
\tmu_1\subs{x}{\tmu_2}$.
	Case (ii): 
	$(\lam x. \tmu_1)\tmq \hredb  \tmu_1\subs{x}{\tmq}$.  By using \reffact{subs}(point 2),
	$\tmu_1\subs{x}{\tmq}\redc^* \tmu_1\subs{x}{\tmu_2}$.
\end{proof}

\subsection{Call-by-Value $\lam$-Calculus (\refsec{CbV})}\label{app:CbV}
\paragraph{Consequences of \reffact{shape}.}
The empty context $\hole{}$ is  \emph{both a left   and a weak context}. Hence  \reffact{shape} always applies to  
non-left and non-weak steps. Consequently:

\begin{property}[Shape Preservation]\label{fact:istep_CbV} Fixed 
	 $\ex\in \{\left, \weak \}$,  
	 $\neredx{\gamma}$ preserves the shape of terms:
	\begin{enumerate}
		\item \label{l:istep_CbV-var} \emph{Atoms}: there is no $\tm$ such that $\tm \neredc a$, for any variable or 
constant  $a$;
		\item\label{l:istep_CbV-lambda} \emph{Abstraction}:  $\tm\neredc \lam x. \tmu_1$ implies $\tm = \lam x. \tm_1$ and 
$\tm_1\redc \tmu_1$;
		\item\label{l:istep_CbV-app} \emph{Application}: $\tm\neredc \tmu_1\tmu_2$ implies  $\tm = \tm_1\tm_2$, with either 
(i) $\tm_1\neredc \tmu_1$ and $\tm_2=\tmu_2$, or (ii) $\tm_2\redc \tmu_2$ and  $\tm_1=\tmu_1$. 
{Moreover, 	in case (ii): if $\val\tm_2\nlred \val\tmu_2 $ ($\val$ a value), then $ \tm_2\nlred \tmu_2 $;}  if  $\tm_1\tm_2\nwredx{\gamma} \tmu_1\tmu_2$, then $\tm_2\nwredx{\gamma} \tmu_2$ (always).
\item  {\emph{Redex}: if $\tm \neredc \tmu$, and $\tmu$ is a $\beta_v$-redex, then $\tm$ also is (as a consequence of points 1. to 3.)}
	\end{enumerate}
\end{property}

\paragraph{Left and Weak Factorization, Modularly.} To prove \refprop{test_lw} we proceed similarly to   \refsec{modular_head}.
\begin{lemma}[Root linear swaps]\label{l:left_swaps}  Let $\rreda,\rredc$ be root  relations  on $\Lambda$. 
\begin{enumerate}
	\item[i.] 	If		$ \tm \nlreda \tmu \rredc \tms \ \subseteq\ \tm \lredc \cdot \reda^* \tms$  then $\LS \nlreda 
\lredc$.
	\item[ii.] If		$ \tm \nwredx{\alpha} \tmu \rredc \tms \ \subseteq\ \tm \wredx{\gamma} \cdot \reda^* \tms$  then $\LS 
{\nwredx{\alpha}} \lredc$.
\end{enumerate}
Similarly	 $\nereda \cdot \rredc \subseteq  {\eredc} \cdot \reda^= $ implies  $\nereda \cdot \redc \subseteq  {\eredc} \cdot \reda^= $, 
with $\ex\in \{\left,\weak\}$
\end{lemma}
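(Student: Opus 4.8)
The plan is to follow the proof of \reflem{head_swaps} almost verbatim, replacing the induction on \emph{head} contexts with an induction on \emph{left} (resp. \emph{weak}) contexts and invoking the shape-preservation clauses of \reffact{istep_CbV} at every application case. For part (i), I would prove the one-step statement $\tm \nlreda \tmu \lredc \tms$ implies $\tm \lredc \cdot \reda^* \tms$, which is exactly the statement $\LS \nlreda \lredc$, by induction on the left context $\ll$ witnessing $\tmu \lredc \tms$. The base case $\ll = \hole{~}$ is precisely the root swap assumed in the hypothesis. Since left contexts are generated by $\ll ::= \hole{~} \mid \ll\,\tm \mid \val\,\ll$ (no descent under $\lambda$), only two inductive cases remain. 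If $\ll = \ll'\tmr$, then $\tmu = \tmu_1\tmu_2$ with $\tmu_1 \lredc \tmu_1'$; \reffact{istep_CbV} splits $\tm \nlreda \tmu_1\tmu_2$ into a non-left step in the function (apply the \ih to $\tm_1 \nlreda \tmu_1 \lredc \tmu_1'$ and rebuild the application) or a $\reda$-step in the argument (fire $\lredc$ on the function, then the $\reda$-step on the argument); both reassemble to $\tm \lredc \cdot \reda^* \tms$.

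The delicate case is $\ll = \val\,\ll'$, i.e. $\tmu = \val\,\tmu_2$ with $\val$ a value and $\tmu_2 \lredc \tmu_2'$: to fire $\lredc$ in the argument I need the function slot to remain a value and any residual $\alpha$-activity there to stay non-left. This is precisely what the refined clauses of \reffact{istep_CbV} deliver, and I expect it to be the main obstacle, being the only point where the definition of left context genuinely interacts with shape preservation. Concretely, if the non-left redex of $\tm \nlreda \val\,\tmu_2$ sits in the function, the atom clause rules out the degenerate subcase and the abstraction clause forces the reduct to remain an abstraction, hence a value $\val'$ with $\val' \reda \val$; then $\tm = \val'\,\tmu_2 \lredc \val'\,\tmu_2' \reda \val\,\tmu_2' = \tms$. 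If instead it sits in the argument, the \emph{moreover} clause guarantees $\tm_2 \nlreda \tmu_2$, so the \ih applies to $\tm_2 \nlreda \tmu_2 \lredc \tmu_2'$ and we prefix the value $\val$.

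Part (ii) runs the same induction over weak contexts $\ww ::= \hole{~} \mid \ww\,\tm \mid \tm\,\ww$, establishing $\nwredx{\alpha} \cdot \wredx{\gamma} \subseteq \wredx{\gamma} \cdot \reda^*$. It is in fact smoother than part (i): weak contexts impose no value constraint on the function, so the argument-descent case $\ww = \tmr\,\ww'$ needs only the \emph{moreover} clause (a non-weak step of an application whose redex lies in the argument forces the argument step to be non-weak) and no value bookkeeping. Finally, for the closing \emph{similarly} claim I would rerun the identical inductions while tracking $\reda^{=}$ in place of $\reda^{*}$: each case emits at most one trailing $\reda$-step — either the \ih supplies a $\reda^{=}$ and the reassembly through a context adds none, or the reassembly adds a single $\reda$-step and the \ih is not used — so the two are never composed and $\reda^{=}$ is preserved, yielding $\nereda \cdot \redc \subseteq \eredc \cdot \reda^{=}$ for $\ex \in \{\left,\weak\}$.
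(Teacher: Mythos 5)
Your proof is correct and is essentially the paper's own: the same induction on the left (resp.\ weak) context of the $\gamma$-step, with the root swap as base case, \reffact{istep_CbV} splitting the application cases, the atom/abstraction clauses forcing $\tm_1$ to remain a value in the case $\ll = \val\,\ll'$, the \emph{moreover} clauses enabling the induction hypothesis in the argument-descent cases, and the same one-trailing-step bookkeeping for the $\reda^=$ variant (each branch either uses the \ih\ or emits a single $\reda$-step, never both). One caveat on the closing clause: what your rerun induction (and the paper's ``similar'' proof) actually establishes is $\nereda \cdot \eredc \ \subseteq\ \eredc \cdot \reda^=$ with the second step \emph{essential}---which is also how the lemma is later invoked, \eg $\nereds{} \cdot \ereds{} \ \subseteq\ \ereds{} \cdot \reds{}^=$ in the shuffling calculus---whereas the $\redc$ printed in the statement cannot be right (a non-essential $\gamma$-step, say under an abstraction where no left or weak $\gamma$-step exists, cannot be swapped to an essential one), so your final conclusion should be stated with $\eredc$ rather than $\redc$, matching what your induction over essential contexts in fact covers.
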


\proof \textbf{(i.)} We prove that 	$ \tm \nlreda \tmu \lredc \tms \ \subseteq\ \tm \lredc \cdot \reda^* \tms$, by induction on the  context $\ll$
 of  $\tmu$.
	\begin{enumerate}
		\item $\ll=\hole{}$, \ie  $\tmu \mapsto_\gamma \tms$.  
		The claim holds by hypothesis.
		
		\item $\ll=\ll \tmu_2$,  \ie $\tmu = \tmu_1\tmu_2 \lredc \tms_1\tmu_2 = \tms$ with $\tmu_1 \lredc \tms_1$. 
		By \reffact{istep_CbV}.\ref{l:istep_CbV-app} $\tm = \tm_1\tm_2$ and 
		\begin{enumerate}
			\item either $\tm_1\nlreda \tmu_1$ (and $\tm_2=\tmu_2$); then, by \ih, $\tm_1 \lredc \cdot \reda^* \tms_1$, so 
$\tm= \tm_1 \tmu_2  \lredc \cdot \reda^* \tms_1\tmu_2 $;
			\item or $\tm_2\reda \tmu_2$ (and $\tm_1 = \tmu_1$), so $\tm= \tmu_1\tm_2 \lredc \tms_1\tm_2 \reda 
\tms_1 \tmu_2 $.
		\end{enumerate}
		
		\item $\ll=\val \ll $, \ie $\tmu = \val\tmu_2 \lredc \val\tms_2 = \tms$ with $\tmu_2 \lredc \tms_2$. Then $\tm = \tm_1\tm_2$ and by \reffact{istep_CbV}.\ref{l:istep_CbV-app}
\begin{enumerate}
		\item either $\tm_1 \nlreda \val$ (and $\tm_2 = \tmu_2$).
		 Since $\val$ is a value,
	by \reffact{istep_CbV}.\ref{l:istep_CbV-var}-\ref{l:istep_CbV-lambda}, $\tm_1$ is also a value and so 
 $\tm=\tm_1\tmu_2 \lredc \tm_1\tms_2 \reda \val \tms_2$.
			\item or  $\tm_2\nlreda \tmu_2$ ($\tm_1=\val$). By \ih, $\tm_2 \lredc \cdot \reda^* \tms_2$, so 
$\val \tm_2  \lredc \cdot \reda^*\val\tms_2 $;		
		\end{enumerate}
	\end{enumerate}
\textbf{(ii.)} The proof of (ii.) is similar, but simpler. Case   $\ww=\hole{}$ is the same. Case  $\ww \tmu_2$ is 
exactely like $\ll \tmu_2$,
and case $\tmu_1 \ww$ is symmetric to case (2).

The proof of the last claim is similar.
\qedhere

\begin{lemma}[Swap with $\eredbv$]\label{swap_after_bv} If  $\rredc$ is  substitutive
	then 	 $\LS {\neredc} {\eredbv}$, for $\ex\in \{\left, \weak\}$.
\end{lemma}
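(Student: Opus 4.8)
The statement $\LS{\neredc}{\eredbv}$ unfolds to $\neredc \cdot \eredbv \subseteq \eredbv \cdot \redc^*$, so this is the call-by-value analogue of \reflem{swap_after_b}, and I would follow the same two-step strategy. First, since $\rredc$ is substitutive, \reffact{subs} yields that its contextual closure $\redc$ is substitutive as well. Second, I would use the lifting lemma for the essential relation: by \reflem{left_swaps} (read with the essential rule being $\betav$, i.e. instantiating the lemma's $\gamma := \betav$ and its $\alpha$ with our rule $\gamma$), it suffices to establish the swap only at the \emph{root} of the $\betav$-step, namely to prove
\[ \tm \neredc \tmu \rootbv \tms \quad \text{implies} \quad \tm \eredbv \cdot \redc^* \tms, \qquad \ex \in \{\left,\weak\}. \]
This reduces an apparently context-dependent statement to a single combinatorial case, exactly as in the head-reduction development of \refsec{modular_head}.

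For the root case I would write $\tmu = (\lam x.\tmu_1)\val \rootbv \tmu_1\subs x\val = \tms$ with $\val$ a value, and analyse the non-essential predecessor $\tm$ of the redex $\tmu$ using shape preservation (\reffact{istep_CbV}, point \ref{l:istep_CbV-app}). This yields $\tm = \tm_1\tm_2$ with two subcases. In case (i), $\tm_1 \neredc \lam x.\tmu_1$ and $\tm_2 = \val$; by point \ref{l:istep_CbV-lambda} then $\tm_1 = \lam x.\tmp$ with $\tmp \redc \tmu_1$, so $\tm = (\lam x.\tmp)\val$ is still a $\betav$-redex (the argument is unchanged, hence a value) and, since the empty context is both left and weak, $\tm \eredbv \tmp\subs x\val$; substitutivity of $\redc$ (\reffact{subs}, point 1) closes the case with $\tmp\subs x\val \redc \tmu_1\subs x\val = \tms$.

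The delicate case is (ii), where $\tm_1 = \lam x.\tmu_1$ and the step occurs in the argument, $\tm_2 \redc \val$. The only genuinely call-by-value obstacle is that, for $\tm$ to still be a $\betav$-redex, I must argue that $\tm_2$ is again a value. Here I would use that the argument substep is itself non-essential: this follows from the ``Moreover'' clause of \reffact{istep_CbV} (point \ref{l:istep_CbV-app})---directly for $\ex = \weak$, and for $\ex = \left$ because $\tm_1 = \lam x.\tmu_1$ is a value---so that in fact $\tm_2 \neredc \val$. Shape preservation then forbids $\val$ from being an atom (point \ref{l:istep_CbV-var}: a variable or constant has no non-essential predecessor), forcing $\val = \lam y.\val_1$ and hence, by point \ref{l:istep_CbV-lambda}, $\tm_2 = \lam y.\tm_2'$, which is a value. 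Thus $\tm = (\lam x.\tmu_1)\tm_2$ is a $\betav$-redex, $\tm \eredbv \tmu_1\subs x{\tm_2}$, and substitutivity in the argument (\reffact{subs}, point 2) gives $\tmu_1\subs x{\tm_2} \redc^* \tmu_1\subs x\val = \tms$.

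This is precisely where the proof departs from the call-by-name case of \reflem{swap_after_b}: there the argument of the redex may change freely, whereas in CbV its value-status must be preserved for the root $\betav$-redex to survive the swap, and it is exactly shape preservation of non-essential steps that guarantees this. I therefore expect case (ii)---and the bookkeeping of which substep remains non-essential for each choice of $\ex$---to be the main point requiring care; everything else is the routine substitutivity/contextual-closure machinery already used for $\hredb$.
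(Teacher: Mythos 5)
Your proof is correct and follows essentially the same route as the paper's: both reduce to the root case via \reflem{left_swaps}, split into the two shape-preservation subcases of \reffact{istep_CbV}, and close each with the corresponding point of \reffact{subs}. The only difference is that where the paper simply cites \reffact{istep_CbV}.4 (redex preservation) to conclude that the argument remains a value in case (ii), you inline its derivation from points 1--3 and the ``Moreover'' clause --- a more detailed but equivalent justification.
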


\proof We prove $\LS {\nlredc} {\lredbv}$, the other swap is similar.
	By \reflem{left_swaps}, it is enough to prove that $\tm\nlredc \tmu \mapsto_{\betav}  \tms$ implies $\tm \lredbv  
\cdot \redc^* \tms$. We use \reffact{subs} (substitutivity).
	Let  $\tmu=(\lam x. \tmu_1)\val \mapsto_{\betav} \tmu_1\sub{x}{\val} =\tms$. 
	By  \reffact{istep_CbV} (3. and 4.) we have:
	\begin{enumerate}
			\item either $\tm=(\lam x. \tm_1 )\val$ and $\tm_1 \redc \tmu_1 $; thus, $\tm = (\lam x. \tm_1)\val \lredbv  
		\tm_1\sub{x}{\val} \redc \tmu_1\sub{x}{\val} = \tms$, where the $\redc$ step takes place by 
		\reffact{subs}.\ref{fact:subs-function} since $\redc$ is substitutive.
		\item or  $\tm=(\lam x. \tmu_1)\valtwo$ where $\valtwo \redc \val$ and $\valtwo$ is a value; hence, $\tm = 
(\lam x. \tmu_1)\valtwo \lredbv \tmu_1\sub{x}{\valtwo} \redc^* \tmu_1\sub{x}{\val}$, where the $\redc$ steps take place 
by \reffact{subs}.\ref{fact:subs-argument}.
	\qedhere
	\end{enumerate}

\subsection{The Shuffling Calculus (\refsec{shuffling})}\label{app:shuffling}


\begin{property}[Values are closed under substitution]
	\label{fact:value-sub}
	If $\val$ and $\valtwo$ are values, so is $\val\sub{x}{\valtwo}$.
\end{property}

\begin{lemma*}[\refl{linear-swap-sigma},	Root linear swaps] Let  $\gamma\in \{\betav,\sigma\}$.
\begin{enumerate}
		\item 	If $\tm \nlredx{\gamma} \tmtwo \topreds{i} \tmthree$ then  $\tm \topreds{i}  \cdot  \red_{\gamma} \tmthree$, 
for $i \in \{1,3\}$.
	\item 	If $\tm \nwredx{\gamma} \tmtwo \topreds{i} \tmthree$ then  $\tm \topreds{i}  \cdot  \red_{\gamma} \tmthree$, 
for $i \in \{1,3\}$.
\end{enumerate}
The properties above hold for $\redc $  contextual closure of any rule $\rredc$.
\end{lemma*}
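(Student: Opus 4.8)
The plan is to reduce everything to a case analysis on where the contracted $\gamma$-redex sits inside the $\sigma_i$-redex $\tmtwo$. Since $\tmtwo\topreds{i}\tmthree$ is a \emph{root} step, $\tmtwo$ is literally an instance of the left-hand side of the $\sigma_i$-rule, and I would use shape preservation (\reffact{istep_CbV}) to transfer this shape back along the non-left (resp.\ non-weak) step $\tm\nlredx{\gamma}\tmtwo$ (resp.\ $\tm\nwredx{\gamma}\tmtwo$): in each case $\tm$ turns out to be again a $\sigma_i$-redex, with a single $\gamma$-redex contracted in one of its immediate components. The swap is then uniform — fire the root $\sigma_i$-step on $\tm$ first, then replay the same $\gamma$-step (possibly relocated by the rule) — and it always costs exactly one $\red_{\gamma}$-step, since $\gamma$ is a contextual closure and hence stable under the rearrangement $\sigma_i$ performs. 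This matches the shape $\topreds{i}\cdot\red_{\gamma}$ required by the statement. The two points (non-left and non-weak) differ only in which clauses of \reffact{istep_CbV} are invoked, so I would treat them together.

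For $\sigma_1$, write $\tmtwo=(\la{x}\tmp)\tmq\tmr$ with $x\notin\fv{\tmr}$, so $\tmthree=(\la{x}(\tmp\tmr))\tmq$. Reading $\tmtwo$ as $((\la{x}\tmp)\tmq)\tmr$ and peeling off the applications with the application/abstraction clauses of \reffact{istep_CbV}, I would obtain $\tm=((\la{x}\tmp')\tmq')\tmr'$ where exactly one of the components has been $\gamma$-reduced, \ie $\tmp'\red_{\gamma}\tmp$, or $\tmq'\red_{\gamma}\tmq$, or $\tmr'\red_{\gamma}\tmr$ (and the other two are unchanged). In every case $\tm$ is again a $\sigma_1$-redex and $\tm\topreds{1}(\la{x}(\tmp'\tmr'))\tmq'\red_{\gamma}\tmthree$, replaying the step in the body, in the argument, or in the relocated subterm, respectively.

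For $\sigma_3$, write $\tmtwo=\val((\la{x}\tmp)\tmq)$ with $\val$ a value and $x\notin\fv{\val}$, so $\tmthree=(\la{x}(\val\tmp))\tmq$. Again \reffact{istep_CbV} keeps the pattern intact: $\tm=\val'((\la{x}\tmp')\tmq')$ with exactly one component $\gamma$-reduced. The key extra point is that a non-left (resp.\ non-weak) step into the \emph{function} position cannot destroy value-hood: by the abstraction clause of \reffact{istep_CbV} such a step goes under the binder of $\val$, so $\val'$ is still an abstraction, hence a value; and the ``moreover'' clauses of \reffact{istep_CbV} guarantee that a non-left/non-weak step recursing into the \emph{argument} $(\la{x}\tmp)\tmq$ stays non-left/non-weak, which is what lets the analysis proceed. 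Firing $\tm\topreds{3}(\la{x}(\val'\tmp'))\tmq'\red_{\gamma}\tmthree$ closes the diagram.

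The hard part is the case where the $\gamma$-redex lies in the subterm that $\sigma_i$ moves \emph{under} the binder $\la{x}$ — the second argument $\tmr$ for $\sigma_1$, the function value $\val$ for $\sigma_3$. Here the $\gamma$-step could erase $x$, so the side condition ($x\notin\fv{\tmr}$, resp.\ $x\notin\fv{\val}$) may hold for $\tmtwo$ but not literally for $\tm$; and relocating the redex under $\la{x}$ must avoid variable capture. I would dispatch both issues by $\alpha$-renaming $\la{x}$ to a fresh variable before firing $\sigma_i$ on $\tm$, which is legitimate as we work modulo $\alpha$-equivalence. Finally, I would observe that the argument never uses anything about $\gamma$ beyond its being a contextual closure subject to \reffact{istep_CbV}; this is exactly why the statement holds not only for $\gamma\in\{\betav,\sigma\}$ but for the contextual closure of any rule $\rredc$.
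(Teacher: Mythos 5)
Your proof is correct and follows essentially the same route as the paper's: shape preservation (\reffact{istep_CbV}, iterated, using that non-left/non-weak steps have non-empty contexts and that inner steps in argument position under a value stay non-left/non-weak) shows that $\tm$ is again a $\sigma_i$-redex with exactly one immediate component $\gamma$-reduced, after which one fires the root $\sigma_i$-step first and replays the $\gamma$-step as a single $\red_\gamma$-step, which suffices because the $\sigma$-rules are linear. If anything, you are more careful than the paper on the side condition in the case where the reduced component is the one moved under the binder: the paper dispatches it with the too-quick justification ``$x \notin \fv{\tmfour} \subseteq \fv{\tmfour'}$'' (which does not literally yield $x \notin \fv{\tmfour'}$, since reduction can erase free variables), whereas your explicit $\alpha$-renaming of the binder, legitimate under the paper's stated convention of working modulo $\alpha$-equivalence, is the honest fix.
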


\begin{proof} We prove (1).
	 \reffact{istep_CbV} (iterated) gives that if  $\tm \nlredc \tmtwo$   and $\tmtwo$ is a $\sigma_1$-redex  (resp. a $\sigma_3$-redex), so is $\tm$. We  examine the two 
	 cases for $\tmtwo \topreds{i} \tmthree$.
	\begin{description}
		\item[$\sigma_1$:] By hypothesis, $\tmtwo = (\la{x}\tmfive)\tmsix \tmfour \topreds{1} (\la{x}\tmfive\tmfour)\tmsix = 
\tmthree$ with $x \notin \fv{\tmfour}$.
	Since $\tm \nlredc \tmtwo$, by \reffact{istep_CbV}, we have  $\tm = (\la{x}\tmfive')\tmsix' \tmfour'$ and moreover:
		\begin{itemize}
			\item either $\tmfive' \redc \tmfive$ and $\tmfour' = \tmfour$ and $\tmsix' = \tmsix$,
			\item or $\tmfour' \redc \tmfour$ and $\tmfive' = \tmfive$ and $\tmsix' = \tmsix$,
			\item or $\tmsix' \nlredc \tmsix$ and $\tmfive' = \tmfive$ and $\tmfour' = \tmfour$.
		\end{itemize}
		In any case, $\tm = (\la{x}\tmfive')\tmsix' \tmfour' \lreds{1} (\la{x}\tmfive'\tmfour')\tmsix'  
		\nlredc (\la{x}\tmfive\tmfour)\tmsix = \tmthree$, since $x \notin \fv{\tmfour} \subseteq \fv{\tmfour'}$.

		\item [$\sigma_3$:] By hypothesis, $\tmtwo = \val((\la{x}\tmfour)\tmsix)  \topreds{3} (\la{x}\val\tmfour)\tmsix = 
\tmthree$ with $x \notin \fv{\val}$.
		Since $\tm \nlredc \tmtwo$, by \reffact{istep_CbV}, we have  $\tm = \val'((\la{x}\tmfour')\tmsix')$, and moreover:
		\begin{itemize}
			\item either $\val' \nlredc \val$ and $\tmfour' = \tmfour$ and $\tmsix' = \tmsix$,
			\item or $\tmfour' \redc \tmfour$ and $\val' = \val$ and $\tmsix' = \tmsix$,
			\item or $\tmsix' \nlredc \tmsix$ and $\val' = \val$ and $\tmfour' = \tmfour$.
		\end{itemize}
		In any case, $\tm = \val'((\la{x}\tmfour')\tmsix') \lreds{1} (\la{x}\val'\tmfour')\tmsix'  
		\nlredc (\la{x}\val\tmfour)\tmsix = \tmthree$, as $x \notin \fv{\val} \subseteq \fv{\val'}$.
	\end{description}
Like before, the proof of (2) is similar, and simpler. 	
		\qedhere
\end{proof}

	\begin{lemma}[Substitutivity of $\reds{}$]\label{l:sigma-substitutive}
		If $\tm \topreds{i} \tm'$ then $\tm\sub{x}{\val} \topreds{i} \tm'\sub{x}{\val}$, for $i \in \{1,3\}$.
	\end{lemma}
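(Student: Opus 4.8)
The plan is to prove the statement by a direct case analysis on $i \in \{1,3\}$, treating $\topreds{i}$ as the root relation relating a redex to its contractum under empty context. Since we work modulo $\alpha$-equivalence, I would first rename the bound variable of the redex so that it is distinct from $x$ and does not occur free in $\val$; call it $z$. This renaming is what lets the substitution $\sub{x}{\val}$ commute with the abstraction $\la{z}{\cdot}$ and distribute over application without capture, which is the only delicate point in pushing the substitution through the term. After that, each case is a routine computation comparing the two sides.

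For $\sigma_1$, the redex has the form $(\la{z}\tmfive)\tmsix\tmfour$ with $z \notin \fv{\tmfour}$, contracting to $(\la{z}{\tmfive\tmfour})\tmsix$. Applying $\sub{x}{\val}$ and distributing it over the application and the abstraction gives $(\la{z}{\tmfive\sub{x}{\val}})(\tmsix\sub{x}{\val})(\tmfour\sub{x}{\val})$. I would then check that this is again a $\sigma_1$-redex: the side condition becomes $z \notin \fv{\tmfour\sub{x}{\val}}$, which holds because $z \notin \fv{\tmfour}$, $z \neq x$, and $z \notin \fv{\val}$ (so the substitution can only introduce free variables of $\val$, none of which is $z$). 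Firing it yields $(\la{z}{(\tmfive\sub{x}{\val})(\tmfour\sub{x}{\val})})(\tmsix\sub{x}{\val})$, which is precisely $((\la{z}{\tmfive\tmfour})\tmsix)\sub{x}{\val}$, i.e.\ the substituted contractum.

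For $\sigma_3$, the redex is $\valtwo((\la{z}\tmfour)\tmsix)$ with $z \notin \fv{\valtwo}$. The same distribution of $\sub{x}{\val}$ applies, and here the extra ingredient is \reffact{value-sub}: $\valtwo\sub{x}{\val}$ is again a value, so the substituted term $(\valtwo\sub{x}{\val})((\la{z}{\tmfour\sub{x}{\val}})(\tmsix\sub{x}{\val}))$ is genuinely a $\sigma_3$-redex, whose left component is required to be a value. The freshness side condition $z \notin \fv{\valtwo\sub{x}{\val}}$ holds for exactly the same reason as before, and contracting produces the substituted contractum.

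The argument is entirely mechanical; the only points requiring care — and the closest thing to an obstacle — are the bookkeeping around the bound-variable clash (dealt with by $\alpha$-renaming to a fresh $z$) and the preservation of the freshness side conditions under substitution, together with the appeal to \reffact{value-sub} in the $\sigma_3$ case to keep the left component a value. Once $z$ is chosen fresh, no genuinely difficult step remains.
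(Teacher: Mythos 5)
Your proof is correct and follows essentially the same route as the paper's: a direct case analysis on $\sigma_1$ and $\sigma_3$, with the bound variable taken fresh for $x$ and $\fv{\val}$ (the paper phrases this as a without-loss-of-generality assumption rather than an explicit $\alpha$-renaming), the freshness side conditions checked to survive substitution, and \reffact{value-sub} invoked in the $\sigma_3$ case to keep the left component a value. No gaps; nothing further is needed.
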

	
	\begin{proof}
		\begin{description}
			\item[$\sigma_1$:] $\tm = (\la{y}\tmfour)\tmthree\tmtwo \mapsto_{\sigma_1} (\la{x}\tmfour\tmtwo)\tmthree = \tm'$ 
			with $y \notin \fv{\tmtwo}$ and we can suppose without loss of generality that $y \notin \fv{\val} \cup \{x\}$.
			Therefore, $\tm\sub{x}{\val} = (\la{y}\tmfour\sub{x}{\val})\tmthree\sub{x}{\val}\tmtwo\sub{x}{\val} 
			\mapsto_{\sigma_1} (\la{y}\tmfour\sub{x}{\val}\tmtwo\sub{x}{\val})\tmthree\sub{x}{\val} = \tm'\sub{x}{\val}$ since $y 
			\notin (\fv{\tmtwo} \smallsetminus \{x\}) \cup \fv{\val} = \fv{\tmtwo\sub{x}{\val}}$.
			\item[$\sigma_3$:] $\tm = \valtwo((\la{y}\tmtwo)\tmthree) \mapsto_{\sigma_3} (\la{y}\valtwo\tmtwo)\tmthree = \tm'$ 
			with $y \notin \fv{\valtwo}$ and we can suppose without loss of generality that $y \notin \fv{\val} \cup \{x\}$.
			Therefore, $\tm\sub{x}{\val} = \valtwo((\la{y}\tmtwo\sub{x}{\val})\tmthree\sub{x}{\val}) \mapsto_{\sigma_3} 
			(\la{y}\valtwo\sub{x}{\val}\tmtwo\sub{x}{\val})\tmthree\sub{x}{\val} = \tm'\sub{x}{\val}$ as $\valtwo\sub{x}{\val}$ is a 
			value (\reffact{value-sub}) and $y \notin (\fv{\valtwo} \smallsetminus \{x\}) \cup \fv{\val} = 
			\fv{\valtwo\sub{x}{\val}}$.
			\qedhere
		\end{description}
	\end{proof}


\subsection{Non-Terminating Relations (\refsec{nonterminating})}
\label{app:parY}
		
		\begin{proposition*}[\ref{prop:bY}. Testing head factorization  for $\beta Y$] $ \redb\cup \red_{Y}$ 
satisfies head factorization:
			\begin{enumerate}
				\item \emph{Head factorization of $\red_Y$}: $\F{\hredx{Y}}{\nhredx{Y}}$.
				
				\item \emph{Root linear swap}: 
				$ \nhredb \cdot \mapsto_{Y} \ \subseteq \  \hredx{Y}\cdot \redb^* $. 		
				
				\item \emph{Substitutivity}: $\mapsto_{Y}$ is substitutive.
			\end{enumerate}	
		\end{proposition*}
		
		\proof
			We  verify  the hypotheses of \refprop{test_h}:
			\begin{enumerate}
				\item  To verify  that the reduction $\red_Y$  satisfies head factorization is routine. 
				\item Assume $\tm\nhredb  Y\tmp \mapsto_{Y} \tmp (Y\tmp) $. By \reffact{shape}  (as spelled-out in  
\reffact{istep}), 
				if $\tm\nhredb  Y\tmp$ then $\tm=Y\tmq$ and  $\tmq\red_{\beta}\tmp$. Hence 
				$ \tm=Y\tmq\hredx{Y} \tmq (Y\tmq)\redb^* \tmp (Y \tmp)$.
				
				\item Simply $ (Y\tmp) \subs{x}{\tmq} = Y(\tmp \subs{x}{\tmq})\mapsto_{Y}   (\tmp\subs{x}{\tmq})(Y 
(\tmp\subs{x}{\tmq}) ) = (\tmp (Y\tmp)) \subs{x}{\tmq}$.\qedhere
			\end{enumerate}

\begin{proposition*}[\ref{prop:bZ}.  Testing weak factorization for $\betav {\Z}$]  $ \redb\cup \red_{{\Z}}$ 
satisfies weak factorization:
	\begin{enumerate}
		\item \emph{Weak factorization of $\red_{\Z}$}:  {$\F{\wredx{Z}}{\nwredx{Z}}$}.
		
		\item \emph{Root linear swap}: 
		$ \nwredx{\betav} \cdot  \mapsto_{\Z}   \ \subseteq\   \wredx{\Z}\cdot \redbv^*  $.
		
		\item \emph{Substitutivity}: $\mapsto_{{\Z}}$ is substitutive.		
	\end{enumerate}	
\end{proposition*}
\proof
	We  prove the hypotheses of \refprop{test_h}:
	\begin{enumerate}
		\item It is easy to verify that  $\nwredx \Z \cdot \wredx \Z ~\subseteq~ \wredx \Z \cdot \nwredx \Z^*$. Then 
		apply \reflem{easy_lp}{.1}.
		\item  Assume $\tm \nwredx{\betav} \Z\val \mapsto_{\Z} \lam x. \val (\Z\val)x   $. By \reffact{shape} (as 
spelled-out in  \reffact{istep_CbV}), if $\tm\nwredx{\betav} \Z\val $ then
		$\tm=\Z\valtwo$ and  $\valtwo\red_{\betav}\val$. 
		So, 
		$ \tm=\Z\valtwo\wredx{\Z} \lam x. \valtwo (\Z\valtwo) x\redbv^* \lam x.  \val (\Z \val)x$.
		\item 
		{Simply $ (\Z\val) \subs{x}{\tmq} = \Z(\val \subs{x}{\tmq})\mapsto_{\Z}   \la\vartwo\val\subs{x}{\tmq}(\Z 
(\val\subs{x}\tmq)\vartwo = (\la\vartwo\val(\Z \val)\vartwo) \subs{x}{\tmq}$.}
		
		\qedhere
		
\end{enumerate}


\section{Appendix: Factorizing  Factorization in  Probabilistic $\lam$-calculus}\label{app:proba}
Faggian and Ronchi della Rocca 
\cite{FaggianRonchi}  define two calculi---$\PLambda^\cbn$ and $\PLambda^\cbv$---which model respectively CbV and 
CbN probabilistic higher-order computation,
and are conservative extensions of the CbN and CbV  $\lam$-calculi. Here we focus on CbV, which is the most relevant 
paradigm for calculi with effects, but the same approach applies to CbN.

 We first recall the syntax of $\PLambda^\cbv$ (we refer to \cite{FaggianRonchi} for background and details), and then 
give a new proof of weak factorization, using our technique  and obtaining a neat, compact proof of 
factorization, which only requires a few lines.

\paragraph{Terms.}  $\PLambda^\cbv$ is a rewrite system where the objects to be rewritten are not terms, but  monadic 
structures on terms, namely multi-distributions \cite{Avanzini}. Intuitively, a multi-distribution 
    represents a probability distribution on the possible reductions from  a term.
 Terms and contexts are the same as for the non-deterministic $\lam$-calculus, but  here we write the $\oplus$ infix, to 
facilitate reference to \cite{FaggianRonchi}.
\emph{Terms and values}  are generated by the grammars
\begin{align*}
M&::=  x \mid \lambda x.M \mid MM \mid M \oplus M\qquad&\mbox{(\textbf{Terms} $ \PLambda $)}\\
V&:=  x \mid \lambda x.M&\mbox{(\textbf{Values} $ \Val $)}
\end{align*}
where $x$ ranges over a countable set of \emph{variables}.
\emph{Contexts} and \emph{weak  contexts} are given by:
\begin{align*}
	\cc & ::=  \hole{~}    \mid \cc M \mid M\cc \mid \lambda x.\cc \mid \cc \oplus M \mid
	M \oplus \cc && \mbox{(\textbf{Contexts})}\\
	\ss&::=  \hole{~}  \mid \ss M \mid M \ss  && \mbox{(\textbf{Weak Contexts})}
\end{align*}
where $\hole{~}$ denotes the \emph{hole} of the context.  

The intended behaviour of 
$M\oplus N$  is to reduce to either $M$ or $N$,
\emph{with equal probability} $\two$.  This is formalized by means of \emph{multi-distributions}.

\paragraph{Multi-distributions.} A \emph{multi-distribution} $\m=\mdist{p_iM_i\st i\in I}$  is a multiset of pairs of the 
form
$pM$, with $p\in]0,1]$,  $M\in \PLambda$, and  $\sum p_i\leq 1$.
We denote by $\MDST \PLambda$ the set of
	all multi-distributions.  
	The sum of multi-distributions
	is denoted by $+$.  The product
	$q\cdot \m$ of a scalar $q$ and a multi-distribution $\m$ is
	defined pointwise $q\mdist{p_iM_i}_{\iI} := \mdist{(qp_i)M_i}_{\iI} $.

\paragraph{The calculus \pmb{$(\MPLambda, \Red)$}.}

The calculus $\PLambda^\cbv$ is the rewrite system  $(\MPLambda, \Red)$ where    $\MPLambda$ is the set of 
multi-distributions on $\PLambda$ and  the relation $\Red\subseteq \MDST{\PLambda}\times \MDST{\PLambda}$ is defined in  
\reffig{reductions} and \reffig{lifting}.  First, we define   one-step reductions from terms to 
multi-distributions---so for 
example, $M\oplus N \red \mdist{\two M, \two N}$. Then, we lift the definition of reduction
to a binary relation on $\MDST{\PLambda}$,  in the natural way---for instance $\mdist{\two (\lam x.x)z, \two (M\oplus 
N)} \Red $
$ \mdist{\two z, \four  M, \four N}$.
Precisely:
\begin{enumerate}
	\item  The reductions 	
	$\redbv,\redo\subseteq \PLambda \times \MPLambda$ are defined  in 	Fig.~\ref{fig:reductions}. 
	 Observe that  the $\oplus$ rule---probabilistic choice---is  closed only under weak 
	contexts (no reduction  in the body of a function nor in the scope  of an operator $\oplus$). Instead, the $\betav$ rule 
is closed under
	general contexts.   Its restriction to closure  under weak context is denoted $\sredb$. 	
	The relation  $\red$  is the union $\redb \cup \redo$, while weak\footnote{In \cite{FaggianRonchi},  a \emph{weak} 
		reduction (resp. weak context) is called \emph{surface}, and hence noted  $\xredx{\mathsf{s}}{}$.} reduction  $\sred$ is  
	 the union of the weak  reductions $\sredbv \cup \redo$. A $\red$-step which is not weak is noted $\nsred$.

	\item 
	The lifting of a relation
	$\red_r \subseteq  \Lambda_\oplus \times \MDST{\Lambda_\oplus}$ to a reduction on multi-distribution is defined  in Fig.~\ref{fig:lifting}. In particular, 
	$\red, \redbv,\redo, \sred,\nsred$ lift to $\Red,\Redbv,\Redo, \sRed, \iRed$.  
		
\end{enumerate}
The  restriction of  $\redo$ to weak contexts is necessary to  have   confluence, see \cite{FaggianRonchi} for a 
discussion. The fact that  
 reduction $\redbv$ is unrestricted guarantees that  the new calculus is a \emph{conservative 
extension} of CbV $\lam$-calculus.

 	\begin{figure}\centering
	\fbox{
		\begin{minipage}{0.97\textwidth}\centering
		 {\footnotesize 	$ 	{\cc \hole{(\lambda x.M)V }\redbv \mdist{\cc\hole{ M \subs x V} }}{	}
			\qquad
			{\ss\hole{M\oplus N} \red_{\oplus} \mdist{\frac{1}{2}\ss(M), \frac{1}{2}\ss(N)}}{} $		\qquad	
\begin{tabular}{c}
	$ \red\,:=\, \redbv \cup \redo $\\[4pt]
		$ \sred\,:=\, \sredx{\betav} \cup \redo $
\end{tabular}
}
	\end{minipage}}	\caption{Reduction Steps }\label{fig:reductions}
		\fbox{
		\begin{minipage}{0.97\textwidth}\centering
			
	{\footnotesize 	$	\infer{\mdist{M}\Red_r \mdist{M}}{} \quad \quad \quad\quad \quad
			\infer{\mdist{M}\Red_r \m}{M\red_r\m}  \quad  \quad  \quad\quad\quad
			\infer{ \mdist{p_{i}M_{i}\mid i\in I} \Red_r ~+_{\iI}~ {p_i\cdot \m_i}} 
			{(\mdist{M_i} \Red_r \m_i)_{\iI} }$}
			
	\end{minipage}}	  	
	\caption{Lifting $\red_r$ to $\Red_r$}\label{fig:lifting}
\vspace*{-2pt}	
\end{figure}

\paragraph{Factorization, Modularly.}
Faggian and Ronchi della Rocca prove---by defining  suitable notions of parallel reduction and internal parallel 
reduction with respect to  $\Redbv \cup \Redo$---that $\PLambda^\cbv$  satisfies $\F{\sRed}{\iRed}$, that is 
		$\m \Red^*\n$ implies $\m \sRed^*\cdot \iRed^* \n$.

This result---there called \emph{finitary surface  standardization}---is central  in \cite{FaggianRonchi} because it is 
  the base of the asymptotic constructions which are the core of that paper. 
\medskip

We now give a  novel, strikingly short proof of the same result, by using \refthm{HR_modularity}.
It turns out that we only  need to verify the following swap, which is immediate to check.
\begin{lemma}\label{l:post_o} 	
	 $M\nsredx{\betav}  \cdot  \redo  \n$ implies   $M \redo \cdot \Redbv \n$.
\end{lemma}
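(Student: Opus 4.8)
The plan is to unfold both steps and exploit the fact that the $\betav$-step is \emph{non-weak} whereas the $\oplus$-step is \emph{weak}, which prevents the $\oplus$-redex from being duplicated (the pathology of \refex{SPfailure}) and thus explains why a single $\Redbv$ step suffices on the right. Concretely, write the hypothesis as $M = \ctx\hole{(\lam x.P)\val} \nsredx{\betav} \ctx\hole{P\sub{x}{\val}} =: M'$ with $\ctx$ a non-weak context, and $M' = \ss\hole{A \oplus B} \redo \mdist{\two\,\ss\hole{A}, \two\,\ss\hole{B}} = \n$ with $\ss$ a weak context.

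First I would record the key structural fact, a weak/non-weak analogue of \reffact{istep_CbV}: \emph{non-weak positions are downward closed}, since a weak context descends only through applications, so once the path enters the body of an abstraction or an argument of $\oplus$ it stays non-weak. Hence the contractum $P\sub{x}{\val}$, sitting at the non-weak hole of $\ctx$, together with all of its subterms occupies non-weak positions. Since the fired $\oplus$-redex $A \oplus B$ sits at a \emph{weak} position of $M'$, it cannot lie inside $P\sub{x}{\val}$; in particular it is neither created nor duplicated by the $\betav$-step, so the very same $\oplus$-redex occurrence is already present in $M$, and its position is either disjoint from the $\betav$-redex or strictly above it.

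This yields two cases. If the two redexes are disjoint, then $M = \ss'\hole{A \oplus B}$ for a weak context $\ss'$ whose context part still contains the $\betav$-redex, so firing $\oplus$ first gives $M \redo \mdist{\two\,\ss'\hole{A}, \two\,\ss'\hole{B}} =: \m$, and contracting that $\betav$-redex in each component lifts to $\m \Redbv \mdist{\two\,\ss\hole{A}, \two\,\ss\hole{B}} = \n$. If instead the $\betav$-redex lies strictly inside the $\oplus$-redex, then it sits in one branch, say $M = \ss\hole{A_0 \oplus B}$ where $A_0$ reduces to $A$ by that same $\betav$-step; firing $\oplus$ first gives $M \redo \mdist{\two\,\ss\hole{A_0}, \two\,\ss\hole{B}} =: \m$, and a single lifted $\betav$-step (a real step contracting the redex in the first component, reflexivity on the second, via the rules of \reffig{lifting}) gives $\m \Redbv \n$. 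In both cases the right-hand reduct uses exactly one $\Redbv$ step, matching the statement.

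The main obstacle is purely the bookkeeping of this positional argument: making precise that the weak $\oplus$-redex of $M'$ survives back to $M$ and relating the weak contexts $\ss$ and $\ss'$. Formally this is the same shape-preservation induction used for \reflem{linear-swap-sigma} (on the weak context $\ss$, using \reffact{istep_CbV}), the only genuinely new ingredient being the passage from terms to multi-distributions, which is handled transparently by the lifting rules of \reffig{lifting}. No duplication of $\oplus$ ever occurs, which is precisely why a single $\Redbv$ suffices and why this swap holds \emph{directly}, unlike the problematic weak $\betav$ versus $\oplus$ interaction discussed in \refex{SPfailure}.
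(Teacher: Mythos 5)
Your proof is correct and takes essentially the approach the paper intends: the paper states \reflem{post_o} without a written proof (calling it ``immediate to check''), and the expected justification is exactly your shape-preservation argument---a non-weak $\betav$-step cannot create, duplicate, or overlap the weak-position $\oplus$-redex, so the two redexes are disjoint or nested, mirroring \reflem{oplus_basic} and \reffact{istep_CbV}. Your handling of the multi-distribution side is also right: the lifting rules of \reffig{lifting} let a single $\Redbv$ step contract the residual $\betav$-redex in each component (or act reflexively), which is why one lifted step suffices in both cases.
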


\begin{thm}[Factorization of $\Red$]  Let $\sRed \,:=\,( \sRedbv \cup \Redo)$ and 
	$\iRed \,:=\, (\iRedbv)$. Then $(\MPLambda, \{\Redbv, \Redo\})$ satisfies $\surf$-factorization 
$\F{\sRed}{\iRed}$.
\end{thm}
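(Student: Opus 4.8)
The plan is to instantiate the modular factorization theorem \refthm{HR_modularity} to the two reductions $\Redbv$ and $\Redo$ on $\MPLambda$, taking $\reda := \Redbv$ and $\redc := \Redo$. Their surface/internal decompositions are $\Redbv = \sRedbv \cup \iRedbv$ and $\Redo = \Redo \cup \emptyset$: since the $\oplus$-rule is closed only under weak contexts, \emph{every} $\oplus$-step is surface, so the choice module has empty internal part. Under this reading $\ered = \sRedbv \cup \Redo = \sRed$ and $\ired = \iRedbv \cup \emptyset = \iRed$, so the conclusion $\F{\ered}{\ired}$ delivered by \refthm{HR_modularity} is exactly the desired $\F{\sRed}{\iRed}$. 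It then remains to discharge the two module-factorization hypotheses and the two linear swaps.

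First, the factorization of the modules. For the choice module there is nothing to prove: as $\Redo$ has empty internal part, $\F{\Redo}{\emptyset}$ holds trivially, since $(\Redo\cup\emptyset)^* = \Redo^* = \Redo^*\cdot\emptyset^*$. For the $\betav$ module I would take $\F{\sRedbv}{\iRedbv}$ as a \emph{black box}: it is the lifting to multi-distributions of weak factorization of CbV $\betav$, which is folklore (see \cite{AccattoliFG19}), and it is precisely the step whose re-proof the technique lets us avoid.

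Second, the two linear swaps. The swap $\LS{\emptyset}{\sRedbv}$ is vacuous, its left-hand side $\emptyset\cdot\sRedbv$ being empty. The only substantial one is $\LS{\iRedbv}{\Redo}$, that is $\iRedbv\cdot\Redo \subseteq \Redo\cdot\Redbv^*$, and this is where \reflem{post_o} enters. The main obstacle is that \reflem{post_o} is stated for a single term $M$ (relating $M\nsredbv\cdot\redo$ to $\redo\cdot\Redbv$), whereas \refthm{HR_modularity} needs the swap on \emph{multi-distributions}. I would bridge this gap by a componentwise argument following the lifting rules of Fig.~\ref{fig:lifting}: a step $\m\iRedbv\m'\Redo\m''$ decomposes, term by term, into at most one internal $\betav$-step followed by at most one $\oplus$-step on each summand of $\m$; on the summands where both steps are non-trivial I apply \reflem{post_o} to commute them (firing the weak $\oplus$-redex first, which is untouched by the deep $\betav$-step, and completing with a single $\Redbv$-step), while the summands carrying only a $\betav$-step or only an $\oplus$-step are handled by reflexivity on the other relation, using $\iRedbv\subseteq\Redbv$. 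Recombining these choices through the sum and scalar product of multi-distributions yields a single $\Redo$-step out of $\m$ followed by a single $\Redbv$-step reaching $\m''$, which is in fact stronger than the required $\Redbv^*$. With the two module factorizations and the two linear swaps established, \refthm{HR_modularity} immediately gives $\F{\sRed}{\iRed}$.
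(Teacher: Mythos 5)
Your proposal is correct and follows essentially the same route as the paper: instantiating \refthm{HR_modularity} with $\Redbv$ and $\Redo$, observing that the conditions involving the internal part of $\Redo$ hold vacuously since every $\oplus$-step is surface, using weak factorization of $\Redbv$ (lifted from the CbV calculus) as a black box, and reducing the only substantive linear swap $\iRedbv\cdot\Redo \subseteq \Redo\cdot\Redbv^*$ to \reflem{post_o}. The only difference is presentational: where the paper dispatches the lifting of \reflem{post_o} to multi-distributions with the phrase ``by the definition of lifting,'' you spell out the componentwise case analysis (both steps non-trivial, only one, or neither, recombined via sum and scalar product), correctly noting that the reflexivity rule of the lifting absorbs the trivial components and that one obtains a single $\Redo$-step followed by a single $\Redbv$-step, matching the paper's statement.
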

\begin{proof}
We verify that the  conditions of \refthm{HR_modularity} hold.  Note that $\oplus$ has no internal steps, 
therefore, it suffices to verify only  two conditions:
\begin{enumerate}
	\item weak factorization of $\Redbv$: $\F{\sRedbv}{\iRedbv}$.
	\item $\iRedbv$ linearly swaps with $\Redo$: $\iRedbv\cdot \Redo ~\subseteq~ \Redo \cdot \Redbv^*$.
\end{enumerate}
The other two conditions of \refthm{HR_modularity} hold vacuously, because $\iRedx{\oplus}\eq\emptyset$   (and 
$\sRedx{\oplus}\eq \Redo$).

	\begin{enumerate}
		\item $\F{\sRedbv}{\iRedbv}$ follows from weak factorization of the CbV $\lam$-calculus $\F{\wredbv}{\nwredbv}$ 
		(see  \refsec{CbV}) because clearly  ($\mdist{1M}\Redbv  \mdist{1N}$ if and only if $M \redbv N$), 
	($\mdist{1M}\sRedbv  \mdist{1N}$ if and only if $M \wredbv N$), and similarly   	($\mdist{1M}\iRedbv  \mdist{1N}$ if 
and only if $M \nwredbv N$).
			\item  \reflem{post_o}  implies 
			$\m\iRedbv \cdot  \Redo  \n
		~\subseteq~	\m \Redo \cdot \Redbv \n$, by  the definition of lifting.\qedhere
	\end{enumerate}
\end{proof}

\end{document}